\documentclass[12pt, reqno]{amsart}
\usepackage{amsthm, amscd, amsfonts, amssymb, graphicx, xcolor,mathtools}
\usepackage{amsmath}
\usepackage[a4paper, left=0.85in, right=0.85in, top=1in, bottom=1in]{geometry}
\usepackage[round,authoryear]{natbib}

\usepackage{xcolor}      
\usepackage[ bookmarksnumbered, plainpages, backref = page,
            colorlinks=true,   
            citecolor=blue,    
            linkcolor=red,    
            urlcolor=blue]{hyperref} 

\usepackage{bookmark}

\usepackage{subcaption}
\usepackage{setspace}

\usepackage{adjustbox}     
\usepackage{caption}       
\usepackage{floatrow}         
\usepackage{tabularray}    
\usepackage{booktabs}  
\usepackage{ragged2e}
\usepackage{multirow}
\usepackage[shortlabels]{enumitem}
\usepackage{pdflscape}
\usepackage{rotating}
\usepackage{placeins}

\usepackage{csquotes}
\newtheorem{theorem}{Theorem}[section]
\newtheorem{lemma}[theorem]{Lemma}
\newtheorem{proposition}{Proposition}[section]
\newtheorem{corollary}[theorem]{Corollary}
\theoremstyle{definition}

\newtheorem{assumption}{Assumption}[section]

\newtheorem{remark}{Remark}[section]
\numberwithin{equation}{section}

\keywords{Identification, Proxy-SVARs, non-Gaussianity, Weak instruments}

\subjclass[JEL]{C32, C38, C52, Q43, E32}
\begin{document}

\setcounter{page}{1}
\title[\tiny Proxy-SVARs with non-Gaussian shocks]{Identification and Inference in Proxy-SVARs with non-Gaussian Shocks}

\author[]{Paritosh Shankarrao Junare}
\thanks{Correspondence: Department of Economics, University of Bologna, Italy. Email: paritosh.junare@unibo.it. I am immensely grateful to Luca Fanelli, Giuseppe Cavaliere and Giovanni Angelini for their guidance and support throughout this work. I am also thankful to Markku Lanne, Jani Luoto, Mika Meitz, Savi Virolainen and Filippo Ferroni for their insightful comments. Discussions with Raffaela Giacomini and Lutz Kilian at Nordic Econometric Meeting 2026; Silvia Sarpietro and Kirill Ponomarev at CEPR Job Market Bootcamp 2026, Tuscany; Giovanni Caggiano as a discussant at SIdE WEEE 2026, and other participants at these conferences are gratefully acknowledged.}


\dedicatory{\today}

\begin{abstract}
Two frequent approaches for identifying structural VARs are external instruments, which carry economic content but are often weak, and non-Gaussianity of the shocks which provides statistical identification but carries no economic meaning. We combine the two strategies in a single generalized method of moments framework that stacks proxy exclusion restrictions with higher-order moment conditions of the structural shocks. This hybrid approach point-identifies the target shocks while also identifying the non-target shocks up to sign and ordering. Under suitable rank conditions, the higher-order moments anchor the identification uniformly over the instrument strength. Consequently, under local-to-zero proxy relevance, estimators of dynamic causal effects remain consistent, and standard asymptotic inference remains valid. Moreover, the Anderson-Rubin confidence sets are substantially narrower than their instrument-only counterparts. The hybrid estimator is also more efficient than either source of identification used in isolation: at any fixed proxy relevance, even a weak instrument increases efficiency of the estimator through its covariance with the non-Gaussian moment block. Under local deviations from proxy exogeneity, we provide asymptotic bias bounds and show that stronger non-Gaussianity of the shocks compresses the bias. Finally, the over-identified structure yields two mutually orthogonal specification tests, for proxy exogeneity and validity of higher-order moment conditions. We derive their limiting distributions and provide a bootstrap procedure for finite-sample critical values. Monte Carlo simulations and two applications with identification of oil news-shock and a Euro-area MP shock demonstrate the potential of our framework.
\end{abstract}
\maketitle

\section{Introduction}
\label{sec:introduction}
The identification and estimation of dynamic causal effects is crucial to understand the transmission mechanisms of structural shocks in empirical macroeconomics. Identification through structural vector autoregressions (SVARs) generally requires restrictions on the structural impact matrix, denoted by $B$. It captures the instantaneous effects of the structural shocks on variables, and hence it is essential for the estimation of impulse response functions (IRFs). Two identification strategies have become increasingly prevalent. The first strategy employs economic theory-based external instruments or proxies: variables which are \enquote{external} to the VAR estimation, and are correlated with the target shocks (relevance condition) and orthogonal to other structural shocks (exogeneity condition), see \citet{mertens_2013, stock_identification_2018}. The second strategy exploits the non-Gaussianity of the structural shocks to obtain statistical identification: with at most one Gaussian shock, certain higher-order moment conditions recover $B$ up to sign and column permutations without additional restrictions, see \citet{comon_independent_1994, lanne_gmm_2021, keweloh_generalized_2021}.


On one hand, in proxy-SVARs, an external instrument is only as good as its correlation with the shock it is
meant to isolate. Under local-to-zero relevance, see \citet{staiger_instrumental_1997}, the estimates are inconsistent,
and standard asymptotic inference is unreliable as detailed in \citet{montiel_olea_inference_2021}. \citet{ramey_chapter_2016} notes that weak instruments are pervasive in the literature, and \citet{lewis_weak_2026} show that conventional
critical-value thresholds are ill-suited for assessing weak-instrument bias in
IRFs. Weak instruments are therefore empirically typical, and routinely
under-diagnosed; Appendix~\ref{appendix:evidencelit}
(Table~\ref{tab:weak_instruments}) lists prominent applications in which this
sensitivity is visible.
On the other hand, a large body of evidence in macro-financial literature finds that the structural shocks are pronouncedly non-Gaussian. The recovered structural shocks reject Gaussianity in favor of heavy-tailed distributions as shown in \citet{gourieroux_statistical_2017}, \citet{ludvigson_uncertainty_2021}; see
Appendix~\ref{appendix:evidencelit} (Table~\ref{tab:nongaussianity}). However, as noted earlier, this is a source of statistical identification where we still rely on labeling conventions to obtain meaningful economic interpretation of the shocks. Both of these identification strategies therefore face an obstacle that the other can overcome.

We propose a \enquote{hybrid} identification\footnote{We term the identification \emph{hybrid} because the two merged identification channels are of different kinds: one economic (external instruments), and one statistical (the non-Gaussianity of the shocks).} scheme where we combine the two channels within a single generalized method of moments (HGMM) framework that stacks proxy exclusion restrictions with higher-order moment conditions into a single
over-identified system. With $k$ instruments, under standard sign normalizations, we show global point identification of the target $k$ columns of $B$, and the non-target columns are identified up to signed permutations. The identification is {global} in the sense of \citet{komunjer_global_2012}: non-Gaussianity
eliminates the continuous rotational manifold of $B$, reducing identification
to a finite discrete set $\mathcal{B}$; and then the valid proxies select a
unique element from $\mathcal{B}$ delivering global point
identification of the target columns of $B$.

We provide three results that directly motivate the
HGMM estimator as a preferred alternative to proxy-only approach for applied
researchers. 

\textit{First}, the HGMM estimator remains robust to weak proxies. Under
local-to-zero relevance, the proxy-only
estimator is inconsistent, whereas under certain rank conditions, the
HGMM estimator remains consistent with uniformly valid standard asymptotic inference as the non-Gaussian (NG) channel acts as an identification anchor. Since the Jacobian matrix of the NG moment block can remain full rank regardless of proxy strength, the joint information matrix remains bounded away from
zero even as the proxy relevance degenerates. This allows practitioners to rely on standard inference methods even in the presence of weak instruments where proxy-SVAR inference breaks down.

\textit{Second}, the joint estimator is asymptotically at least as efficient as the
proxy-only estimator: the difference in their asymptotic variances is positive semidefinite. The efficiency gain has a closed form. The joint information matrix decomposes additively into the
information obtained from the proxy exclusion conditions and the information the
non-Gaussian moments add {after} partialing out the identifying directions
already spanned by the valid proxies. The gain is strict and uniform whenever the non-Gaussian moments identify parameters along
directions the proxy block does not capture. One might expect the hybrid estimator to collapse onto the non-Gaussian GMM estimator if the proxy is very weak. It does not. The covariance between the two moment blocks continues to sharpen the weighting matrix, so at any fixed proxy relevance its variance is smaller than that of the non-Gaussian estimator alone. This generates an efficiency gain that a researcher who entirely discards the proxies would forego.

\textit{Third}, when the instrument is locally endogenous, the resulting first-order bias is inversely proportional to the minimum eigenvalue of the
NG information matrix. Stronger deviations from Gaussianity therefore compress the
upper bound on the bias directly, delivering a partial mis-specification
robustness that proxy-only approaches do not have.
On all three margins, therefore, augmenting a proxy-SVAR with non-Gaussian identification weakly dominates the standalone alternatives: precision improves under correct specification, consistency and standard inference survive a weak proxy, and the bias is bounded when the proxy is locally invalid.

Another contribution is of separate, valid specification
tests for each identification channel. The standard incremental
$J$-statistic is invalid in this setting: restricting the criterion
to the NG block alone admits $n!\cdot 2^n$ observationally equivalent
solutions, so the rotation parameters lack a unique probability limit
and the $\chi^2$ limit of the difference-in-$J$ statistic is not
restored without an anchor. For both identification channels, we construct
mutually orthogonal test statistics from the \emph{joint} GMM estimate. By orthogonalizing each block's sample moments
against the other, we obtain statistics $J_{\mathrm{prx}}$ and $J_{\mathrm{NG}}$
that are first-order insensitive to estimation error in the
complementary moment block \citep{newey_generalized_1985}. We also correct for the shared parameter estimation uncertainty in their asymptotic variances. This implies that, under the joint null of valid identification, the limiting distributions of the test statistics follow $\chi^2$ with degrees of freedom equal to
the number of moment conditions in the relevant block.

However, we note that asymptotic $\chi^2$ critical values can be unreliable as higher-order cumulant estimation in the
optimal weighting matrix may induce substantial finite-sample distortion.
Hence, we adopt the residual-based
moving block bootstrap of \citet{bruggemann_inference_2016} that jointly
resamples blocks of estimated residuals-proxy pairs,
preserving the cross-dependence structure that proxy identification
exploits. It also allows for conditional heteroskedasticity in the structural shocks. \citet{hall_bootstrap_1996} recentering ensures
that the bootstrap data-generating process satisfies the over-identifying
restrictions exactly at the estimated value, so that bootstrap
critical values adapt to the effective degrees of freedom.


Monte Carlo evidence from a three-variable VAR model validates the theoretical
results. The estimates are consistent and standard inference remains valid irrespective of the strength of the proxies. For identification robust inference, we also show the HGMM estimator provides narrower Anderson-Rubin (AR) confidence sets relative to the inference based only on proxy exclusion conditions. The joint criterion eliminates the failure of proxy identification without sacrificing 
the validity of the identification-robust coverage. For the specification tests, under the null hypothesis of valid identification, all three test statistics
$J_{\mathrm{joint}}$, $J_{\mathrm{prx}}$, and $J_{\mathrm{NG}}$
exhibit close to nominal size. Under proxy endogeneity, the power ordering
$J_{\mathrm{prx}} \gg J_{\mathrm{NG}}$ holds
uniformly across sample sizes: the dominance of $J_{\mathrm{prx}}$
is consistent with proxy endogeneity as the primary source of
mis-specification, with $J_{\mathrm{NG}}$ inheriting relatively lower, albeit false, power
only through contamination via the shared estimate. Also, $J_{\mathrm{prx}}$ provides a valid formal test of
instrument orthogonality even when the instrument is weak. 

We demonstrate the framework by revisiting identification of an oil-news shock and a Euro-area monetary policy shock. For the oil news shock of
\citet{kanzig_macroeconomic_2021}, estimated under the instrument correction of
\citet{kilian_how_2024}, the hybrid criterion delivers substantially narrower
Anderson--Rubin confidence sets than the proxy alone where the median reduction in set length is $71.6\%$, which are sharp enough to
characterize the shock as predominantly oil {demand} news shock. Notably for the real oil price, the confidence set for the on-impact estimate contracts by a factor of nearly ten once the higher-order moment conditions are added. For a Euro-area monetary policy shock identified from the
high-frequency surprises of \citet{altavilla_measuring_2019}, the gains are more modest with median reduction in set length of $26\%$. Together the illustrations show that the non-Gaussianity of the structural shocks sharpen the inference when the higher-order
moment conditions are informative enough to offset the cost of the additional
degrees of freedom.

The remainder of the paper is organized as follows.
Section~\ref{sec:method} presents the hybrid
GMM identification, asymptotic properties of the estimator and derives joint limiting distributions of the structural parameters; Section \ref{sec:weak_misp} proves consistency with standard asymptotic inference under weak proxies, and asymptotic bias bounds under local proxy endogeneity; 
Section~\ref{sec:tests} develops the mutually orthogonal specification
tests and the required bootstrap procedure for finite-sample critical values;
Section~\ref{sec:results} reports Monte Carlo evidence; 
Section~\ref{sec:empirical} presents the empirical illustrations and
Section~\ref{sec:conclusion} concludes. Appendix \ref{appendix:preliminaries} consolidates preliminary notations and conventions used in the paper. A brief review of related
literature follows next.
\subsection{Related Literature} \label{sec:related_lit}
Several recent papers combine multiple SVAR identification strategies. Within a
Bayesian framework, \citet{baumeister_2015} combine sign restrictions with prior
information; \citet{arias_inference_2021} develop an algorithm for proxy-SVARs with
traditional zero and sign restrictions; \citet{giacomini_narrative_2022} provide
robust inference for proxy-SVARs with narrative restrictions;
\citet{carriero_blended_2024} blend heteroskedasticity with sign/narrative
restrictions and instrumental variables; and \citet{angelini_invalid_2024} combine external instruments with changes in volatility regimes.

Most closely related studies are the over-identified GMM proxy-SVAR of
\citet{gregory_us_2024}, which uses second-moment uncorrelatedness of the
structural shocks and requires $n-1$ valid proxies, and its
generalization by \citet{bruns_avoiding_2025}, which relaxes the proxy count
by focusing the criterion on the proxy-targeted columns. Both, however,
draw their over-identifying content solely from \emph{second-moment}
uncorrelatedness, which vanishes when only a single shock is targeted
($\tfrac12 k(k-1)=0$ at $k=1$) and offers no identifying power for the
non-targeted columns. We instead draw identifying information from
higher-order moments under non-Gaussianity: even a single valid proxy point-identifies
its target column while the remaining columns are identified up to signed
permutation, and the non-Gaussian block supplies
over-identification and a test for proxy validity.
We further show robustness to weak proxy relevance where uncorrelatedness-based
approaches lose identifying content entirely.
Relative to the Bayesian non-Gaussian SVAR of \citet{keweloh_estimating_2025},
which combines statistical identification with potentially endogenous proxies,
ours is a frequentist approach with explicit weak instrument robustness,
efficiency results, formal specification tests, and an explicit bias bound under
local exogeneity violations. On testing, \citet{bruns_testing_2024} claim that the narrative arguments used to establish a proxy's validity typically imply a strong exogeneity condition in the form of conditional mean independence and not mere uncorrelatedness, whose measurable transformations generate over-identifying restrictions and hence a testable implication from a single proxy. Their moment conditions are, at the true parameter, the co-skewness and asymmetric co-kurtosis restrictions involving the target shock; our non-Gaussian block contains these as the subset indexed by the target column and adds the co-skewness conditions among the non-target shocks together with the symmetric co-kurtosis conditions.

When instruments are weak, \citet{montiel_olea_inference_2021} document severe
under-coverage of standard proxy-SVAR confidence sets and propose Anderson--Rubin
(AR) test inversion for robust inference, and \citet{angelini_identification_2024}
restore standard asymptotics for weakly instrumented target shocks by requiring
strong auxiliary instruments for the non-target shocks. The HGMM framework remains
robust under local-to-zero proxy relevance through a different mechanism: the
non-Gaussian Jacobian can retain full column rank as the proxy relevance
collapses, keeping the GMM information matrix bounded away
from zero in the limit. We consequently obtain substantially narrower AR confidence sets than the proxy-only criterion, comparable in length to the standard HGMM intervals. An applied researcher therefore needs neither additional instruments nor the heavy computation of AR sets to retain valid inference.

In statistical identification, \citet{comon_independent_1994} establishes that
non-singular linear mixing matrices are identified up to sign and column
permutation when the sources are mutually independent with at most one Gaussian.
\citet{keweloh_generalized_2021} provides a unified co-skewness and co-kurtosis
moment structure, and \citet{lanne_gmm_2021} construct a GMM estimator based on
fourth-order co-kurtosis conditions; \citet{lanne_identifying_2023} show that
$n(n-1)/2$ symmetric co-kurtosis conditions globally identify the impact matrix
when at least $n-1$ shocks have excess kurtosis of a common sign; the condition
we adopt for global identification.
In our moment set the co-skewness conditions deliver local identification and the
common-sign co-kurtosis conditions deliver global identification. We combine them
with proxy exclusion restrictions into a joint criterion and develop
Neyman-orthogonal specification tests. \citet{lewis_identification_2024} provides a
comprehensive review of statistical identification.

\section{Hybrid GMM methodology}\label{sec:method}
This section develops the hybrid GMM (HGMM) framework. We parameterize the impact matrix $B$ as a function of Givens rotation angles, $\phi$, as in \citet{matteson_dynamic_2011}, and estimate the
structural parameters by minimizing an efficiently weighted moment
criterion using residuals from a first-stage OLS estimation.
Combining the two channels delivers global point identification of the target
columns of $B$ (Proposition~\ref{prop:identification_unified}). The joint limiting distribution of the full parameter vector,
including autoregressive coefficients, is derived. We characterize the efficiency gains (Proposition \ref{prop:efficiency}) of the hybrid estimator relative to the
proxy-only estimator. The
asymptotic variance of $\hat{\phi}$ is corrected for the
generated-regressor distortion from first-stage estimation error
as noted in \citet{murphy_estimation_1985}, and the standard errors for all elements of $B$ and the structural
impulse-response functions follow with standard derivations from \citet{lutkepohl_2005}.

\subsection{SVAR setup} \label{subsec:svarparam}
We consider a structural VAR(p) model:
\begin{equation}
    y_t = \nu + A_1 y_{t-1} + A_2 y_{t-2} + \cdots + A_p y_{t-p}
    + B\varepsilon_t, \qquad t = 1, \dots, T,
\end{equation}
where $y_t$ is an $n \times 1$ vector of endogenous variables, $\nu$ is a vector of intercepts and $A_j$ are $n \times n$ autoregressive
matrices. The reduced-form innovations, $u_t$ and structural shocks are linearly related through the
$n \times n$ non-singular impact matrix $B$:
\begin{equation}
    u_t = B \varepsilon_t
\end{equation}
where $u_t$ is a sequence of serially uncorrelated reduced-form
disturbances with $\mathbb{E}[u_t] = 0$ and covariance matrix $\Sigma_u := \mathbb{E}(u_t u^{'}_t)$.
We further assume $y_t$ is stable with standard white noise, i.e.
\begin{assumption}
    \label{as:stable_var}
    $\det A(z) = \det (I_n - A_1z - \cdots - A_pz^p) \neq 0, \quad |z| \leq 1$
\end{assumption}
The structural shocks are assumed to be mean zero
and unit variance, so $\mathbb{E}[\varepsilon_t \varepsilon_t'] = I_n$ and
$\Sigma_u = B B'$.

To span the space of observationally equivalent impact matrices, let $H$
denote the lower-triangular Cholesky factor satisfying $HH' = \Sigma_u$.
We adapt the notation of \citet{matteson_dynamic_2011} and parameterize $B = HQ(\phi)$,
where $\phi \in \Phi \equiv (-\pi, \pi]^{p_\phi}$, $Q(\phi) \in SO(n)$ is an orthogonal rotation matrix governed by the
$p_\phi \times 1$ ($p_\phi = n(n-1)/2$) vector of rotation angles $\phi$. $SO(n)$ is a subgroup within the group of all $n \times n$ orthogonal matrices $O(n)$, with determinant equal to $+1$. There exists a unique mapping of $Q \in SO(n)$ into $\phi \in (-\pi, \pi]^{p_{\phi}}$, continuous if either all elements on the main diagonal of $Q$ are positive or all elements of $Q$ are non-zero. For a detailed exposition, see \citet{matteson_dynamic_2011,matteson_independent_2013}. Hence, the structural
shocks are recovered as
\begin{equation}
    \varepsilon_t(\phi) = Q(\phi)' H^{-1} u_t = B^{-1} u_t.
\end{equation}

\subsection{Proxy identification}\label{subsec:proxyidentification}

Consider $m_t$ as a $k \times 1$ vector of external instruments. Identification through instruments requires that each proxy $m_{l,t}$ satisfies a relevance
condition, i.e. nonzero correlation with its target shock $\varepsilon_{j,t}$ and
an exogeneity condition to all remaining shocks \citep{stock_identification_2018}:
\begin{equation}
    \mathbb{E}[m_{l,t}\, \varepsilon_{j,t}] = \psi_l \neq 0, \qquad
    \mathbb{E}[m_{l,t}\, \varepsilon_{i,t}] = 0 \quad \text{for } i \neq j.
    \label{eq:proxy_conditions}
\end{equation}
Hence, we formally assume:
\begin{assumption} \label{as:validproxy}
    \textit{There exists a $k\times1$
    instrument vector $m_t$, with $k \geq 1$, satisfying
    \[
        \mathbb{E}[m_t \varepsilon_t'] = \bigl(\,\Psi \; ,\; 
        0_{k\times(n-k)}\,\bigr),
    \]
    with targeted shocks indexed $1,\ldots,k$ and
    $\Psi = \operatorname{diag}(\psi_1,\ldots,\psi_k)$ is non-singular}
\end{assumption}
These conditions yield $q_{\mathrm{prx}} = k(n-1)$ strict exclusion restrictions. The corresponding sample moment block is
\begin{equation}
    g_{P,t}(\phi) =
    \begin{pmatrix}
    \cdots\\
    m_{l,t}\cdot\varepsilon_{i,t}(\phi)\\
    \cdots
    \end{pmatrix}, 
    \qquad l \in \{1,\dots,k\},\quad i \neq \mathrm{target}(l).
\end{equation}

\subsection{Identification through non-Gaussianity}\label{subsec:nongaussianconditions}

The statistical identification of SVAR models via non-Gaussianity exploits the information contained in the higher-order moments of the structural shocks.
This approach yields identification of $B$ up to sign and column permutation, as established by \citet{comon_independent_1994}. For a detailed review, we refer the reader to \citet{hyvarinen_independent_2013} and \citet{lewis_identification_2024}. Usually, the key maintained assumption is that the structural shocks, $\varepsilon_t$ are mutually independent, but recent literature has shown that it is not a necessary condition for identification, see \citet{mesters_non-independent_2024,lanne_gmm_2021, guay_identification_2021}. The identifying restrictions require only that the structural shocks be mutually uncorrelated with vanishing higher-order co-moments, together with a non-Gaussianity condition on their marginals; full statistical independence is not needed\footnote{Mutual independence would additionally force all cross-cumulants of every order to vanish, whereas only the third- and fourth-order co-moments enter the moment conditions, so these are imposed directly. This is strictly weaker than independence and does not require recovering independent components from the reduced-form innovations \citep{kilian_structural_2017}; see Remark~\ref{rem:scope_nongaus}.}.  Formally, we assume:
\begin{assumption}\label{as:indep}
\textit{The structural shocks $\varepsilon_{i,t}$, $i=1,\dots,n$, are strictly stationary with mean zero, unit variance, mutual uncorrelatedness $\mathbb{E}[\varepsilon_{i,t}\varepsilon_{j,t}]=0$ for $i\neq j$, finite fourth moments, and diagonal third- and fourth-order cumulant tensors,}
\begin{equation}
\operatorname{cum}(\varepsilon_{i,t},\varepsilon_{j,t},\varepsilon_{k,t})=0 \ \ \text{unless } i=j=k,    
\end{equation}
\begin{equation}
\operatorname{cum}(\varepsilon_{i,t},\varepsilon_{j,t},\varepsilon_{k,t},\varepsilon_{l,t})=0 \ \ \text{unless } i=j=k=l,    
\end{equation}
\textit{with diagonal elements $\kappa_{3,k}=\mathbb{E}[\varepsilon_{k,t}^3]$ and $\kappa_{4,k}=\mathbb{E}[\varepsilon_{k,t}^4]-3$.}
\end{assumption}
\begin{assumption}\label{as:gau}
\textit{The structural shocks satisfy: \\
\textnormal{(a)} at most one has zero
skewness, i.e.\ $\kappa_{3,m}=\mathbb{E}[\varepsilon_{m,t}^3]\neq 0$ for at least
$n-1$ shocks; and \\
\textnormal{(b)} at least $n-1$ have nonzero excess kurtosis of a
common sign, i.e.\ $\kappa_{4,m}=\mathbb{E}[\varepsilon_{m,t}^4]-3$ is of equal sign
across the non-mesokurtic shocks.}
\end{assumption}

Both parts of Assumption \ref{as:gau} can be read as deviations from the Stein
identity: for a standard Gaussian $X$, $\mathbb{E}[Xh(X)]-\mathbb{E}[h'(X)]=0$
for every continuously differentiable $h$, the quantities $\kappa_{3,k}$ and $\kappa_{4,k}$ in Assumption
\ref{as:indep} are exactly the second and third-order deviations. Part (i) of the Assumption \ref{as:gau} is standard, Part (ii) is slightly stronger but mild in practice. Apart from the listed studies in Table \ref{tab:nongaussianity} and as noted by \citet{lanne_identifying_2023}, macro-financial data are mostly leptokurtic with few exceptions. Excess kurtosis is
positive for asset returns, exchange rates and high-frequency policy surprises,
as well as for output and commodity prices at monthly and quarterly
frequency, over essentially any sample containing a
disruption to the global economy, such as Global Financial Crisis 2007-08, COVID-19. Also, with conditionally heteroskedastic shocks, positive excess kurtosis follows mechanically\footnote{For $\varepsilon_{m,t}=\sigma_{m,t}
z_{m,t}$ with $z_{m,t}$ independent of the scale, $\kappa_{4,m}\ge
\kappa_{4,z_m}$ with strict inequality whenever $\sigma_{m,t}$ is
non-degenerate, and $\kappa_{4,m}>0$ when $z_{m,t}$ is Gaussian. The variations in shocks' volatility therefore shift excess kurtosis upward for every shock and cannot by
itself generate a mixture of signs.}. Negative excess kurtosis instead requires
short-tailed or bounded shocks, and part (b) fails only if at least two of these
occur in the same system, since $n-1$ shocks need only share a sign.

We operationalize these assumptions by mapping the relevant set of higher-order moment conditions into vector $g_{NG,t}(\phi)$. For an $n$-variable VAR, we adopt a baseline parameterization with
$q_{\mathrm{NG}} = n(n-1) + n(n-1)/2$ moment conditions:

\begin{itemize}
    \item \textit{Co-skewness (Group 1):} The $n(n-1)$ third-order
          cross-moment conditions
          $\mathbb{E}[\varepsilon_{i,t}^{2}\,\varepsilon_{j,t}] = 0$
          for all ordered pairs $(i,j)$ with $i \neq j$.  

    \item \textit{Symmetric co-kurtosis (Group 2):} The
          $n(n-1)/2$ fourth-order conditions
          $\mathbb{E}[\varepsilon_{i,t}^{2}\,\varepsilon_{j,t}^{2}] = 1$
          for all $i < j$.  
\end{itemize}

Two properties of these moment conditions are worth making explicit. First, they are {unconditional} moment restrictions that hold under Assumption~\ref{as:indep} without independence: uncorrelatedness and the diagonal third- and fourth-order cumulant tensors give $\mathbb{E}[\varepsilon_{it}^2\varepsilon_{jt}]=0$ and $\mathbb{E}[\varepsilon_{it}^2\varepsilon_{jt}^2]=1$ directly. Together with the mixing condition of Section~\ref{ass:regularity}, we allow for idiosyncratic conditional heteroskedasticity in the shocks (ARCH/GARCH, stochastic volatility): each $\varepsilon_{i,t}$ is then a martingale difference sequence, serially uncorrelated but its conditional variance is serially dependent\footnote{These moment conditions exclude only {common} (cross-shock) volatility, which violates $\mathbb{E}[\varepsilon_{it}^2\varepsilon_{jt}^2]=1$, and can be verified by testing the estimated shocks $\hat\varepsilon_t$ for cross-dependence in their squares. We refer the reader to Remark \ref{rem:scope_nongaus} for a discussion on the scope of these higher-order moment conditions.}. Second, the variance of the quartic conditions requires finite eighth-order cross-moments, $\mathbb{E}[\varepsilon_{it}^4\varepsilon_{jt}^4]<\infty$; under cross-sectional independence these factorize as $\mathbb{E}[\varepsilon_{it}^4]\,\mathbb{E}[\varepsilon_{jt}^4]$ and only $4+2\delta$ marginal moments are needed. The general requirement is satisfied by the non-Gaussian parametric families considered (normal-inverse Gaussian, $\chi^2$, and Student-$t$ with degrees of freedom exceeding $8+4\delta$).

\begin{remark}[Scope of the non-Gaussian moment block] \label{rem:scope_nongaus}
Assumption~\ref{as:indep} already relaxes full statistical independence: it requires only mutual uncorrelatedness and diagonal third- and fourth-order cumulant tensors. There are other sets of moment conditions in the literature, resting on different maintained assumptions: \citet{lanne_gmm_2021} use co-kurtosis conditions under mutual orthogonality, and \citet{keweloh_generalized_2021} uses all third- and fourth-order moments under independence.

We do not rank these strategies by identification strength. The central claim is that \emph{any} non-Gaussian identified SVAR, once expressed as sufficiently identifying moment conditions under appropriate assumptions, admits augmentation to proxy exclusion conditions within a unified GMM criterion. To illustrate, consider the assumptions of \citet{liu_non-gaussian_2025}: (i) all structural shocks have zero co-skewness and zero co-kurtosis, except that symmetric co-kurtoses may be nonzero and no two shocks share identical fourth-moment relationships, i.e.\ $\sum_{i=1}^{n}\mathbb{E}(\varepsilon_{it}^2 \varepsilon_{jt}^2) \neq \sum_{i=1}^{n}\mathbb{E}(\varepsilon_{it}^2 \varepsilon_{kt}^2)$ for all $j \neq k$; and (ii) at most one structural shock has both zero skewness and zero excess kurtosis. Under these assumptions, our symmetric co-kurtosis conditions can be replaced by asymmetric co-kurtosis conditions. The resulting moment conditions identify $B$ up to sign and column permutation, see \citet{liu_non-gaussian_2025, mesters_non-independent_2024}. These assumptions are strictly weaker than full independence and also allow for common volatility processes where the structural shocks exhibit correlated volatilities.

The analysis of the next section therefore takes $g_{NG,t}(\phi)$ as given under Assumptions \ref{as:indep} and \ref{as:gau}, and characterizes its incremental identifying information relative to the proxy exclusion conditions.
\end{remark}

\subsection{Global identification of $B$}\label{subsec:indeterminacy}

By combining the proxy exclusion and sufficient higher-order moment conditions within the GMM objective, we achieve global point identification of the target columns of $B$, as
established in Proposition \ref{prop:identification_unified}.

\begin{lemma}[Non-Gaussian identification]\label{lem:ng_set}
Let $B_0=HQ_0$ with $Q_0\in O(n)$ and $\varepsilon_t(\phi)=Q(\phi)'H^{-1}u_t$.
Under Assumptions~\ref{as:indep} and \ref{as:gau}, the Jacobian $G_{NG}(\phi_0)$ has
full column rank $p_\phi$, and the population non-Gaussian conditions are satisfied
only on the signed-permutation group
$\mathcal{B}=\{B_0P\Lambda:P\in\mathcal{P}_n,\ \Lambda\in\mathcal{D}_n^{\pm}\}$
($0<|\mathcal{B}|\le n!\,2^n$); hence the non-Gaussian identified set is $\mathcal{B}$.
\end{lemma}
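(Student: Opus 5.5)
Throughout, write $\varepsilon_t(\phi)=R(\phi)'\varepsilon_t$ with $R(\phi):=Q_0'Q(\phi)\in O(n)$, which holds because $H^{-1}u_t=Q_0\varepsilon_t$. The non-Gaussian moments of the recovered shocks are then polynomial functions of the orthogonal matrix $R$, computed from the true shocks. By Assumption~\ref{as:indep}, every second-, third- and fourth-order moment of $\varepsilon_t$ reduces to the identity covariance plus the diagonal cumulants $\kappa_{3,m}$ and $\kappa_{4,m}$. The plan is to (i) write the population moments $\mathbb{E}[g_{NG,t}(\phi)]$ explicitly as functions of the columns $r_i$ of $R$, (ii) show that their joint zero set is exactly the signed permutation matrices, and (iii) differentiate at $R=I$ to obtain the rank of the Jacobian.

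For step (i), let $e_i=r_i'\varepsilon_t$. Orthonormality of the columns $r_i$ and the diagonal cumulant tensors give
\[
\mathbb{E}[e_i^2e_j]=\sum_m \kappa_{3,m}r_{mi}^2r_{mj},\qquad
\mathbb{E}[e_i^2e_j^2]-1=\sum_m\kappa_{4,m}r_{mi}^2r_{mj}^2 \quad (i\ne j).
\]
The fourth-order expression uses $\mathbb{E}[e_i^2e_j^2]=1+2(r_i'r_j)^2+\sum_m\kappa_{4,m}r_{mi}^2r_{mj}^2$ together with $r_i'r_j=0$. For step (ii), I follow \citet{lanne_identifying_2023}. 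Without loss of generality all nonzero $\kappa_{4,m}$ are positive, so each summand of $\sum_m\kappa_{4,m}r_{mi}^2r_{mj}^2$ is nonnegative, and the Group~2 conditions force $r_{mi}r_{mj}=0$ for every $i\neq j$ and every non-mesokurtic index $m$. At least $n-1$ rows of $R$ are non-mesokurtic by Assumption~\ref{as:gau}(b), so each of those rows contains at most one nonzero entry. Since these rows are unit vectors, each has exactly one entry equal to $\pm1$. Orthogonality of $R$ then makes these entries lie in distinct columns, and it forces the single remaining row to be $\pm$ the remaining unit vector. Hence $R=P\Lambda$ and $B=HQ(\phi)=B_0P\Lambda$. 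Conversely, every element of $\mathcal{B}$ satisfies both moment groups trivially, because the recovered shocks are then a signed permutation of $\varepsilon_t$. This gives the identified set $\mathcal{B}$. Its cardinality is at most $n!\,2^n$, and only half of these elements have $\det=+1$ and are reachable through $SO(n)$; the upper bound still holds.

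For step (iii), write $R=\exp(A)$ with $A$ skew-symmetric, which is locally equivalent to perturbing $\phi$ because the Givens map is a local diffeomorphism onto $SO(n)$ at $Q_0$ (continuity condition of \citet{matteson_dynamic_2011}). To first order $r_{mi}=\delta_{mi}+A_{mi}$, and differentiating the co-skewness moment for the ordered pair $(i,j)$ gives
\[
\partial\,\mathbb{E}[e_i^2e_j]=\kappa_{3,i}A_{ij}.
\]
Taking the pairs $(i,j)$ and $(j,i)$ together, the free parameter $a_{ij}$ with $i<j$ enters the two conditions with coefficients $\kappa_{3,i}$ and $-\kappa_{3,j}$. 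At least one of these is nonzero by Assumption~\ref{as:gau}(a). Each $a_{ij}$ therefore appears with a nonzero coefficient in its own pair of rows, and no other parameter enters those rows, so the Jacobian is block-diagonal across pairs, with each block of full rank. This gives $\operatorname{rank}G_{NG}(\phi_0)=p_\phi$. The co-kurtosis rows have zero derivative at $R=I$, because their summands are quadratic in the off-diagonal entries, so they do not affect the rank. Finally I multiply by the nonsingular Jacobian of the map $\phi\mapsto A$.

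The main obstacle is the global step (ii) in the presence of the one mesokurtic shock, together with the care needed over $O(n)$ versus $SO(n)$ and the sign normalization. I will handle the remaining row purely through orthonormality, as sketched above. I will also note that local identification through the co-skewness Jacobian does not require the common-sign condition, whereas global identification does.
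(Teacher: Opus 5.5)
Your proposal is correct and follows essentially the same route as the paper. You expand the co-skewness and co-kurtosis moments via cumulants as polynomials in the orthogonal matrix $R$. You use the common sign of the excess kurtoses to turn the co-kurtosis conditions into vanishing sums of nonnegative terms, which forces $R$ to be a signed permutation. You then linearize at $R=I$ with a skew-symmetric generator, so that each angle loads only on its own pair of co-skewness rows while the co-kurtosis rows are flat.

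The differences are cosmetic or extra care. Your $R=Q_0'Q(\phi)$ is the transpose of the paper's, so you argue on rows rather than columns. You also make explicit three points the paper takes for granted: the converse inclusion, the $O(n)$ versus $SO(n)$ reachability, and the need for the Givens map to have a nonsingular differential at $\phi_0$.
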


\begin{proposition}[Point Identification with external instruments and non-Gaussian shocks]
\label{prop:identification_unified}
Let $u_t = B_0\varepsilon_t$ where $B_0 \in \mathbb{R}^{n\times n}$ 
is non-singular, the VAR satisfies Assumption \ref{as:stable_var}, 
and $\varepsilon_t$, $m_t$ satisfy 
Assumptions \ref{as:validproxy}--\ref{as:gau}. By Lemma~\ref{lem:ng_set}, the
non-Gaussian moment conditions identify $B_0$ up to the signed-permutation set
\[
    \mathcal{B} = \{B_0 P\Lambda : P \in \mathcal{P}_n,\;
    \Lambda \in \mathcal{D}_n^{\pm}\},
    \qquad 0 < |\mathcal{B}| \leq n!\cdot 2^n.
\]
Further assume:
\begin{enumerate}[label=(\alph*)]
    \item\label{ass:proxy_u} \textbf{(Valid proxies)} The
    instrument vector $m_t$ is $k \times 1$ with $1 \leq k \leq n$, 
    and satisfies Assumption \ref{as:validproxy}.

    \item\label{ass:norm_u} \textbf{(Sign normalization)} The 
    diagonal elements of the impact matrix $B$, i.e. $B_{ii} > 0$ for all 
    $i = 1,\ldots,k$.
\end{enumerate}
Partition $B_0 = [B_{0,1}\;\; B_{0,2}]$ where $B_{0,1}$ collects the $k$ target columns and 
$B_{0,2}$ collects the remaining 
$n-k$ columns. Then:
\begin{enumerate}[label=(\roman*)]
    \item\label{res:target} \textbf{(Target identification)} 
    $B_{0,1}$ is globally point-identified for all 
    $1 \leq k \leq n$.

    \item\label{res:residual} \textbf{(Non-target identification)} 
    For $k = n$ or $k = n-1$, $B_{0,2}$ is also globally 
    point-identified, so $B_0$ is globally point-identified. For 
    $1 \leq k < n-1$, $B_{0,2}$ is identified only up to the 
    sign and permutation
    \[
        \mathcal{E}_{n-k} = \{B_{0,2}P_{22}\Lambda_2 : 
        P_{22}\in\mathcal{P}_{n-k},\; 
        \Lambda_2\in\mathcal{D}_{n-k}^{\pm}\},
        \qquad |\mathcal{E}_{n-k}| \leq (n-k)!\cdot 2^{n-k}.
    \]
\end{enumerate}
\end{proposition}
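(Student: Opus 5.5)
Taking Lemma~\ref{lem:ng_set} as given, the plan is to treat the non-Gaussian block as reducing the candidate set to the finite group $\mathcal{B}$, and then show that the proxy exclusion restrictions together with the sign normalization single out the target columns. Fix a candidate $B = B_0 P\Lambda \in \mathcal{B}$. The implied shocks are $\varepsilon_t(B) = B^{-1}u_t = \Lambda P'\varepsilon_t$, so each implied shock equals $\pm\varepsilon_{\pi(i),t}$ for some permutation $\pi$. Under Assumption~\ref{as:validproxy}, we then have
\[
\mathbb{E}[m_t\varepsilon_t(B)'] = (\Psi,\;0_{k\times(n-k)})\,P\Lambda .
\]
The proxy block requires that, for every $l\le k$, the entries of row $l$ vanish off column $l$. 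Because $\Psi$ is diagonal and non-singular, row $l$ of $(\Psi,0)P\Lambda$ has exactly one nonzero entry, namely $\pm\psi_l$, located in the column to which $P$ moves column $l$. The exclusion restrictions therefore hold if and only if $P$ fixes each index $l\in\{1,\dots,k\}$. This forces $P=\operatorname{diag}(I_k,P_{22})$, while $\Lambda=\operatorname{diag}(\Lambda_1,\Lambda_2)$ is left unrestricted by these restrictions. Since relevance is needed to rule out alternative permutations, the step uses $\psi_l\neq0$ in an essential way.

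For part (i), the target columns of a surviving candidate are $B_{0,1}\Lambda_1$. The normalization $B_{ii}>0$ for $i\le k$ pins $\Lambda_1=I_k$, provided $(B_0)_{ii}\neq0$ for $i\le k$; I would state this as implicit in \ref{ass:norm_u}, in the sense that the normalization is attainable by the truth. Hence $B_{0,1}$ is the unique value compatible with both moment blocks, which gives global point identification. Global here means that among all $\phi\in\Phi$, any zero of the stacked population moments corresponds to some $B\in\mathcal{B}$ by the lemma, and so this step argues over the whole parameter space rather than locally.

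For part (ii), the remaining indeterminacy is exactly the set of $B_{0,2}P_{22}\Lambda_2$, which is $\mathcal{E}_{n-k}$. When $k=n$ this set is empty. When $k=n-1$ we have $P_{22}=1$, so only the sign $\Lambda_2\in\{\pm1\}$ remains. I expect this sign to be the main obstacle, since neither the proxies nor \ref{ass:norm_u} (which covers $i\le k$ only) fix it. I would resolve it through the parameterization $B=HQ(\phi)$ with $Q\in SO(n)$: $\det Q=+1$ is required, and $\det(P\Lambda)=\det\Lambda_1\cdot\Lambda_2=\Lambda_2$ once $\Lambda_1=I$ and $P=I$. The determinant constraint therefore forces $\Lambda_2=1$, provided $Q_0\in SO(n)$, which I would impose by taking $B_0$ within the parameterized class. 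The same determinant argument only removes one global sign in the case $k<n-1$, so I would note that the bound $|\mathcal{E}_{n-k}|\le(n-k)!2^{n-k}$ still holds as stated.
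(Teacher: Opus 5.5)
Your proof is correct. For the target columns it is the paper's argument in compressed form. The paper first uses the zero block $\Psi P_{12}\Lambda_2=0$ to get $P_{12}=0$ (hence $P_{21}=0$), and then requires $\Psi P_{11}\Lambda_1$ to be diagonal to get $P_{11}=I_k$. You read both off at once from the single nonzero entry of row $l$ of $(\Psi,0)P\Lambda$. You also obtain $\Lambda_1=I_k$ directly from $\lambda_i(B_0)_{ii}>0$, which is the step that does the work in the paper too.

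The genuine difference is the case $k=n-1$.
\begin{itemize}
\item The paper pins the last sign by saying that normalization (b) requires $\lambda_n[B_0]_{nn}>0$. But (b) only covers $i\le k=n-1$, so that step quietly extends the sign convention to the non-target column.
\item You noticed that neither the proxies nor (b) fix this sign, and used $\det Q(\phi)=+1$ instead, which gives $\lambda_n=\det Q_0$.
\end{itemize}

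Each route needs one extra convention. The paper's argument works over all of $\mathcal{B}$ regardless of parameterization, but it needs $[B_0]_{nn}>0$. Yours respects (b) as written, but it needs $Q_0=H^{-1}B_0\in SO(n)$. That is stronger than the $Q_0\in O(n)$ allowed in Lemma~\ref{lem:ng_set}. The Givens-parameterized estimator requires it anyway, though. For example, when $k=n$ and $\det Q_0=-1$, the unique element of $\mathcal{B}$ satisfying the proxy and sign restrictions is $B_0$ itself, and $B_0$ is not of the form $HQ(\phi)$.

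Your observation that for $k<n-1$ the determinant constraint removes half of the $(P_{22},\Lambda_2)$ pairs is also correct. It is consistent with the stated upper bound on $|\mathcal{E}_{n-k}|$.
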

\noindent \textit{Proof}: See Appendix \ref{appendix:proof_identification_unified}.

Proposition \ref{prop:identification_unified} subsumes both full 
and partial identification\footnote{The notion of identification employed is that of
{global} identification in the sense of \citet{komunjer_global_2012}: $\phi_0$ is globally identified if the population moment
conditions admit $\phi_0$ as their unique solution over all of $\Phi$. This
is strictly stronger than the local identification condition of
\citet{rothenberg_identification_1971}, which only rules out a continuum through $\phi_0$} as special cases of a unified 
framework\footnote{\citet{angelini_exogenous_2019} establish 
necessary and sufficient conditions for point identification in 
SVAR models with $r$ instruments identifying $g \leq r$ structural 
shocks. We do not pursue the general case.}.
When $k < n-1$, 
the target sub-matrix $B_{0,1}$ is point-identified while the 
remaining $B_{0,2}$ remains identified only up to a finite signed 
permutation equivalence class. In both cases, identification of 
the target columns is achieved through the joint restrictions 
imposed by proxy exclusion and higher-order moment conditions. As we will see in Section \ref{subsec:weakproxy}, this result 
remains valid under local-to-zero proxy relevance, that is, 
when the proxy instrument is arbitrarily weak provided that the 
Jacobian of the non-Gaussian moment block retains full column rank at the true parameter.

This proposition also resolves an interpretive limitation of pure 
statistical identification. Through non-Gaussianity, we recover the 
columns of $B_0$ up to signed permutation indeterminacy, with 
no intrinsic mapping between statistically identified components 
and economically meaningful structural shocks. The proxy 
instrument conditions eliminate this indeterminacy for 
the target columns. The external instrument, whose economic 
interpretation is grounded in theoretical 
prior knowledge, selects the unique permutation and sign 
normalization consistent with both the higher-order moment 
conditions and the proxy relevance and orthogonality 
conditions. Economic identification of the target shocks is 
therefore achieved as a direct consequence of the moment 
conditions themselves, rather than through post-estimation 
labeling conventions applied to anonymous statistical 
components.

\subsection{Two-Step GMM estimation}\label{subsec:twostepgmm}

The unified, over-identified system combines $q = q_{NG} +
q_{\mathrm{P}}$ moment equations into the stacked average:
\begin{equation}\label{eq:moment_conditions}
    \bar{g}_T(\phi) = \frac{1}{T} \sum_{t=1}^T
    \begin{pmatrix}
        g_{NG,t}(\phi) \\[2pt] g_{P,t}(\phi)
    \end{pmatrix}.
\end{equation}
We can estimate it using standard \citet{hansen_1982} two-step GMM. The
first-step estimator minimizes the unweighted criterion:
$\hat{\phi}_1 = \arg\min_\phi\, \bar{g}_T(\phi)'\bar{g}_T(\phi)$.
The first-step residuals are used to construct the optimal weighting matrix:
\begin{equation}
    \hat{S} = \frac{1}{T} \sum_{t=1}^T
    \bigl(g_t(\hat{\phi}_1) - \bar{g}_T(\hat{\phi}_1)\bigr)
    \bigl(g_t(\hat{\phi}_1) - \bar{g}_T(\hat{\phi}_1)\bigr)'.
\end{equation}
The second-step estimator minimizes the efficiently weighted objective:
\begin{equation}\label{eq:estimator}
    \hat{\phi} = \arg\min_\phi\; \bar{g}_T(\phi)'\,\hat{S}^{-1}\,\bar{g}_T(\phi).
\end{equation}
Under the mixing condition of Section~\ref{ass:regularity}, the moment vector is serially dependent (the co-kurtosis block is not a martingale difference), so the optimal weighting matrix is the \citet{newey_simple_1987} HAC\footnote{$\hat{S}_{HAC} = \hat{\Gamma}_0 + \sum_{i=1}^{T-1}\omega_{i,T}(\hat{\Gamma}_i+ \hat{\Gamma}'_i)$, where $\hat{\Gamma}_i$ is a consistent estimator of ${\Gamma}_i$, the $i$th autocovariance matrix of the moment conditions, ${g}_t(\phi_0)$.} estimator with data-dependent bandwidth of \citet{andrews_heteroskedasticity_1991}. Under serial independence of the shocks, the autocovariances vanish and $\hat S$ reduces to the covariance matrix above.
For subsequent derivations, we conformably partition the optimal inverse-weighting matrix as:
\begin{equation}
  S = \begin{pmatrix} S_{NN} & S_{NP} \\ S_{PN} & S_{PP} \end{pmatrix},
  \qquad S_{NP} = S_{PN}',
\end{equation}
Also, we define the population Jacobian matrices as
\begin{equation} \label{eq:jacobian}
    G(\phi) = \frac{\partial g(\phi)}{\partial\phi'}
            \in \mathbb{R}^{q\times p_\phi}, \qquad
  G \equiv G(\phi_0) = \bigl(G_{NG}', G_P' \bigr)',
\end{equation}
where $G_P = \partial g_P(\phi_0)/\partial\phi'$ and
$G_{NG} = \partial g_{NG}(\phi_0)/\partial\phi'$, and a consistent sample Jacobian is $\hat{G}_T(\phi)$.

\subsection{Joint inference with VAR parameters}\label{sec:inference}

With a standard set of regularity conditions, we show
consistency and asymptotic normality of the HGMM estimator of the rotation parameters $\phi$, and its efficiency gains relative to proxy-only criterion. Also, we also provide joint asymptotic distribution of the estimates of autoregressive and structural rotation parameters. Throughout, let $\psi = (\theta', \phi')'$ denote the full
parameter vector, with $\theta = \mathrm{vec}(\nu, A_1, \ldots, A_p)$
collecting the reduced-form VAR coefficients and $\phi$ the structural
rotation parameters.

\subsubsection{Regularity conditions}\label{ass:regularity}
\begin{enumerate}
    \item The parameter space for the rotation angles $\phi \in (-\pi, \pi]^{p_\phi}$ is compact.
    \item The process $\{(\varepsilon_t,m_t)\}$ is $\alpha$-mixing with mixing coefficients satisfying $\sum_{h}\alpha(h)^{\delta/(2+\delta)}<\infty$, and $\mathbb{E}\|g_t(y_t,\phi)\|^{2+\delta}<\infty$ for some $\delta>0$.
    \item The long-run variance $S=\sum_{h}\mathbb{E}\big[g_t(y_t,\phi_0)g_{t-h}(y_t,\phi_0)'\big]$ is positive definite.
\end{enumerate}
Condition~1 guarantees that the extremum estimator is
well-defined and that a uniform LLN applies to the criterion, delivering
consistency once $\phi_0$ is the unique minimizer of the criterion function, as provided by the
identification result (Assumptions~\ref{as:indep} - \ref{as:gau}, Lemma~\ref{lem:ng_set}). Condition~2 governs the {temporal} dependence of the data, where $\alpha$-mixing together with the moment bound conditions
is needed for a central limit theorem for
dependent arrays to apply to $T^{-1/2}\sum_t g_t(\phi_0)$. Importantly, this does not place any restriction on cross-sectional dependence beyond the co-moment structure of
Assumption~\ref{as:indep}: it accommodates serially dependent shocks such as (idiosyncratic)
ARCH/GARCH or stochastic volatility, for which the higher-order moment conditions are serially
correlated, rather than martingale difference sequence. Condition~3 concerns the resulting optimal
weighting matrix. Since the moment conditions are serially correlated, the normalized sum has the {long-run}
variance $S=\sum_h\mathbb{E}[g_t(\phi_0)g_{t-h}(\phi_0)']$; positive definiteness of $S$ ensures the efficient weighting $S^{-1}$ and the asymptotic variance
$(G'S^{-1}G)^{-1}$ are well-defined. 

Under the regularity conditions in \ref{ass:regularity} and
Assumptions \ref{as:validproxy} - \ref{as:gau}, the estimator \eqref{eq:estimator} is
consistent and asymptotically normal, i.e.
\begin{equation} \label{eq:can}
    \sqrt{T}\,(\hat\phi - \phi_0)
  \;\xrightarrow{d}\;
  \mathcal{N}\!\bigl(0,\; \mathcal{V}_{joint}\bigr)
\end{equation}
where, $\mathcal{V}_{joint} = \bigl(G'S^{-1}G\bigr)^{-1}$ is the asymptotic variance\footnote{This result assumes the reduced-form innovations ($u_t$) are known, however in practice the innovations are estimated and hence, we include the generated-regressor correction while deriving the asymptotic distribution of $\psi$ in Section \ref{sec:joint_asymp}.} of the estimator.
The proof is a direct application of \citet{hall_generalized_2005}.

\subsubsection{Efficiency gains}\label{subsec:efficiencygains}
The hybrid GMM estimator combines two distinct identification strategies, yielding efficiency gains relative to proxy-based identification alone\footnote{Analogous efficiency gains are obtained relative to non-Gaussian identification alone, provided the proxy instruments satisfy $\operatorname{rank}(\widetilde{G}_{P}) = p_\phi$. This condition is generically restrictive, as sufficient valid proxies to fully identify a moderate-dimensional VAR system are rarely available in practice.}.
\begin{proposition}[Efficiency of joint GMM]
\label{prop:efficiency}
Under Assumptions \ref{as:validproxy}-\ref{as:gau}, with $G_{NG}$ of full column rank (Lemma~\ref{lem:ng_set}) and
$\operatorname{rank}(G_P) = p_\phi$:
\begin{equation}
  \mathcal{V}_P \;-\; \mathcal{V}_{joint} \;\succeq\; 0.
  \label{eq:eff_ineq}
\end{equation} 
where,
\begin{equation}
 \mathcal{V}_{joint} \;\equiv\; \mathcal{I}_{joint}^{-1} \;=\; \bigl(G'S^{-1}G\bigr)^{-1} \qquad \mathcal{V}_P \equiv \bigl(G_P'S_{PP}^{-1}G_P\bigr)^{-1}
  \label{eq:Vjoint}
\end{equation}
and $\mathcal{V}_P$ is the analogous variance of the proxy-only GMM estimator.
\end{proposition}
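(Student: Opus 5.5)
The plan is to prove the equivalent information-matrix inequality $\mathcal{I}_{joint} \succeq \mathcal{I}_P := G_P'S_{PP}^{-1}G_P$, then invert it. Both matrices are positive definite. $\mathcal{I}_P$ is positive definite because $\operatorname{rank}(G_P)=p_\phi$ and $S_{PP}\succ 0$, the latter being a principal submatrix of the positive definite long-run variance $S$ (regularity condition 3). $\mathcal{I}_{joint}$ is positive definite because $G$ has full column rank, which already follows from $G_{NG}$ having full column rank by Lemma~\ref{lem:ng_set}. For positive definite matrices, $A\succeq C\succ 0$ implies $C^{-1}\succeq A^{-1}$, since matrix inversion is operator-monotone decreasing. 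This yields \eqref{eq:eff_ineq}.

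The key step is a partitioned-inverse decomposition of $S^{-1}$, using the block ordering $(NG,P)$ of \eqref{eq:moment_conditions}. Let $S_{NN\cdot P}=S_{NN}-S_{NP}S_{PP}^{-1}S_{PN}$ denote the Schur complement, which is positive definite because $S\succ 0$. The block LDU factorization gives the identity
\begin{equation*}
S^{-1}=\begin{pmatrix}0&0\\0&S_{PP}^{-1}\end{pmatrix}
+\begin{pmatrix}I\\-S_{PP}^{-1}S_{PN}\end{pmatrix}S_{NN\cdot P}^{-1}\begin{pmatrix}I & -S_{NP}S_{PP}^{-1}\end{pmatrix}.
\end{equation*}
Pre- and post-multiplying by $G'$ and $G$ then yields the additive decomposition
\begin{equation*}
\mathcal{I}_{joint}=G_P'S_{PP}^{-1}G_P+\widetilde G_{NG}'S_{NN\cdot P}^{-1}\widetilde G_{NG},\qquad \widetilde G_{NG}:=G_{NG}-S_{NP}S_{PP}^{-1}G_P .
\end{equation*}
Here $\widetilde G_{NG}$ is the Jacobian of the non-Gaussian block after its moments are projected off the proxy moments. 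The second term is a quadratic form in a positive definite matrix, so it is positive semidefinite. Hence $\mathcal{I}_{joint}-\mathcal{I}_P\succeq 0$. This is also the closed-form gain described in the introduction.

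The main thing to check is the Schur-complement identity, i.e.\ that the block-inverse formula applies with the paper's block ordering. It can be verified directly by multiplying the displayed expression by $S$ and confirming the product is the identity. A second check is that $\mathcal{V}_P$ is indeed the asymptotic variance of the efficient proxy-only GMM estimator, which uses the weight $S_{PP}^{-1}$; this is the standard Hansen result applied to the subvector $g_P$, given $\operatorname{rank}(G_P)=p_\phi$.

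As a remark, the gain is strict, i.e.\ $\mathcal{V}_P-\mathcal{V}_{joint}\succ 0$, whenever $\widetilde G_{NG}$ has full column rank. The inequality is an equality along directions in the null space of $\widetilde G_{NG}$.
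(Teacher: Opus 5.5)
Your proof is correct and is essentially the paper's own argument: the partitioned-inverse (Schur complement of $S_{PP}$) decomposition $\mathcal{I}_{joint}=G_P'S_{PP}^{-1}G_P+\tilde G_{NG}'\Sigma_{NG}^{-1}\tilde G_{NG}$ with $\Sigma_{NG}=S_{NN}-S_{NP}S_{PP}^{-1}S_{PN}\succ 0$, followed by the order-reversing property of inversion on the positive definite cone. One small correction to your closing remark: $\mathrm{null}(\tilde G_{NG})$ is where the \emph{information} gain vanishes, whereas $\mathcal{V}_P-\mathcal{V}_{joint}=\mathcal{I}_P^{-1}\,\Delta\mathcal{I}_{NG}\,\mathcal{I}_{joint}^{-1}$ degenerates along $\mathcal{I}_P\,\mathrm{null}(\tilde G_{NG})$; your strictness criterion (full column rank of $\tilde G_{NG}$) is nonetheless right.
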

\begin{corollary}[Strict efficiency gain]\label{cor:strict_eff}
$ \mathcal{V}_P -  \mathcal{V}_{joint} \succ 0$ iff $\operatorname{rank}(\tilde{G}_{NG})
= p_\phi$
\end{corollary}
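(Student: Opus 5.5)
The plan is to derive the additive decomposition of the joint information matrix announced in the introduction, and then reduce the corollary to a statement about when a positive semidefinite increment is in fact positive definite. I read $\tilde G_{NG}$ as the non-Gaussian Jacobian after partialing out the proxy block, that is, the residual of the population projection of the NG block on the proxy block:
\[
\tilde G_{NG} \equiv G_{NG} - S_{NP}S_{PP}^{-1}G_P, \qquad S_{NN\cdot P} \equiv S_{NN} - S_{NP}S_{PP}^{-1}S_{PN}.
\]
By regularity condition 3, $S$ is positive definite, so the Schur complement $S_{NN\cdot P}$ is positive definite.

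\textbf{Step 1: decomposition of the information matrix.} I would use the block-triangular factorization of $S^{-1}$ through the Schur complement of $S_{PP}$:
\[
S^{-1} = \begin{pmatrix} I \\ -S_{PP}^{-1}S_{PN} \end{pmatrix} S_{NN\cdot P}^{-1}\begin{pmatrix} I & -S_{NP}S_{PP}^{-1}\end{pmatrix} + \begin{pmatrix} 0 & 0 \\ 0 & S_{PP}^{-1}\end{pmatrix}.
\]
Sandwiching this between $G'$ and $G=(G_{NG}',G_P')'$ gives
\[
\mathcal{I}_{joint} = G'S^{-1}G = \underbrace{G_P'S_{PP}^{-1}G_P}_{A} + \underbrace{\tilde G_{NG}'S_{NN\cdot P}^{-1}\tilde G_{NG}}_{C}.
\]
This step is routine algebra and also yields Proposition~\ref{prop:efficiency}.

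\textbf{Step 2: reduction to an eigenvalue statement.} The hypothesis $\operatorname{rank}(G_P)=p_\phi$ makes $A$ positive definite, and $C$ is positive semidefinite. Set $D=A^{-1/2}CA^{-1/2}$. Then
\[
\mathcal{V}_P-\mathcal{V}_{joint} = A^{-1}-(A+C)^{-1} = A^{-1/2}\bigl[I-(I+D)^{-1}\bigr]A^{-1/2} = A^{-1/2}D(I+D)^{-1}A^{-1/2}.
\]
$D$ and $(I+D)^{-1}$ commute and are simultaneously diagonalizable. The middle matrix therefore has eigenvalues $d_j/(1+d_j)$, where $d_j\ge 0$ are the eigenvalues of $D$.

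\textbf{Step 3: the equivalence.} From Step 2, $\mathcal{V}_P-\mathcal{V}_{joint}\succ0$ if and only if every $d_j>0$, which holds if and only if $D\succ0$, which holds if and only if $C\succ0$, by congruence with the nonsingular $A^{1/2}$. Since $S_{NN\cdot P}^{-1}\succ0$, we have $x'Cx=0$ exactly when $\tilde G_{NG}x=0$. Hence $C\succ0$ if and only if $\tilde G_{NG}$ has trivial null space, that is, $\operatorname{rank}(\tilde G_{NG})=p_\phi$. Both directions of the corollary follow. When the rank fails, any $x\neq0$ in the null space of $\tilde G_{NG}$ gives $y=A^{1/2}x$ with $y'(\mathcal{V}_P-\mathcal{V}_{joint})y=0$, which confirms the converse explicitly.

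\textbf{Main obstacle.} The only real difficulty is pinning down the definition of $\tilde G_{NG}$, since it is not stated before the corollary. If the paper instead defines it with a different partialing, for example projecting $G_{NG}$ off the column span of $G_P$ in the $S^{-1}$ metric, I would show that it coincides with $S_{NN\cdot P}^{-1/2}(G_{NG}-S_{NP}S_{PP}^{-1}G_P)$ up to a nonsingular left factor. Rank is preserved under such a factor, so the argument above goes through unchanged.
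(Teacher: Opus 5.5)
Your proof is correct and follows the paper's route: the same Schur-complement decomposition $\mathcal{I}_{joint}=\mathcal{I}_P+\tilde G_{NG}'\Sigma_{NG}^{-1}\tilde G_{NG}$ (your $S_{NN\cdot P}$ is the paper's $\Sigma_{NG}$, and $\tilde G_{NG}$ is defined exactly as you read it), with your $D(I+D)^{-1}$ eigenvalue argument making rigorous the two-sided equivalence that the paper leaves implicit after proving only the weak inequality. One small slip is in your redundant final check: for $x\in\mathrm{null}(\tilde G_{NG})$ the annihilated vector is $y=\mathcal{I}_P x$, since then $(\mathcal{I}_P+\Delta\mathcal{I}_{NG})^{-1}y=x=\mathcal{I}_P^{-1}y$, whereas $y=\mathcal{I}_P^{1/2}x$ would require $\Delta\mathcal{I}_{NG}\mathcal{I}_P^{-1/2}x=0$ and does not work in general; this does not affect the result, because your chain of equivalences in Step 3 already proves both directions.
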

where, 
\begin{equation}\label{eq:GNG_tilde}
 \tilde{G}_{NG}\equiv G_{NG} - S_{NP}S_{PP}^{-1}G_P
\end{equation}
is the partialed NG Jacobian. The corollary states strict efficiency gains only if the non-Gaussian moment conditions identify $\phi$ in directions not already identified by the proxy moments.
\noindent The proof rests on an exact additive decomposition of the joint information matrix:
\begin{equation}
  \mathcal{I}_{joint}
  \;=\; \underbrace{G_P'S_{PP}^{-1}G_P}_{\mathcal{I}_P}
       \;+\; \underbrace{\tilde{G}_{NG}'\Sigma_{NG}^{-1}\tilde{G}_{NG}}_{\Delta \mathcal{I}_{NG}},
  \label{eq:info_decomp}
\end{equation}
where $\Sigma_{NG} \equiv S_{NN} - S_{NP}S_{PP}^{-1}S_{PN} \succ 0$ is
the Schur complement of $S_{PP}$ in $S$, adopted from \citet{newey_generalized_1985}. Since $\Sigma_{NG}^{-1} \succ 0$,
the increment $\Delta \mathcal{I}_{NG} \succeq 0$, and matrix monotonicity of
inversion delivers \eqref{eq:eff_ineq}. For the detailed proof, see Appendix \ref{appendix:efficiency}.

The decomposition \eqref{eq:info_decomp} is the GMM analog of the
Frisch-Waugh-Lovell theorem: $\Delta \mathcal{I}_{NG}$ is the information the NG
moments contribute for $\phi$ {after} projecting out the identifying
directions already spanned by the proxy moments in the $S$-inner product.
The efficiency gain is thus governed by the extent to which the two channels
identify $\phi$ along distinct directions.

\subsubsection{Joint asymptotic distribution}\label{sec:joint_asymp}

Before the two-step estimation of Section \ref{subsec:twostepgmm}, we obtain the autoregressive estimates $\theta$ by OLS which yield the reduced-form innovations $\hat{u}_t$ and the Cholesky
factor of the covariance matrix $\hat{\Sigma}_u$, $\hat{H} = \mathrm{chol}(\hat{\Sigma}_u)$. The second stage minimizes
the HGMM criterion over $\phi$ using the generated residuals
$\hat{\varepsilon}_t(\phi) = Q(\phi)'\hat{H}^{-1}\hat{u}_t$. Since,
the second-stage moment conditions depend on first-stage autoregressive estimates, the asymptotic
variance of $\hat{\phi}$ requires a generated-regressor
correction of \citet{murphy_estimation_1985}.

\paragraph{First-Stage Asymptotics}
Under the regularity conditions in \ref{ass:regularity} and Assumptions \ref{as:stable_var} and \ref{as:indep}, standard OLS
theory from \citet{lutkepohl_2005} gives
\begin{equation}
    \sqrt{T}\,(\hat{\theta} - \theta_0)
    \xrightarrow{d} \mathcal{N}(0,\, \mathcal{V}_\theta),
    \qquad
    \mathcal{V}_\theta = \mathbb{E}[\zeta_t\zeta_t'],
\end{equation}
where $\zeta_t \equiv (\mathbb{E}[x_t x_t'])^{-1}x_t \otimes u_t$, $\otimes$ denotes the Kronecker product, $x_t = (y_{t-1}',\ldots,y_{t-p}',1)' \in \mathbb{R}^{np+1}$ is $\mathcal{F}_{t-1}$-measurable with $\mathcal{F}_{t-1} := 
\sigma(y_{t-1}, y_{t-2}, \ldots)$ is the natural filtration of $\{y_t\}$. In the case of homoscedastic residuals, the variance simplifies to $\mathcal{V}_\theta = \bigl(\mathbb{E}[x_t x_t']\bigr)^{-1} \otimes \Sigma_u$; under conditional heteroskedasticity $\mathcal V_\theta$ retains the general
long-run form above.

\paragraph{Second-Stage Asymptotics}
The joint estimator $\hat{\psi}$ solves a stacked moment system identifying $\phi$ via the proxy exclusion and higher-order moment conditions. Following the two-step estimator theory from
\citet{murphy_estimation_1985} and \citet{newey_mcfadden_1994} for GMM framework, under
regularity conditions \ref{ass:regularity} and Assumptions \ref{as:stable_var} - \ref{as:gau}:
\begin{equation}
    \sqrt{T}
    \begin{pmatrix} \hat{\theta} - \theta_0 \\ \hat{\phi} - \phi_0 \end{pmatrix}
    \xrightarrow{d}
    \mathcal{N}\!\left(0,\;
    \mathcal{V}_\psi \equiv
    \begin{bmatrix} \mathcal{V}_\theta & \mathcal{V}_{\theta\phi} \\ \mathcal{V}_{\theta\phi}' & \mathcal{V}_\phi \end{bmatrix}
    \right).
    \label{eq:joint_asymp}
\end{equation}
Let $\sqrt{T}(\hat\theta-\theta_0) = T^{-1/2}\sum_t \zeta_t + o_p(1)$ be the OLS asymptotic linear representation with
$\mathcal{V}_\theta = \mathbb{E}[\zeta_t\zeta_t']$. Since the second-stage
moments depend on $\theta$ through the generated residuals, the first-stage
estimation error propagates into $\hat\phi$, and the relevant asymptotic
object is the long-run variance of the {combined} score:
$g_t + \Gamma_\theta \zeta_t$ where $\Gamma_\theta \equiv \mathbb{E}[\partial g_t(\psi_0)/\partial\theta']$:
\begin{equation}
  \Omega_\psi \;\equiv\; \operatorname{Var}\!\big(g_t + \Gamma_\theta \zeta_t\big)
  \;=\; S \;+\; \Gamma_\theta \mathcal{V}_\theta \Gamma_\theta'
        \;+\; \Gamma_\theta \mathcal{C}' \;+\; \mathcal{C}\Gamma_\theta',
  \qquad
  \mathcal{C} \equiv \mathbb{E}[g_t(\psi_0)\,\zeta_t'],
  \label{eq:Omega_corrected}
\end{equation}
where $S=\mathbb{E}[g_tg_t']$ and $\mathcal{C}$ is the cross-covariance
between the structural moments and the first-stage score. The
generated-regressor corrected variance block is
\begin{equation}
  \mathcal{V}_\phi
  = (G_0'S^{-1}G_0)^{-1}\,G_0'S^{-1}\,\Omega_\psi\,
    S^{-1}G_0\,(G_0'S^{-1}G_0)^{-1}.
  \label{eq:vphi_corrected}
\end{equation}
The off-diagonal block does {not} vanish in general:
\begin{equation}
  \mathcal{V}_{\theta\phi}
  = -\,\big(\mathcal{V}_\theta\Gamma_\theta' + \mathcal{C}'\big)\,
     S^{-1}G_0\,(G_0'S^{-1}G_0)^{-1}.
  \label{eq:vthetaphi}
\end{equation}
The cross-covariance $\mathcal{C}$ is generically nonzero under non-Gaussian
identification. For instance, for the co-skewness coordinate
$g=\varepsilon_{i}^2\varepsilon_{j}$ ($i\neq j$) and $u_t=B_0\varepsilon_t$,
$\mathbb{E}[u_{k,t}\,\varepsilon_{i,t}^2\varepsilon_{j,t}]
= B_{0,kj}\,\mathbb{E}[\varepsilon_{j,t}^2\varepsilon_{i,t}^2]=B_{0,kj}\neq 0$,
so the higher-order structural moments are correlated with the reduced-form
innovations that drive the first-stage error. This implies that both $\mathcal{C}$
and $\mathcal{V}_{\theta\phi}$ must be retained.

A consistent estimator of $\mathcal{V}_\phi$ is
\begin{equation}
  \hat{\mathcal{V}}_\phi
  = (\hat{G}'\hat{S}^{-1}\hat{G})^{-1}\,\hat{G}'\hat{S}^{-1}\,
    \hat{\Omega}_\psi\,
    \hat{S}^{-1}\hat{G}\,(\hat{G}'\hat{S}^{-1}\hat{G})^{-1},
    \end{equation}
  \begin{equation}
  \hat{\Omega}_\psi=\hat S+\hat\Gamma_\theta\hat{\mathcal V}_\theta\hat\Gamma_\theta'
    +\hat\Gamma_\theta\hat{\mathcal C}'+\hat{\mathcal C}\hat\Gamma_\theta',
  \label{eq:vphi_est}
\end{equation}
where $\hat{G}=T^{-1}\sum_t \partial g_t(\hat\psi)/\partial\phi'$,
$\hat\Gamma_\theta=T^{-1}\sum_t \partial g_t(\hat\psi)/\partial\theta'$,
$\hat\zeta_t=(T^{-1}\sum_s x_sx_s')^{-1}x_t\otimes\hat u_t$, and
$\hat{\mathcal C}=T^{-1}\sum_t g_t(\hat\psi)\,\hat\zeta_t'$.

\subsubsection{Inference on the impact matrix}
\label{sec:b0_inference}

Since $B = H(\theta)Q(\phi)$ is a smooth differentiable function of $\psi$ under
the regularity conditions and stable VAR, we can apply the delta method to \eqref{eq:joint_asymp}. Consider the composite Jacobian
\begin{equation}
    \mathcal{J}_{B} =
    \frac{\partial\,\mathrm{vec}(B)}{\partial\psi'} =
    \left(\frac{\partial\,\mathrm{vec}(B)}{\partial\theta'}
    \;,\;
    \frac{\partial\,\mathrm{vec}(B)}{\partial\phi'}\right)
    \in \mathbb{R}^{n^2 \times (n^2 p + n + p_\phi)}.
    \label{eq:jac_B0}
\end{equation}
Both blocks are analytically tractable. For the rotation component:
\begin{equation}
    \frac{\partial\,\mathrm{vec}(B)}{\partial\phi'}
    = (I_n \otimes H)\,\frac{\partial\,\mathrm{vec}(Q(\phi))}{\partial\phi'},
    \label{eq:jac_B0_phi}
\end{equation}
where the Givens Jacobian $\partial\,\mathrm{vec}(Q(\phi))/\partial\phi'$
has explicit closed form as a sum of antisymmetric generator matrices, as shown in
\citet{matteson_dynamic_2011}. For the Cholesky component:
\begin{equation}
    \frac{\partial\,\mathrm{vec}(B)}{\partial\theta'}
    = (Q(\phi)'\otimes I_n)\,
      \frac{\partial\,\mathrm{vec}(H)}{\partial\,\mathrm{vech}(\Sigma_u)'}\cdot
      \frac{\partial\,\mathrm{vech}(\Sigma_u)}{\partial\theta'},
    \label{eq:jac_B0_theta}
\end{equation}
where the Cholesky derivative uses the implicit differentiation formula of
\citet{magnus_neudecker_1989} via the duplication matrix $D_n$, and
$\partial\,\mathrm{vech}(\Sigma_u)/\partial\theta'$ follows from standard
theory.

The delta method applied to \eqref{eq:joint_asymp} with Jacobian $\mathcal{J}_{B}$ in
\eqref{eq:jac_B0} yields
\begin{equation}
    \sqrt{T}\,\mathrm{vec}(\hat{B}_0 - B)
    \xrightarrow{d}
    \mathcal{N}\bigl(0,\; \mathcal{J}_{B} \mathcal{V}_\psi \mathcal{J}_{B}'\bigr).
    \label{eq:b0_asymp}
\end{equation}
We can estimate the standard errors for $[\hat{B}_0]_{ij}$ as the square roots of the corresponding diagonal entries of $\mathcal{J}_{B}\hat{\mathcal{V}}_\psi \mathcal{J}_{B}'$. Once we have established the inference for the impact matrix $B$, the derivation of the asymptotic distribution of the structural impulse response functions (SIRFs) follows directly from \citet{lutkepohl_2005}, Chapter 3. We defer it to the Appendix \ref{subsec:structural_irfs}.

\section{Behavior under weak and endogenous proxies}\label{sec:weak_misp}
In this section, we provide three sets of results. First, under local-to-zero weak proxies, the HGMM estimator remains consistent and asymptotically normal, with the non-Gaussian block acting as the identifying anchor. It also attains a smaller asymptotic variance than the standalone non-Gaussian GMM estimator, the gain arising from the cross-block long-run covariance between the proxy exclusion and higher-order moment conditions (Proposition~\ref{prop:localtozero}). Second, under a uniform non-Gaussian
identification condition, standard inference is uniformly valid over proxy strength
(Proposition~\ref{prop:uniform}). Third, when the proxies are locally endogenous through
correlation with non-target structural shocks, the estimator incurs a finite
asymptotic bias, for which we derive an explicit bound in terms of the degree of
endogeneity and the strength of the non-Gaussian identification
(Proposition~\ref{prop:bias_bound_local}).

\subsection{Robust identification with irrelevant proxies} \label{subsec:weakproxy}
In a GMM framework, proxy instruments contribute identifying
information through the relevance conditions $\mathbb{E}[m_t \varepsilon_{1,t}] = \psi_1$,
where $\psi_1$ denotes the first-stage population regression coefficient.
When $\psi_1 \neq 0$, the proxy moment conditions augment the rank of the
expected Jacobian, extending identification beyond the non-Gaussian identified
set $\mathcal{B}$ to point identification of the full rotation matrix $Q(\phi_0)$.
The strength of this contribution is monotone in $\|\psi_1\|$ i.e., as the relevance weakens, the proxy moment conditions become asymptotically
collinear with the score, the Jacobian loses rank in finite samples, and the
GMM estimator inherits the well-known issues of weakly identified systems, as shown in
\citet{newey_generalized_2009, montiel_olea_inference_2021}. However, under hybrid GMM identification, this vulnerability is absent. We show that the estimates remain consistent and weakly efficient to standalone NG-GMM estimator, even if proxy relevance is infinitesimally non-zero, formally of Pitman order $T^{-1/2}$.

\begin{proposition}[Behavior under local-to-zero proxy relevance]\label{prop:localtozero}
Suppose the conditions of Proposition \ref{prop:identification_unified}
hold, but proxy relevance follows the local-to-zero (Pitman) sequence
\[
    \mathbb{E}[m_{t,T}\varepsilon_t']
    = \frac{1}{\sqrt{T}}\bigl(\,C\;,\;0_{k\times(n-k)}\,\bigr),
    \qquad C = \operatorname{diag}(c_1,\ldots,c_k),\ \ c_i > 0,
\]
and let $\Sigma_{NG}$ be the Schur complement defined in
\eqref{eq:info_decomp}. Define
$\mathcal{I}^{*}\equiv G_{NG}'\,\Sigma_{NG}^{-1}\,G_{NG}$. Then,
\begin{enumerate}[label=(\roman*)]
  \item \label{localtozero_infolimit} The population
  proxy Jacobian satisfies $G_{P,T}=O(T^{-1/2})$, and the joint
  information matrix converges in operator norm,
  \begin{equation}
      \mathcal{I}_{\mathrm{joint},T}\ \longrightarrow\ \mathcal{I}^{*}\succ 0   
  \end{equation}

  \item \label{localtozero_can} The hybrid
  GMM estimator satisfies
  \begin{equation}
       \hat{\phi}\ \xrightarrow{p}\ \phi_0,
  \end{equation}
     \begin{equation}
 \sqrt{T}\,(\hat{\phi}-\phi_0)\ \xrightarrow{d}\
      \mathcal{N}\!\bigl(0,\ \mathcal{V}_{NG}^{*}\bigr),
  \end{equation}
     \begin{equation}
          \mathcal{V}_{NG}^{*}=(\mathcal{I}^{*})^{-1}
      =\bigl(G_{NG}'\,\Sigma_{NG}^{-1}\,G_{NG}\bigr)^{-1}.
     \end{equation}
 
  \item \label{localtozero_effoverng}
  $\mathcal{V}_{NG}^{*}\preceq \mathcal{V}_{NG}
  \equiv (G_{NG}'\,S_{NN}^{-1}\,G_{NG})^{-1}$, the asymptotic variance of
  the optimal standalone NG-GMM estimator, with equality if the
  cross-block long-run covariance $S_{NP}=0$.
\end{enumerate}
\end{proposition}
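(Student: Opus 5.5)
The proof proceeds in three steps, one for each part of Proposition~\ref{prop:localtozero}, starting from the population Jacobian.

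\textbf{Step 1: the proxy Jacobian is $O(T^{-1/2})$.} Write $\varepsilon_t(\phi)=Q(\phi)'Q_0\varepsilon_t$. For $l\le k$ and $i\neq l$, the $(l,i)$ proxy coordinate is $\mathbb{E}[m_{l,t}\varepsilon_{i,t}(\phi)]$. This is linear in $\mathbb{E}[m_{t,T}\varepsilon_t']=T^{-1/2}(C,0)$, so its derivative in $\phi$ is $T^{-1/2}$ times a bounded smooth function of $\phi$. This gives $G_{P,T}=O(T^{-1/2})$ uniformly on the compact $\Phi$.

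\textbf{Step 2: the information limit in \ref{localtozero_infolimit}.} I will use the exact decomposition \eqref{eq:info_decomp}, $\mathcal{I}_{\mathrm{joint},T}=G_{P,T}'S_{PP}^{-1}G_{P,T}+\tilde G_{NG,T}'\Sigma_{NG}^{-1}\tilde G_{NG,T}$. This requires the blocks of $S$ to be fixed, or at least convergent. That holds because $S$ is a second-moment (long-run) object of $g_t(\phi_0)$. Under the Pitman drift, $m_{t,T}$ keeps a non-degenerate variance, so $S_{PP}\to S_{PP}^{\ast}\succ0$ and $S_{NP}$ converges; I take the limits as the $S$ in the statement. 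Then the first term is $O(T^{-1})$. The second term satisfies $\tilde G_{NG,T}=G_{NG}-S_{NP}S_{PP}^{-1}G_{P,T}\to G_{NG}$. Hence $\mathcal{I}_{\mathrm{joint},T}\to G_{NG}'\Sigma_{NG}^{-1}G_{NG}=\mathcal{I}^*$ in operator norm. Positive definiteness follows from Lemma~\ref{lem:ng_set}, which gives full column rank of $G_{NG}$, together with $\Sigma_{NG}\succ0$ as the Schur complement of the positive definite $S$ (regularity condition 3).

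\textbf{Step 3: consistency and asymptotic normality in \ref{localtozero_can}.} This is the main obstacle. Under the drifting sequence, the population proxy moments converge uniformly to $g_P^\infty(\phi)=(\mathbb{E}[m_l\varepsilon_i(\phi)])$ evaluated at zero relevance, which vanishes identically. The limiting criterion $g^\infty(\phi)'S^{-1}g^\infty(\phi)$ is therefore zero on the whole signed-permutation set $\mathcal{B}$ of Lemma~\ref{lem:ng_set}, not just at $\phi_0$. So global uniqueness cannot come from the limit criterion. I would first try to handle this by restricting $\Phi$ to the region selected by the sign normalization \ref{ass:norm_u} and the labeling implicit in Proposition~\ref{prop:identification_unified}: a compact neighbourhood isolating $\phi_0$ from the finitely many other elements of $\mathcal{B}$. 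On this region the limit criterion has the unique zero $\phi_0$, and a uniform LLN for $\alpha$-mixing triangular arrays (regularity condition 2) gives $\hat\phi\xrightarrow{p}\phi_0$ by the standard extremum argument. I would state this local selection explicitly as the interpretation of ``the conditions of Proposition~\ref{prop:identification_unified} hold.''

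For normality, I expand the first-order condition $\hat G_T(\hat\phi)'\hat S^{-1}\bar g_T(\hat\phi)=0$ around $\phi_0$. Here $\bar g_T(\phi_0)$ has mean zero exactly, since the exclusions hold at $\phi_0$ for every $T$, and a triangular-array CLT gives $\sqrt{T}\bar g_T(\phi_0)\Rightarrow N(0,S)$. The sample Jacobian converges to $G_\infty=(G_{NG}',0')'$. This yields $\sqrt{T}(\hat\phi-\phi_0)\Rightarrow N(0,(G_\infty'S^{-1}G_\infty)^{-1})$. By partitioned inversion, the $(N,N)$ block of $S^{-1}$ is $\Sigma_{NG}^{-1}$, so $G_\infty'S^{-1}G_\infty=G_{NG}'\Sigma_{NG}^{-1}G_{NG}=\mathcal{I}^*$.

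\textbf{Step 4: comparison with the standalone estimator in \ref{localtozero_effoverng}.} This is the matrix inequality $\Sigma_{NG}^{-1}\succeq S_{NN}^{-1}$. It follows from $\Sigma_{NG}=S_{NN}-S_{NP}S_{PP}^{-1}S_{PN}\preceq S_{NN}$ and antitonicity of inversion. Sandwiching with $G_{NG}$ gives $\mathcal{I}^*\succeq G_{NG}'S_{NN}^{-1}G_{NG}$, and inverting once more gives $\mathcal{V}^*_{NG}\preceq\mathcal{V}_{NG}$. If $S_{NP}=0$, then $\Sigma_{NG}=S_{NN}$ and the two variances are equal.
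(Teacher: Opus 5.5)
Your Steps 1, 2 and 4 reproduce the paper's proof. You use the same $T^{-1/2}$ scaling of $G_{P,T}$, with $S_{PP}$ kept nonsingular by the proxy's non-degenerate second moment, and the same decomposition \eqref{eq:info_decomp} with $\tilde G_{NG,T}\to G_{NG}$. You also use the same inequality $\Sigma_{NG}\preceq S_{NN}$ for part \ref{localtozero_effoverng}. Your normality step (Jacobian limit $(G_{NG}',0')'$ and $[S^{-1}]_{NN}=\Sigma_{NG}^{-1}$) is what the paper's appeal to Newey--McFadden's Theorem~3.4 delivers. Reading $S$ as its limit along the drifting sequence makes explicit something the paper leaves implicit.

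The real divergence is consistency, and here your diagnosis is correct where the paper's proof is not. The paper asserts that the limit criterion is uniquely minimized at $\phi_0$. But $\mathbb{E}[m_{t,T}\varepsilon_t(\phi)']=T^{-1/2}(C,\,0)\,Q_0'Q(\phi)\to 0$ uniformly in $\phi$, so the limit is the strict-irrelevance criterion of Corollary~\ref{cor:irrelevance}, which vanishes at every $\phi$ with $HQ(\phi)\in\mathcal{B}$.

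Your repair, however, is an additional hypothesis, not a reading of Proposition~\ref{prop:identification_unified}. The sign normalization fixes only signs. The column labeling in that proposition comes from requiring $\Psi P_{11}\Lambda_1$ to be diagonal, which is exactly the relevance information that degenerates here. For example, with $n=2$ and $k=1$, the constraint $B_{11}>0$ still admits a column-swapped element $B_0P\Lambda$ whenever $[B_0]_{12}\neq 0$.

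Nor can the restriction be dropped. At such a permuted $\phi_a$ the non-Gaussian population moments are exactly zero, while $\sqrt{T}\,\mathbb{E}[\bar g_{P,T}(\phi_a)]$ is a fixed nonzero vector (e.g.\ $\pm c_1$). So $T$ times the criterion has an $O_p(1)$ local minimum near $\phi_a$ that competes with the $O_p(1)$ minimum near $\phi_0$. The global minimizer over $\Phi$ therefore lands near $\phi_a$ with probability bounded away from zero, and $\hat\phi\xrightarrow{p}\phi_0$ fails as literally stated.

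You should therefore state the isolating neighbourhood (or a consistent preliminary selection of the element of $\mathcal{B}$) as an explicit assumption. Under it, your argument proves part \ref{localtozero_can}; parts \ref{localtozero_infolimit} and \ref{localtozero_effoverng} do not depend on it. A global consistency claim would instead require relevance drifting more slowly than $T^{-1/2}$, so that the proxy penalty at the alternatives dominates the sampling noise.
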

\noindent\textit{Proof.} See Appendix \ref{appendix:proof_localtozero}.

Apart from the consistency and asymptotic normality of the estimator under weak proxies in Part \ref{localtozero_can}, Part \ref{localtozero_effoverng} is another crucial result. A natural conjecture is that, as proxy relevance is weak, the hybrid
estimator should revert to the standalone NG-GMM, recovering the variance
$\mathcal{V}_{NG}$. Proposition \ref{prop:localtozero} shows this conjecture
to be conservative. The variance $\mathcal{V}_{NG}^*$ satisfies
$\mathcal{V}_{NG}^* \preceq \mathcal{V}_{NG}$ since the joint estimator continues to exploit the cross-block
long-run covariance $S_{NP}$ between the non-Gaussian and proxy moment
conditions, conditioning the higher-order estimating equations on the proxy moment
space. Hence, at any fixed proxy relevance, it provides efficiency through its covariance structure with the higher-order
moments, even when its direct identifying contribution is negligible. We further prove the uniform validity of the standard inference for impact matrix over the proxy strength.

\subsection{Uniform validity of standard inference}\label{subsec:uniformvalid}
Let $\gamma=(\phi_0,\theta_0,\Psi,F)\in\mathcal G$ denote the data-generating
process, where $F$ is the joint distribution of the structural shocks and the
proxy. All population objects below ($G_{NG}$, $\Sigma_{NG}$, $S$) are
functionals of $\gamma$.

\begin{assumption}[Uniform non-Gaussian identification]\label{as:uniform_ng}
Under Assumptions \ref{as:stable_var}--\ref{as:gau}, there exist constants $\underline g>0$, $0<\underline s\le\bar s<\infty$ and  $\underline\psi>0$ such that
inference is taken uniformly over the class of data-generating processes
\[
    \mathcal{P}=\Big\{\gamma:\
    \sigma_{\min}\!\big(G_{NG}(\phi_0)\big)\ge \underline g,\
    \operatorname{eig}\!\big(S(\gamma)\big)\subset[\underline s,\bar s],\
    \min_{l\le k}|\psi_l|\ \ge\ \underline\psi \Big\}.
    \]
\end{assumption}
where, $\sigma_{\min}(\cdot)$ is the smallest singular value and
$\operatorname{eig}(\cdot)$ is the set of eigenvalues. The proxy is relevant\footnote{In the Appendix, Corollary \ref{cor:irrelevance} establishes that strict proxy irrelevance ($\Psi = 0$)
collapses identification to the non-Gaussian identified set $\mathcal{B}$. So indeed, the proxy is still required to select the target column, but only out of a discrete (not a continuum) number of choices. This allows the proxy relevance to be arbitrarily close to zero and NG moment block anchors the identification with uniformly valid standard inference.} but $\underline\psi>0$ can be arbitrarily small. Thus, the non-Gaussian block is required to be {uniformly} strongly
identifying, while the proxy block is allowed to be arbitrarily weak. Lemma~\ref{lem:ng_set} already guarantees
$\sigma_{\min}(G_{NG}(\phi_0))>0$ under Assumptions~\ref{as:indep}-\ref{as:gau};
Assumption~\ref{as:uniform_ng} strengthens this to a uniform positive lower bound
over $\mathcal{P}$.

\begin{lemma}[Uniform information bound]\label{lem:uniform_info}
Under Assumption \ref{as:uniform_ng},
\[
   \inf_{\gamma\in\mathcal P}\ \lambda_{\min}\!\big(G'S^{-1}G\big)\ \ge\
   \underline\kappa\ \equiv\ \underline g^{\,2}/\bar s\ >\ 0 .
\]
\end{lemma}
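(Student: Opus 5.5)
We work for fixed $\gamma\in\mathcal P$, obtain a lower bound that involves only the constants of Assumption~\ref{as:uniform_ng}, and then take the infimum over $\mathcal P$.

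The first step bounds the inverse weighting matrix from below. Assumption~\ref{as:uniform_ng} gives $\operatorname{eig}(S(\gamma))\subset[\underline s,\bar s]$. Since $S$ is symmetric positive definite, this implies $S^{-1}\succeq \bar s^{-1}I_q$. Hence, for any unit vector $v\in\mathbb R^{p_\phi}$,
\[
v'G'S^{-1}Gv\ \ge\ \bar s^{-1}\,\|Gv\|^2 .
\]
This step does not use the block structure of $S$, so the cross-block covariance $S_{NP}$ plays no role.

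The second step separates the non-Gaussian block. By \eqref{eq:jacobian}, $G$ stacks $G_{NG}$ on top of $G_P$, so
\[
\|Gv\|^2=\|G_{NG}v\|^2+\|G_Pv\|^2\ \ge\ \|G_{NG}v\|^2\ \ge\ \sigma_{\min}(G_{NG}(\phi_0))^2\ \ge\ \underline g^{\,2}.
\]
The middle inequality holds because $G_{NG}$ has full column rank (Lemma~\ref{lem:ng_set}), so $\sigma_{\min}$ gives the minimum of $\|G_{NG}v\|$ over unit vectors. We drop the proxy term on purpose. As a result, the bound needs nothing about $G_P$ and nothing about $\underline\psi$, which is exactly why it holds uniformly over proxy strength.

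Combining the two steps gives $\lambda_{\min}(G'S^{-1}G)=\min_{\|v\|=1}v'G'S^{-1}Gv\ge \underline g^{\,2}/\bar s$ for every $\gamma\in\mathcal P$. Taking the infimum over $\mathcal P$ yields $\underline\kappa$.

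No step is a real obstacle. The one point to check is that the constants $\underline g$ and $\bar s$ hold uniformly over $\mathcal P$ and do not depend on $\gamma$. That is exactly what the definition of $\mathcal P$ in Assumption~\ref{as:uniform_ng} states. It also matters that $\bar s$ bounds the full long-run variance $S$ used in the criterion, and not only its block $S_{NN}$.

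One could instead try to work through the decomposition \eqref{eq:info_decomp}. That route would require a lower bound on $\tilde G_{NG}$ rather than on $G_{NG}$, and Assumption~\ref{as:uniform_ng} does not supply one. The direct argument above avoids this.
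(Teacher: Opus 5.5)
Your proof is correct, but it takes a different and more elementary route than the paper. The paper does argue through an information decomposition, but through the non-Gaussian-marginal version, with the roles of the two blocks swapped relative to \eqref{eq:info_decomp}:
$G'S^{-1}G = G_{NG}'S_{NN}^{-1}G_{NG} + \tilde G_P'\Sigma_P^{-1}\tilde G_P \succeq G_{NG}'S_{NN}^{-1}G_{NG}$.
It then bounds $S_{NN}^{-1}\succeq\lambda_{\max}(S_{NN})^{-1}I_{q_{NG}}\succeq\bar s^{-1}I_{q_{NG}}$, using implicitly that a principal submatrix satisfies $\lambda_{\max}(S_{NN})\le\lambda_{\max}(S)$.

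Your closing remark is therefore right about \eqref{eq:info_decomp} itself, which would require a bound on $\tilde G_{NG}$. The decomposition route does work, however, once $G_{NG}$ rather than $G_P$ is the marginal block.

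You instead bound $S^{-1}\succeq\bar s^{-1}I_q$ for the whole matrix and drop $\|G_Pv\|^2$ from the stacked norm. This needs no block inversion, no Schur complements and no interlacing.

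What the paper's route buys is the intermediate inequality $G'S^{-1}G\succeq G_{NG}'S_{NN}^{-1}G_{NG}$: the joint information dominates the standalone non-Gaussian information. Its minimum eigenvalue $\lambda_{\min}(G_{NG}'S_{NN}^{-1}G_{NG})$ is exactly the anchor in the denominator of the bias bound in Proposition~\ref{prop:bias_bound_local}, so one decomposition serves both results. Your intermediate bound $\bar s^{-1}G'G$ keeps the proxy contribution but discards the structure of $S$. It does not give that comparison, though it reaches the same constant $\underline g^{\,2}/\bar s$.

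One small correction: the identity $\min_{\|v\|=1}\|G_{NG}v\|=\sigma_{\min}(G_{NG})$ holds for any $q_{NG}\times p_\phi$ matrix with $q_{NG}\ge p_\phi$. It does not need Lemma~\ref{lem:ng_set}; positivity comes from the requirement $\sigma_{\min}(G_{NG})\ge\underline g$ in the definition of $\mathcal P$.
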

\begin{proposition}[Uniform validity of Wald inference over proxy strength]\label{prop:uniform}
Suppose Assumption \ref{as:uniform_ng} and the regularity conditions of
Section \ref{ass:regularity} hold. Let $R$ be a fixed $d\times p_\phi$ matrix of
full row rank, and let $\hat{\mathcal V}_\phi$ be the generated-regressor
corrected variance estimator of \eqref{eq:vphi_est}. Then the Wald statistic
\[
   W_T \;=\; T\,(R\hat\phi-r)'\,\big(R\,\hat{\mathcal V}_\phi\,R'\big)^{-1}\,(R\hat\phi-r)
\]
is asymptotically $\chi^2_d$ \emph{uniformly over proxy strength}:
\[
   \lim_{T\to\infty}\ \sup_{\gamma\in\mathcal P}\
   \sup_{x\in\mathbb R}\
   \big|\,\mathbb P_\gamma(W_T\le x)-F_{\chi^2_d}(x)\,\big|\;=\;0 .
\]
Consequently, the Wald confidence set $\mathcal C_{1-\alpha}=\{r:W_T(r)\le \chi^2_{d,1-\alpha}\}$
has correct uniform asymptotic size,
$\displaystyle \lim_{T\to\infty}\inf_{\gamma\in\mathcal P}\mathbb P_\gamma\big(R\phi_0\in\mathcal C_{1-\alpha}\big)=1-\alpha.$
\end{proposition}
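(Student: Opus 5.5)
The plan is to reduce uniform convergence of the Wald statistic to a sequential statement and then run the standard GMM argument along every drifting sequence. By the usual subsequence device (as in Andrews, Cheng and Guggenberger), the $\sup_{\gamma\in\mathcal P}\sup_x$ statement is equivalent to the following claim. For every sequence $\gamma_T\in\mathcal P$ and every subsequence, there is a further subsequence along which $W_T\xrightarrow{d}\chi^2_d$. By Pólya's theorem, pointwise convergence of distribution functions to the continuous limit $F_{\chi^2_d}$ then upgrades to uniformity in $x$.

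Along such a sequence, compactness lets me pass to subsequences on which all population functionals converge:
\begin{itemize}
\item $G_{NG}(\phi_{0,T})\to G_{NG}^\infty$;
\item $G_{P,T}\to G_P^\infty$, where the limit may be $0$ if $\Psi_T\to 0$;
\item $S(\gamma_T)\to S^\infty$.
\end{itemize}
The bounds in Assumption~\ref{as:uniform_ng} are closed, so they survive in the limit. Hence $\sigma_{\min}(G_{NG}^\infty)\ge\underline g$ and $\operatorname{eig}(S^\infty)\subset[\underline s,\bar s]$, and by Lemma~\ref{lem:uniform_info} the limiting information satisfies $G^{\infty\prime}(S^\infty)^{-1}G^\infty\succeq\underline\kappa I$.

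Next I establish consistency and asymptotic normality along $\gamma_T$. Consistency needs a uniform identifiability margin. Globally, the NG block alone only pins $\phi$ down to the finite set induced by $\mathcal B$ (Lemma~\ref{lem:ng_set}), and the proxy must separate $\phi_0$ from the other elements. Since $\min_l|\psi_l|\ge\underline\psi$, the proxy moments at any wrong signed permutation differ from zero by at least a multiple of $\underline\psi$, because a wrong column picks up $\mathbb E[m_l\varepsilon_{\mathrm{target}}]$. Combining this with the local margin supplied by $\sigma_{\min}(G_{NG})\ge\underline g$ yields $\inf_{\|\phi-\phi_{0,T}\|\ge\eta}\|g_T(\phi)\|\ge c(\eta)>0$ uniformly. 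A uniform LLN for triangular $\alpha$-mixing arrays, using regularity condition 2 with the uniform moment bound, then gives $\hat\phi-\phi_{0,T}\xrightarrow{p}0$.

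For the linearization, I use a triangular-array CLT for $T^{-1/2}\sum_t(g_t+\Gamma_\theta\zeta_t)$; its long-run variance is bounded and converges along the subsequence. A mean-value expansion then gives
\[
\sqrt T(\hat\phi-\phi_{0,T})=-(G_T'S_T^{-1}G_T)^{-1}G_T'S_T^{-1}\,T^{-1/2}\textstyle\sum_t(g_t+\Gamma_\theta\zeta_t)+o_p(1),
\]
where the inverse is uniformly bounded by $\underline\kappa^{-1}$. This holds whether $G_{P,T}$ stays bounded away from zero or vanishes.

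Finally, I show $\hat{\mathcal V}_\phi-\mathcal V_{\phi,T}\xrightarrow{p}0$. This follows from consistency of the HAC estimator $\hat S$ under mixing, of $\hat G$, $\hat\Gamma_\theta$ and $\hat{\mathcal C}$ by the ULLN and consistency of $\hat\phi$, and of $\hat{\mathcal V}_\theta$ by standard OLS theory. Because $R$ has full row rank and $\mathcal V_{\phi,T}$ has eigenvalues bounded below uniformly, $R\mathcal V_{\phi,T}R'$ is uniformly invertible. I would obtain this lower bound from $\Omega_\psi$, taking it positive definite by a regularity argument. The continuous mapping theorem then gives $W_T\xrightarrow{d}\chi^2_d$ along the subsequence, and the coverage statement follows by evaluating $W_T$ at $r=R\phi_{0,T}$.

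The main obstacle is the uniform consistency step, specifically the global separation margin when $\underline\psi$ is small but fixed. I would handle it by working on the discrete set $\mathcal B$ directly. In a neighbourhood of each spurious element, $\|g_P\|$ is bounded below by a constant times $\underline\psi$. Away from these neighbourhoods, $\|g_{NG}\|$ is bounded below using continuity and the NG identification. The remaining care is ensuring these constants do not degenerate over $\mathcal P$, for example by keeping the set of shock distributions tight via the uniform moment bounds.
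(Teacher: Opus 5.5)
Your overall route is the paper's. The sketch in Appendix~\ref{appendix:proof_uniform} also feeds Lemma~\ref{lem:uniform_info} into uniform consistency, linearization, a triangular-array CLT and Studentization, and closes with the generic results of \citet{andrews_generic_2020}; that is exactly your drifting-subsequence plus P\'olya device. You are also right that the paper's ``separation from the curvature bound'' is only a local statement and cannot by itself give a well-separated minimiser.

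The step you use to repair this fails, however. The proxy block contains only exclusion moments. Take any $\phi$ with $Q(\phi)=Q_0P\Lambda$, where $P\Lambda$ only flips signs of target columns, or permutes and flips non-target columns among themselves. Then $\mathbb E[m_{l,t}\varepsilon_{i,t}(\phi)]=\pm\mathbb E[m_{l,t}\varepsilon_{\pi(i),t}]=0$ for every excluded $i$, and the co-skewness and co-kurtosis moments vanish as well. These are exact zeros of the whole population moment vector, and many lie inside $SO(n)$, e.g.\ $Q_0\operatorname{diag}(-1,-1,1,\dots,1)$. So $\|g_P\|$ is not bounded below by a multiple of $\underline\psi$ near every spurious element of $\mathcal B$, and no margin $c(\eta)>0$ of the kind you state can exist. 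Such points can only be excluded by imposing the sign normalization $B_{ii}>0$ of Proposition~\ref{prop:identification_unified} as a restriction on the parameter space. For $1\le k<n-1$ even that is not enough: $\phi_0$ is then only set identified (Proposition~\ref{prop:identification_unified}\ref{res:residual}), so $\hat\phi\to\phi_0$ fails for a general $R$. You would need a labelling of the non-target columns, or the assumption $k\ge n-1$.

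Even with the normalization imposed, two margins are uncontrolled.
\begin{itemize}
\item The distance from the normalized parameter space to the sign-flipped zeros is at most of order $\min_{i\le k}[B_0]_{ii}$.
\item The NG margin away from $\mathcal B$ comes, via Lemma~\ref{lem:ng_set}, from the co-kurtosis block and the common-sign excess kurtoses.
\end{itemize}
Nothing in $\mathcal P$ bounds either quantity away from zero. The condition $\sigma_{\min}(G_{NG})\ge\underline g$ is a local skewness condition. Tightness of the shock distributions does not stop these quantities from degenerating along a sequence $\gamma_T\in\mathcal P$.

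Uniform consistency therefore needs extra conditions written into $\mathcal P$. Examples are $\min_{i\le k}[B_0]_{ii}\ge\underline b>0$ together with a lower bound on $|\kappa_{4,m}|$ for the common-sign shocks. Alternatively, impose directly a uniform well-separatedness condition for the population criterion on the normalized parameter space.

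A similar gap affects your last step. Pointwise positive definiteness of $\Omega_\psi$ does not make $R\mathcal V_{\phi,T}R'$ uniformly invertible. You need $\inf_{\gamma\in\mathcal P}\lambda_{\min}(\Omega_\psi)>0$ and a uniform bound on $\|G\|$; the paper also asserts this rather than deriving it.
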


\noindent\textit{Proof.} See Appendix \ref{appendix:proof_uniform}

Proposition~\ref{prop:uniform} is the inferential counterpart of
Proposition~\ref{prop:localtozero} where the non-Gaussian block restores {standard} inference of the point estimates,
uniformly in the strength of the instrument. Because the non-Gaussian
Jacobian is uniformly strongly identifying over $\mathcal P$, the joint information
matrix $G'S^{-1}G$ stays bounded away from singularity even under local-to-zero proxy relevance. The estimator, therefore, remains asymptotically
normal and the generated-regressor corrected variance estimator
$\hat{\mathcal V}_\phi$ is uniformly consistent. Hence, as long as the
structural shocks are non-Gaussian\footnote{See Appendix \ref{appendix:evidencelit} for its empirical standing.}, the practitioner may report conventional
delta-method standard errors for the structural parameters without recourse to grid-based identification-robust methods.

\begin{remark}[The case of doubly-weak identification]
Proposition \ref{prop:uniform} is uniform over proxy strength while the
non-Gaussian block is uniformly strongly identifying
($\sigma_{\min}(G_{NG})\ge\underline g$). If non-Gaussianity is itself weak or
absent ($\sigma_{\min}(G_{NG})\to0$, i.e. near-Gaussian shocks), the DGP leaves
$\mathcal P$, both identification channels degenerate, and the limit is
non-standard; Wald inference is then not uniformly valid. For that regime we can rely
on the identification-robust Anderson--Rubin confidence sets which retain correct coverage uniformly
over the strength of {both} channels \citep{stock_gmm_2000,
montiel_olea_inference_2021}.
\end{remark}

\subsection{Bounded asymptotic bias under local proxy endogeneity}\label{subsec:proxycontam}
In applied macroeconomic and financial research, external instruments may be
endogenous by narrative mis-classification, or the
inadvertent capture of confounding information, as discussed in
\citet{bauer_reassessment_2022, miranda-agrippino_transmission_2021}. When a
proxy co-varies with non-target shocks, the orthogonality condition fails and
the proxy moment conditions are contaminated, i.e. the population expectation
$\mathbb{E}[g_P(\phi_0)] = \mu \neq 0$, inducing a non-vanishing asymptotic
bias in the structural parameter estimates.

The hybrid GMM framework permits a precise characterization of this bias and
its determinants. Because identification operates through two distinct
channels, the asymptotic bias is not governed solely by the degree of proxy
endogeneity but sufficiently strong non-Gaussian identification attenuates the
bias. The following proposition makes this
relationship explicit.

\begin{proposition}[Asymptotic bias bound under local proxy endogeneity]\label{prop:bias_bound_local}
Suppose Assumptions~\ref{as:indep}--\ref{as:gau} hold, so $G_{NG}(\phi_0)$ has
full column rank at $\phi_0$, and let the estimator use the efficient weight
$W=S^{-1}$. The proxy moment block is locally contaminated,
\[
   \mathbb{E}[g_P(\phi_0)]=\mu_T=c/\sqrt{T},\quad \|c\|<\infty,
   \qquad \mathbb{E}[g_{NG}(\phi_0)]=0 .
\]
With the partialed proxy Jacobian $\tilde G_P=G_P-S_{PN}S_{NN}^{-1}G_{NG}$ and the
Schur complement $\Sigma_P=S_{PP}-S_{PN}S_{NN}^{-1}S_{NP}$, let $\phi_{*,T}$
minimize $\mathcal Q_T(\phi)=\mathbb{E}[g(\phi)]'S^{-1}\mathbb{E}[g(\phi)]$ and
$\mathbb{B}_T=\sqrt{T}(\phi_{*,T}-\phi_0)$. Then
\begin{equation}
  \mathbb{B}_T=-(G'S^{-1}G)^{-1}\,\tilde G_P'\,\Sigma_P^{-1}\,c+O(T^{-1/2}),
  \label{eq:local_bias}
\end{equation}
\begin{equation}
  \sqrt{T}(\hat\phi-\phi_0)\xrightarrow{d}
  \mathcal{N}\!\Big(-(G'S^{-1}G)^{-1}\tilde G_P'\Sigma_P^{-1}c,\ (G'S^{-1}G)^{-1}\Big),
  \label{eq:local_limit}
\end{equation}
so $\hat\phi$ is consistent ($\phi_{*,T}\to\phi_0$). The Euclidean norm of the
scaled bias is bounded by
\begin{equation}
  \|\mathbb{B}_T\|_2\ \le\
  \frac{\big\|\tilde G_P'\,\Sigma_P^{-1}\,c\big\|_2}
       {\lambda_{\min}\!\big(G_{NG}'S_{NN}^{-1}G_{NG}\big)}
  \ +\ O(T^{-1/2}).
  \label{eq:local_bias_bound}
\end{equation}
\end{proposition}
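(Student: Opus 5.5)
The plan is to treat $\phi_{*,T}$ as the pseudo-true value of a locally misspecified GMM problem and linearize the population first-order condition around $\phi_0$. Write $\bar g(\phi)=\mathbb{E}[g(\phi)]$ so that $\bar g(\phi_0)=(0',\mu_T')'$. Since $G_{NG}$ has full column rank by Lemma~\ref{lem:ng_set}, $G'S^{-1}G\succeq G_{NG}'S_{NN}^{-1}G_{NG}\succ 0$. Hence $\mathcal Q_T$ has a locally unique minimizer near $\phi_0$.

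I would first establish consistency, $\phi_{*,T}\to\phi_0$. The limit criterion $\mathcal Q_\infty(\phi)=\bar g_0(\phi)'S^{-1}\bar g_0(\phi)$, where $\bar g_0$ is the moment function under exact exogeneity, is uniquely minimized at $\phi_0$ on the compact $\Phi$ by Proposition~\ref{prop:identification_unified}. Because $\mu_T\to0$ uniformly, $\mathcal Q_T\to\mathcal Q_\infty$ uniformly, and the argmin converges.

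Next comes the linearization. The first-order condition is $G(\phi_{*,T})'S^{-1}\bar g(\phi_{*,T})=0$. Expand $\bar g(\phi_{*,T})=\bar g(\phi_0)+G(\phi_0-\phi_{*,T})\cdot(-1)+O(\|\phi_{*,T}-\phi_0\|^2)$ and $G(\phi_{*,T})=G+O(\|\phi_{*,T}-\phi_0\|)$. This gives $\phi_{*,T}-\phi_0=-(G'S^{-1}G)^{-1}G'S^{-1}(0',\mu_T')'+O(\|\mu_T\|^2)$. A preliminary step shows $\|\phi_{*,T}-\phi_0\|=O(T^{-1/2})$ by positive definiteness of the Hessian. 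Multiplying by $\sqrt T$ gives $\mathbb B_T=-(G'S^{-1}G)^{-1}G'S^{-1}(0',c')'+O(T^{-1/2})$.

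The key algebraic step is to identify $G'S^{-1}(0',c')'=\tilde G_P'\Sigma_P^{-1}c$. I would use the partitioned inverse of $S$ with $\Sigma_P$ as the Schur complement of $S_{NN}$. The $P$-column of $S^{-1}$ is $\bigl(-S_{NN}^{-1}S_{NP}\Sigma_P^{-1};\ \Sigma_P^{-1}\bigr)$, so $G'S^{-1}(0',c')'=(G_P-S_{PN}S_{NN}^{-1}G_{NG})'\Sigma_P^{-1}c$, which is the mirror of the decomposition \eqref{eq:info_decomp}. This proves \eqref{eq:local_bias}.

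For \eqref{eq:local_limit}, I would follow the standard local-misspecification argument. Write $\sqrt T\,\bar g_T(\phi_0)=\sqrt T(\bar g_T(\phi_0)-\bar g(\phi_0))+(0',c')'$. The centered term converges to $\mathcal N(0,S)$ under the mixing CLT of the regularity conditions, and $\hat S\to S$ because the drift is $o(1)$. The usual mean-value expansion of the sample first-order condition then yields the normal limit with mean $\lim\mathbb B_T$ and variance $(G'S^{-1}G)^{-1}$. I would state this for known $u_t$, as in \eqref{eq:can}.

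Finally, the bound \eqref{eq:local_bias_bound} follows from $\|A^{-1}v\|\le\|v\|/\lambda_{\min}(A)$ with $A=G'S^{-1}G$. By the same partitioned-inverse identity, $A=G_{NG}'S_{NN}^{-1}G_{NG}+\tilde G_P'\Sigma_P^{-1}\tilde G_P\succeq G_{NG}'S_{NN}^{-1}G_{NG}$, so $\lambda_{\min}(A)\ge\lambda_{\min}(G_{NG}'S_{NN}^{-1}G_{NG})$.

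I expect the main obstacle to be the rigorous $O(T^{-1/2})$ remainder. It requires controlling second derivatives of $\bar g$ uniformly near $\phi_0$, which finite fourth moments and smoothness of $Q(\phi)$ provide. It also requires ensuring that no spurious minimizer in the signed-permutation set $\mathcal B$ competes. I would rule this out via the consistency step, since the proxy relevance $\Psi$ is fixed here and separates $\phi_0$ from the other elements of $\mathcal B$.
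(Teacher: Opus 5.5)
Your proposal follows essentially the same route as the paper. You linearize the population first-order condition $G(\phi_{*,T})'S^{-1}\mathbb{E}[g(\phi_{*,T})]=0$, use the partitioned inverse of $S$ built on $\Sigma_P$ to obtain $G'S^{-1}(0',c')'=\tilde G_P'\Sigma_P^{-1}c$, take the normal limit from the recentered CLT with the drift entering as the non-centrality, and bound the bias through the non-Gaussian-marginal decomposition $G'S^{-1}G=G_{NG}'S_{NN}^{-1}G_{NG}+\tilde G_P'\Sigma_P^{-1}\tilde G_P$. Your explicit argmin-consistency step for $\phi_{*,T}$, with fixed relevance excluding the other elements of $\mathcal B$, is a useful tightening: the paper simply asserts $\phi_{*,T}-\phi_0=O(T^{-1/2})$ from $\mathbb{E}[g(\phi_0)]=O(T^{-1/2})$ and then reads $\phi_{*,T}\to\phi_0$ off that rate, which is circular.
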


\noindent\textit{Proof.} See Appendix \ref{appendix:bias_bound_local}.
The two sides of \eqref{eq:local_bias_bound} separate the problem from the remedy. The numerator $\big\|\tilde G_P'\,\Sigma_P^{-1}\,c\big\|_2$ is the raw contamination, transmitted only through the proxy moment conditions. The denominator $\lambda_{\min}\!\big(G_{NG}'S_{NN}^{-1}G_{NG}\big)$ is the curvature the non-Gaussian block alone contributes to the population criterion. The bias ceiling is therefore strictly decreasing in the strength of non-Gaussian identification: as $\lambda_{\min}\!\big(G_{NG}'S_{NN}^{-1}G_{NG}\big)$ grows, the higher-order moment conditions anchor the estimator more firmly at $\phi_0$ and the distortion from the mis-specified proxy shrinks. A proxy-only estimator has no such denominator. There $\mathcal{I}_{NG} = 0$, and the analogous bound is unbounded whenever $c \neq 0$.

\section{Specification tests}\label{sec:tests}

The asymptotic theory of Section \ref{sec:inference} assumes valid
specification of both the non-Gaussian moment block and the proxy exogeneity
conditions. In practice, neither can be taken as given. Proxy exogeneity is
not directly testable in isolation. The exclusion restriction
$\mathbb{E}[m_{l,t}\varepsilon_{i,t}] = 0$ for $i \neq \mathrm{target}(l)$
involves unobserved structural shocks and cannot be evaluated without auxiliary
identifying assumptions. 

The hybrid GMM framework of Section \ref{sec:method} resolves the attribution
problem that may affect each identification channel individually. As the
joint moment vector $\bar{g}_T(\phi)$ is over-identified beyond the $p_\phi =
n(n-1)/2$ rotation angles, the framework admits formal tests of the proxy
exogeneity conditions conditional on the higher-order moments of the structural shocks, and
vice versa. This section motivates and constructs the mutually orthogonal test statistics, establishes their
asymptotic distributions, and describes the bootstrap procedure used to
obtain valid finite-sample critical values. Specifically, we construct three test statistics: $J_{\mathrm{joint}}$, $J_{\mathrm{prx}}$, and $J_{NG}$.

\subsection{The joint null hypothesis}\label{sec:joint_null}

Let $\bar{g}_T(\phi)$ denote the sample moment vector of \eqref{eq:moment_conditions} with $q = q_{NG} + q_P$ components. The
joint null hypothesis assumes that the proxy instruments satisfy the exogeneity
conditions \eqref{eq:proxy_conditions} and that the structural
innovations satisfy the non-Gaussian moment conditions (\ref{as:indep} and \ref{as:gau}). Under this null, evaluated at the efficient
two-step estimator $\hat{\phi}$ of Section \ref{sec:method}, the
over-identifying restrictions test statistic of \citet{hansen_1982} is
\begin{equation}
    J_{\mathrm{joint}}
    = T\cdot\bar{g}_T(\hat{\phi})'\,\hat{S}^{-1}\,\bar{g}_T(\hat{\phi})
    \xrightarrow{d} \chi^2(q - p_\phi),
    \label{eq:j_joint}
\end{equation}
where $\hat{S}$ is the consistent long-run covariance estimator from
\eqref{eq:vphi_est}. The degrees of freedom equal $q - p_\phi =
(q_{NG} + q_P) - p_\phi$. Rejection of \eqref{eq:j_joint} signals mis-specification
of at least one moment block, but it does not identify its origin. We develop conditional orthogonal tests to address this lack of attribution.

\subsection{Neyman-Orthogonal tests}\label{sec:subset_tests}

Identification through non-Gaussianity recovers the rotation matrix $Q(\phi)$ only up
to right-multiplication by any element of the signed permutation group
$\mathcal{SP}(n)$, so the unrestricted non-Gaussian estimator, which excludes
the proxy moment conditions, admits $2^n \cdot n!$ observationally equivalent
solutions and does not possess a unique probability limit. The standard incremental $J$-statistic requires the unrestricted estimator
to be asymptotically normal around a unique $Q(\phi_0)$. Without a column anchor provided by proxy exogeneity conditions to select a single element of the $\mathcal{SP}(n)$-orbit, this
condition fails and the $\chi^2$ limit distribution of the incremental $J$-statistic is not
restored. Therefore, the standard test is invalid in this setting.

Instead, we propose to construct the test statistics from the same joint GMM estimator $\hat{\phi}$, which is point-identified under the joint null and has a stable representation. The key challenge is that the raw sample moments $\bar{g}_{NG,T}$ and $\bar{g}_{P,T}$ are correlated through the shared data, identification space and the common parameter estimate $\hat{\phi}$, so their naive quadratic forms do not have standard chi-squared limits under the null.
We address these difficulties by constructing mutually orthogonalized score statistics following \citet{newey_generalized_1985}.
The notations for the partitioned moment vectors, variance, and Jacobian matrices remain the same as Section \ref{subsec:twostepgmm}.

To construct a test for the proxy conditions that
is asymptotically insensitive (under the null) to the higher-order moment conditions, we partial out $\bar{g}_{P,T}$ with respect to the linear span of $\bar{g}_{NG,T}$. This yields the orthogonalized proxy moment vector
\begin{equation}
    \tilde{g}_P = \bar{g}_{P,T} - S_{PN}S_{NN}^{-1}\bar{g}_{NG,T}.
    \label{eq:gphat}
\end{equation}
Similar to the decomposition in Proposition \ref{prop:efficiency}, equation \eqref{eq:gphat} is the GMM analog of the Frisch-Waugh-Lovell
theorem. The coefficient $S_{PN}S_{NN}^{-1}$ is the best linear predictor of
$g_P$ given $g_{NG}$ under the long-run variance $S$, so subtracting it
removes the component of $\bar{g}_{P,T}$ linearly predictable from $\bar{g}_{NG,T}$.
The resulting score satisfies the Neyman orthogonality condition:
\begin{equation}
    \frac{\partial}{\partial\theta_{NG}}
    \,\mathbb{E}\!\left[\tilde{g}_P(\phi_0,\theta_{NG})\right] = 0,
    \label{eq:neyman}
\end{equation}
where $\theta_{NG}$ parameterizes local perturbations of the non-Gaussian
block. Condition \eqref{eq:neyman} ensures that first-order mis-specification
or estimation error in the non-Gaussian block does not contaminate the
limiting distribution of the test statistic for the proxy conditions.
An identical construction delivers the orthogonalized non-Gaussian moment
vector:
\begin{equation}
    \tilde{g}_{NG} = \bar{g}_{NG,T} - S_{NP}S_{PP}^{-1}\bar{g}_{P,T}.
    \label{eq:gnghat}
\end{equation}

Both $\tilde{g}_P$ and $\tilde{g}_{NG}$ are evaluated at the jointly
estimated $\hat{\phi}$, so parameter estimation uncertainty must
be accounted for in their asymptotic variances. 
The asymptotic variance of $T^{1/2}\tilde{g}_P$ is governed by the conditionally
projected Jacobian $\tilde{G}_P = G_P - S_{PN}S_{NN}^{-1}G_{NG}$ and the Schur
complement $\Sigma_P = S_{PP} - S_{PN}S_{NN}^{-1}S_{NP}$ introduced in
Proposition~\ref{prop:bias_bound_local} (the proxy-block counterparts of
$\tilde{G}_{NG}$ in \eqref{eq:GNG_tilde} and $\Sigma_{NG}$ in \eqref{eq:info_decomp}).
Here, $\tilde{G}_P$ measures the sensitivity of $\tilde{g}_P$ to perturbations in
$\phi$ after netting out the component collinear with the non-Gaussianity block,
while $\Sigma_P$ captures the marginal long-run variance of the orthogonalized
proxy moments after projecting out that block. This variance consists of two
parts: the Schur complement $\Sigma_P$, less a correction for estimation error in
$\hat{\phi}$ via the delta method. 
The full restricted variance matrix is then
\begin{equation}
    \mathcal Q_P = \underbrace{\Sigma_P}_{\text{Schur complement}}
          \;-\;
          \underbrace{\tilde{G}_P\bigl(G'\hat{S}^{-1}G\bigr)^{-1}\tilde{G}_P'}_{
          \text{parameter estimation correction}}.
    \label{eq:QPP}
\end{equation}
The second term subtracts from $\Sigma_P$ the variance asymptotically
attributable to the sampling fluctuation of $\hat{\phi}$ through
$\tilde{G}_P$. $\mathcal Q_P$ equals the Schur
complement of the $(q_{NG} \times q_{NG})$ block of the full efficient
weight matrix $\hat{S}^{-1}$ after projecting out the parameter estimation
variance, delivering the minimum-variance matrix for the test statistic.
The symmetric construction yields $\mathcal Q_{NG}$.
Formally, the two specification tests are
\begin{equation}
    J_{\mathrm{prx}} = T\,\tilde{g}_P'\,\mathcal Q_P^{+}\,\tilde{g}_P,
    \qquad
    J_{NG} = T\,\tilde{g}_{NG}'\,\mathcal Q_{NG}^{+}\,\tilde{g}_{NG},
    \label{eq:subset_stats}
\end{equation}
where $(\cdot)^+$ denotes the Moore-Penrose pseudo-inverse.

\subsection{Limiting distributions}\label{subsec:limitingdist}
 To test the validity of the proxy exogeneity conditions, we require the maintained assumption that the non-Gaussian moment block identifies $\phi_0$ i.e., formally:
 \begin{equation}
    H_0^{prx}: \mathbb{E}(\tilde{g}_P) = 0 | rank({G}_{NG}) = p_\phi
 \end{equation}
 under the assumptions \ref{as:indep} and \ref{as:gau}. When evaluated at the joint efficient estimator $\hat{\phi}$, the orthogonalized sample moments $T^{1/2}\tilde{g}_P$ converge in distribution to $\mathcal{N}(0, \mathcal Q_P)$. The variance matrix $\mathcal Q_P$, which explicitly accounts for the parameter estimation error through the projected Jacobian $\tilde{G}_P$, has generic rank $q_P$. Therefore, with analogous reasoning for the non-Gaussianity block, both test statistics satisfy:

\begin{equation}\label{eq:subset_chi2}
J_{\mathrm{prx}} \xrightarrow{d} \chi^2(q_P), \qquad J_{\mathrm{NG}}  \xrightarrow{d} \chi^2(q_{\mathrm{NG}}).
\end{equation}
The degrees of freedom equal the number of restrictions in each specific block: $q_P$ and $q_{NG}$ respectively. The penalty for estimating the $p_\phi$ parameters is already integrated into the construction of $\mathcal Q_P$ and $Q_{\mathrm{NG}}$ via the second term in equation \eqref{eq:QPP}. This allows the generalized inverse to operate on the full $q_P$-dimensional (or $q_{NG}$-dimensional) covariance structure. While the statistics are asymptotically insensitive under the joint null, mis-specification under the alternative, (for instance, endogenous instruments or dependent structural shocks) in one block will transmit contamination to both statistics through the shared parameter estimate $\hat{\phi}$. This property is utilized as a diagnostic tool in
Section \ref{sec:results}. We formalize the limiting distributions in the following proposition:

\begin{proposition}[Limiting distributions of the test statistics]
\label{prop:limitdist}
Under Assumptions \ref{as:validproxy}-\ref{as:gau} where $G_{NG}$ has full column rank, with $H_0\colon
\mathbb{E}[g(\phi_0)] = 0$, and consistent estimators
$\hat{\mathcal Q}_P \xrightarrow{p} \mathcal Q_P$ and
$\hat{\mathcal Q}_{NG} \xrightarrow{p} \mathcal Q_{NG}$:
\begin{enumerate}[\upshape(i)]
  \item $J_{\mathrm{prx}} \equiv T\,\tilde{g}_{P,T}'\hat{\mathcal Q}_P^{-1}\tilde{g}_{P,T}
        \xrightarrow{d} \chi^2(q_P)$.
  \item If $\mathrm{rank}(G_P) = p_\phi$, then
        $J_{NG} \equiv T\,\tilde{g}_{NG,T}'\hat{\mathcal Q}_{NG}^{-1}
        \tilde{g}_{NG,T} \xrightarrow{d} \chi^2(q_{NG})$.
  \item In general, let $\hat\Pi$ be the orthogonal projector onto
        the column space of $\hat{\mathcal Q}_{NG}$, consistently estimated from
        the spectral decomposition of $\hat{\mathcal Q}_{NG}$ by retaining
        eigenvectors with eigenvalues above a threshold $\tau_T \to
        0$ with $T^{1/2}\tau_T \to \infty$. Then,
        \begin{equation}
          J_{NG} \equiv T\,\tilde{g}_{NG,T}' \ \hat{\Pi} \ \hat{\mathcal Q}_{NG}^+ \ 
          \hat{\Pi} \ \tilde{g}_{NG,T}
          \xrightarrow{d} \chi^2(r_{NG}),
          \qquad r_{NG} \equiv q_{NG} - (p_\phi - \mathrm{rank}(G_P)).
          \label{eq:JNG}
        \end{equation}
\end{enumerate}
\end{proposition}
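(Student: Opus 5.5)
The plan is to derive a single asymptotic linear representation for the full vector of sample moments evaluated at $\hat\phi$. Both orthogonalized vectors $\tilde g_P$ and $\tilde g_{NG}$ are fixed linear maps of that vector, so their limits are Gaussian and can be read off directly. Under $H_0$ and the regularity conditions of Section~\ref{ass:regularity}, the CLT gives $\sqrt T\,\bar g_T(\phi_0)\xrightarrow{d}\mathcal N(0,S)$. Consistency and the usual mean-value expansion of the first-order condition of \eqref{eq:estimator} give
\[
\sqrt T\,\bar g_T(\hat\phi)=M\sqrt T\,\bar g_T(\phi_0)+o_p(1),\qquad M\equiv I_q-G\,\mathcal V_{joint}\,G'S^{-1},\quad \mathcal V_{joint}=(G'S^{-1}G)^{-1}.
\]
One checks that $MSM'=S-G\mathcal V_{joint}G'$. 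As in Section~\ref{sec:inference}, I would take the innovations as known; replacing $S$ by the generated-regressor corrected $\Omega_\psi$ only changes the middle matrix, not the algebra.

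Next write $\sqrt T\,\tilde g_P=A_P\sqrt T\,\bar g_T(\hat\phi)+o_p(1)$, where $A_P=(-S_{PN}S_{NN}^{-1},\;I_{q_P})$ and the estimated coefficients converge at rate $o_p(1)$. Two identities hold by direct computation: $A_PSA_P'=\Sigma_P$ and $A_PG=\tilde G_P$. Hence the limit is $\mathcal N(0,\Sigma_P-\tilde G_P\mathcal V_{joint}\tilde G_P')=\mathcal N(0,\mathcal Q_P)$, and symmetrically $\sqrt T\,\tilde g_{NG}\xrightarrow{d}\mathcal N(0,\mathcal Q_{NG})$.

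The rank statements rest on the Frisch--Waugh decomposition \eqref{eq:info_decomp} and its mirror image:
\[
\mathcal I_{joint}=G_{NG}'S_{NN}^{-1}G_{NG}+\tilde G_P'\Sigma_P^{-1}\tilde G_P=G_P'S_{PP}^{-1}G_P+\tilde G_{NG}'\Sigma_{NG}^{-1}\tilde G_{NG}.
\]
Write $\mathcal Q_P=\Sigma_P^{1/2}\bigl(I-K_PK_P'\bigr)\Sigma_P^{1/2}$ with $K_P=\Sigma_P^{-1/2}\tilde G_P\mathcal V_{joint}^{1/2}$. The nonzero eigenvalues of $K_PK_P'$ coincide with those of
\[
K_P'K_P=\mathcal V_{joint}^{1/2}\tilde G_P'\Sigma_P^{-1}\tilde G_P\mathcal V_{joint}^{1/2}=I-\mathcal V_{joint}^{1/2}\,G_{NG}'S_{NN}^{-1}G_{NG}\,\mathcal V_{joint}^{1/2}.
\]
By Lemma~\ref{lem:ng_set}, $G_{NG}$ has full column rank, so the subtracted matrix is positive definite. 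All eigenvalues of $K_P'K_P$ are therefore strictly below one, $\mathcal Q_P\succ0$, and part (i) follows from Slutsky and the continuous mapping theorem with $\hat{\mathcal Q}_P^{-1}\to\mathcal Q_P^{-1}$.

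For parts (ii) and (iii) the same argument gives
\[
K_{NG}'K_{NG}=I-\mathcal V_{joint}^{1/2}\,\mathcal I_P\,\mathcal V_{joint}^{1/2},\qquad \mathcal I_P=G_P'S_{PP}^{-1}G_P.
\]
The unit eigenvalues of $K_{NG}'K_{NG}$ correspond exactly to $\mathcal V_{joint}^{-1/2}\ker(G_P)$, whose dimension is $p_\phi-\operatorname{rank}(G_P)$. Because $K_{NG}$ is injective ($\tilde G_{NG}$ has full column rank on $\ker G_P$, since $\mathcal I_{joint}\succ0$ there), these map to unit eigenvalues of $K_{NG}K_{NG}'$ with the same multiplicity. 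Hence $\operatorname{rank}\mathcal Q_{NG}=q_{NG}-(p_\phi-\operatorname{rank}G_P)=r_{NG}$.

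When $\operatorname{rank}G_P=p_\phi$, the matrix $\mathcal Q_{NG}$ is nonsingular and (ii) follows exactly as (i). In general, if $Z\sim\mathcal N(0,\mathcal Q)$ then $Z'\mathcal Q^+Z\sim\chi^2(\operatorname{rank}\mathcal Q)$, because $Z$ lies almost surely in $\operatorname{col}(\mathcal Q)$. This gives the $\chi^2(r_{NG})$ limit, provided the estimated objects converge.

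The main obstacle is part (iii), because the Moore--Penrose inverse is discontinuous at rank changes. The sample $\hat{\mathcal Q}_{NG}$ is generically full rank, and its small eigenvalues would blow up $\hat{\mathcal Q}_{NG}^+$. I would handle this with the threshold $\tau_T$. By Weyl's inequality, $\|\hat{\mathcal Q}_{NG}-\mathcal Q_{NG}\|=O_p(T^{-1/2})$, assuming $\sqrt T$-consistency of the HAC and Jacobian estimators. Since $T^{1/2}\tau_T\to\infty$, the $p_\phi-\operatorname{rank}(G_P)$ eigenvalues converging to zero fall below $\tau_T$ with probability approaching one. Since $\tau_T\to0$, the $r_{NG}$ positive eigenvalues stay above it. On this event, Davis--Kahan gives $\hat\Pi\to\Pi$, and $\hat\Pi\hat{\mathcal Q}_{NG}^+\hat\Pi\to\mathcal Q_{NG}^+$. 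Combining this with $\Pi Z=Z$ almost surely, Slutsky delivers \eqref{eq:JNG}.
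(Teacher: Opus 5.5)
Your proof is correct and mostly follows the paper: the annihilator representation, the identities $A_PSA_P'=\Sigma_P$ and $A_PG=\tilde G_P$, the pseudo-inverse $\chi^2$ lemma, and thresholding via Weyl plus a subspace perturbation bound (Davis--Kahan where the paper cites Stewart) are its steps. The genuine difference is the rank computation.

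The paper (Lemmas~\ref{lem:QP_rank} and~\ref{lem:QNG_rank}) works through null spaces. It uses $\mathrm{null}(\Omega)=\mathrm{col}(S^{-1}G)$ and the block identities $SL'=(\Sigma_P',0)'$, $SK'=(0,\Sigma_{NG}')'$. These reduce $\mathcal Q_Pv=0$ to $G_{NG}\alpha=0$, and give the bijection $\alpha\mapsto\Sigma_{NG}^{-1}G_{NG}\alpha$ from $\ker G_P$ onto $\mathrm{null}(\mathcal Q_{NG})$. That route is shorter and needs neither square roots nor the information decompositions.

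Your whitening $\Sigma^{-1/2}\mathcal Q\,\Sigma^{-1/2}=I-KK'$, combined with \eqref{eq:info_decomp} and \eqref{eq:info_decomp_NGmarginal}, gives the same null space and also the whole normalized spectrum. The eigenvalues of $\Sigma_{NG}^{-1/2}\mathcal Q_{NG}\Sigma_{NG}^{-1/2}$ other than one are eigenvalues of $\mathcal V_{joint}^{1/2}\mathcal I_P\mathcal V_{joint}^{1/2}$. This buys two things:
\begin{enumerate}
\item It makes precise the near-singularity the paper invokes only informally for Table~\ref{tab:size_weak}: under the Pitman drift, $p_\phi$ of these eigenvalues are $O(T^{-1})$.
\item Applied to sample matrices built from a single pair $(\hat S,\hat G)$, Courant--Fischer over $\ker G_P$ shows that the $p_\phi-\mathrm{rank}(G_P)$ smallest eigenvalues of $\hat{\mathcal Q}_{NG}$ are $O_p(\|\hat G_P-G_P\|^2)$. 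The threshold in (iii) therefore needs only a $\sqrt T$-consistent Jacobian. That is weaker than the $\sqrt T$-consistent HAC estimator you assume, which a growing bandwidth typically precludes. It also supplies the rate the paper's proof uses implicitly, since bare consistency of $\hat{\mathcal Q}_{NG}$ does not push the vanishing eigenvalues below $\tau_T$.
\end{enumerate}

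Two small corrections:
\begin{enumerate}
\item Global injectivity of $K_{NG}$ is neither implied by full column rank of $\tilde G_{NG}$ on $\ker G_P$ nor needed. For any matrix, $x\mapsto Kx$ maps the eigenspace of $K'K$ for a nonzero eigenvalue isomorphically onto that of $KK'$.
\item Your generated-regressor aside holds only if $\Omega_\psi$ replaces $S$ everywhere: in the weight, the projection coefficients and the Schur complements. With weight $S^{-1}$ and middle matrix $\Omega_\psi$, the limiting variance of $\sqrt T\,\tilde g_P$ is $A_PM\Omega_\psi M'A_P'$, which is not $\mathcal Q_P$ in general.
\end{enumerate}
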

\noindent \textit{Proof}: See Appendix \ref{app:limitingdistributions}.

\begin{remark}[Compression of degrees of freedom] \label{rem:df_compress}
 Both the test statistics, $J_{\mathrm{prx}}$ and $J_{NG}$, have standard chi-squared limits with degrees of freedom equal to the number of restrictions in the tested block, only if the other block delivers full identification of $\phi_0$. If the proxy block fails to identify $\phi_0$ on its own due to insufficient instruments, the non-Gaussian block must drive the identification, and the effective degrees of freedom of $J_{NG}$ compress below $q_{NG}$. The same logic applies symmetrically to $J_{\mathrm{prx}}$ when the non-Gaussian block fails to identify $\phi_0$ on its own. This is because the orthogonalization in \eqref{eq:gphat} and \eqref{eq:gnghat} projects out the component of the tested moment vector that is linearly predictable from the other block. If the other block has full column rank $p_\phi$, then all local perturbations of $\phi$ are detectable through it, so the projection removes all directions along which $\phi$ can fluctuate without violating the tested block's restrictions, leaving $q_P$ (or $q_{NG}$) effective restrictions. However, if the other block has deficient rank, then some local perturbations of $\phi$ are undetectable through it, so the projection preserves those directions in the tested moment vector, reducing its effective rank and degrees of freedom by the dimension of the unidentified subspace.
 This implies that effective, generalized asymptotic distribution of the test statistics are: 
\begin{equation}
    J_{\mathrm{prx}} \xrightarrow{d} \chi^2(q_P - (p_\phi - \mathrm{rank}(G_{NG}))),
    \qquad
    J_{\mathrm{NG}} \xrightarrow{d} \chi^2(q_{NG} - (p_\phi - \mathrm{rank}(G_P))).
    \label{eq:df_subset}
\end{equation}

This characteristic can be salient in SVAR applications with partial identification through external instruments, where the number of rotation angles $p_\phi$ grows quadratically with the number of variables, while the available valid instruments are typically small. As a result, in empirical applications, the test statistic $J_{NG}$ becomes conservative, and is not always a reliable standalone test of the non-Gaussian
specification. In this regime $J_{\mathrm{prx}}$ is the operative diagnostic: the
non-Gaussian block retains full column rank (Lemma~\ref{lem:ng_set}) and anchors
estimation regardless of proxy strength, so the test for proxy exogeneity stays valid
even when the instruments are used for partial identification.

\end{remark}

\subsection{Residual-based moving block bootstrap}\label{sec:bootstrap}

The chi-squared approximations in \eqref{eq:j_joint} and
\eqref{eq:subset_chi2} rely on the Gaussian limit of the standardized sample moments. However, the weighting matrix involves up to fourth-order cumulants. Also, as discussed in Remark \ref{rem:df_compress}, the asymptotic degrees of freedom for the test statistics also depend on the rank of the Jacobian matrices of each moment block. This invalidates the use of critical values from asymptotic $\chi^2$ distributions in empirical settings with finite samples.

We adopt the residual-based moving block bootstrap procedure of \citet{bruggemann_inference_2016}, which also allows for conditional heteroskedasticity. To preserve the empirical dependence structure between the residuals and proxies $(\hat{u}_t, m_t)$, we resample
overlapping blocks of the joint residual-proxy pairs \citep{jentsch_asymptotically_2022}. 

Let $\hat{\phi}$ denote the original-data two-step estimator and
$\bar{g}_T(\hat{\phi})$ the associated sample moment average. To efficiently obtain the bootstrap analog of the sample moments, 
we use the $k$-step iterative estimation procedure of \citet{andrews_2002}.
Moment recentering through \citet{hall_bootstrap_1996} subtracts $\bar{g}_T(\hat{\phi})$ from
each bootstrap moment average, so that the bootstrap data-generating
process satisfies the over-identifying restrictions exactly at
$\hat{\phi}$. With recentering, the bootstrap correctly mimics the non-standard limit
distribution due to the
degrees-of-freedom compression.
The detailed procedure can be found in Appendix \ref{appendix:mbb}.

By repeating the bootstrap procedure over $N_B$ replications, we obtain the empirical distributions of $(J_{\mathrm{joint}}^*, J_{\mathrm{prx}}^*, J_{NG}^*)$.
Hence, we can compute bootstrap $p$-values as the proportion of bootstrap replications
exceeding the corresponding sample statistic.

\section{Monte Carlo evidence}
\label{sec:results}

We assess the finite-sample properties of the hybrid GMM
estimator and the specification tests via a Monte Carlo simulation study.

\subsection{Data-generating process}
The baseline DGP is a stable VAR(2) with $n = 3$ variables. The true
structural impact matrix is $B_0 = H_0 Q(\phi_0)$, with $H_0 = I_3$ and
rotation angles $\phi_0 = (0.80,\, {-0.40},\, 1.20)'$.

Under the null hypothesis of valid specification, the structural shocks\footnote{The choice of the underlying distributions is not consequential; the simulation results still hold with other skewed, heavy-tailed distributions including generalized Student-$t$ and other hyperbolic distributions.} are mutually independent\footnote{Independence is imposed only in the simulation design; it is a special case of Assumption~\ref{as:indep}, which requires only uncorrelatedness and diagonal third- and fourth-order cumulant tensors.} and
non-Gaussian, satisfying the moment conditions for identification via
co-skewness and co-kurtosis, see Section \ref{subsec:nongaussianconditions}. Each shock is normalized to zero mean and
unit variance:
\begin{itemize}
    \item {Shock 1} is drawn from a Normal-Inverse Gaussian
          distribution with parameters $(\alpha, \beta, \delta) =
          (1.6,\, 0.8,\, 1.6)$, yielding theoretical skewness
          $\kappa_3 \approx 1.01$ and excess kurtosis $\kappa_4 \approx 2.70$.
    \item {Shock 2} is drawn from a standardized $\chi^2(7)$
          distribution, with $\kappa_3 = \sqrt{8/7} \approx 1.07$ and
          $\kappa_4 = 12/7 \approx 1.71$.
    \item {Shock 3} is drawn from a standardized $\chi^2(3)$
          distribution, with $\kappa_3 = \sqrt{8/3} \approx 1.63$ and
          $\kappa_4 = 4.00$.
\end{itemize}

The instruments are constructed as $m_{it} = \psi_{i}\,\varepsilon_{it}
+ \sigma_v v_{it}$, where $v_{it} \overset{\text{i.i.d.}} {\sim}
\mathcal{N}(0,1)$. The relevance matrix is $\Psi = \operatorname{diag}(0.70,\, 0.60,\, 0.80)$ and the noise standard deviation is
$\sigma_v = 0.50$ for all the instruments.
Therefore, the dimension of the rotational angle parameters is $p_{\phi} = 3$, with moment conditions $q_{NG} = 9$, $q_P = 6$, giving us the total of $q = 15$. 
After estimating the rotation angles $\hat{\phi}$, see Section \ref{subsec:twostepgmm}, we construct the test statistics $J_{\mathrm{joint}}$, $J_{\mathrm{prx}}$, and $J_{\mathrm{NG}}$ as described in Section \ref{sec:tests}. To obtain bootstrap critical values, we implement a residual-based moving block bootstrap procedure, detailed in Appendix \ref{appendix:mbb}.
We report the Monte Carlo results over sample sizes $T \in \{200,\, 300,\, 500,\,
1000\}$, with $M = 500$ replications and $N_{B} = 999$ bootstrap replications.

\begin{remark}[Computational Implementation and Asymptotic Invariance]
\label{rem:comp_invariance}
Proposition \ref{prop:identification_unified} establishes global point identification
over the unconstrained rotation space $SO(n)$. In finite samples, however,
optimization over higher-order moments is sensitive to initial conditions near
Givens gimbal lock boundaries ($\phi = \pi/2$). We initialize
the numerical solver at the unweighted non-Gaussian identified preliminary estimate
$\hat{\phi}_{\mathrm{prelim}} = \arg \ \min_\phi \bar{g}_{NG}(\phi)'\bar{g}_{NG}(\phi)$
and restrict the search to the principal geometric branch
$\phi \in \Phi_0 \equiv (-\pi/2,\, \pi/2)^{p_\phi}\subset\Phi$, which eliminates
topologically isomorphic aliases without any loss of generality. The criterion is
then minimized by a constrained solver on $\Phi_0$, started both at
$\hat{\phi}_{\mathrm{prelim}}$ and at draws from the interior of the branch, so
that the reported estimate is the best of several local solutions rather than the
first one reached.
\end{remark}

\begin{remark}[Spectral Regularization]
The matrices $\mathcal Q_P$ and $\mathcal Q_{NG}$ for the test statistics \eqref{eq:subset_stats} are constructed from higher-order moment conditions whose finite-sample covariance is usually ill-conditioned.
Consequently, inversion without regularization may amplify the smallest eigenvalues,
inflating the test statistics and distorting size. We regularize via spectral
truncation: eigenvalues of $\mathcal Q$ below the threshold $\tau = 10^{-4}$ are set
to zero prior to inversion, retaining only the subspace along which the matrix
is well-conditioned \citep{antoine_efficient_2007}. This truncation controls
the effective rank of $\mathcal Q^+$ without distorting the limiting distribution along
the retained eigenvalue directions.
\end{remark}

\subsection{Baseline results}
\label{sec:baseline}

\begin{table}[htbp]
\centering
\caption{Estimates of $\phi_0$ under $H_0$: Non-Gaussian Shocks with Valid Proxies}
\label{tab:phi_estimates_h0}
\begin{tabular}{lccccc}
\toprule
 & True Value & $T = 200$ & $T = 300$ & $T = 500$ & $T = 1000$ \\
\midrule
\multicolumn{6}{l}{\textit{Estimates and their standard errors}} \\[3pt]
$\hat{\phi}_1$ & $0.80$    & 0.8046    & 0.8005    & 0.8007    & 0.8006 \\
&  & (0.0300) & (0.0250) & (0.0201) & (0.0145) \\
$\hat{\phi}_2$ & $-0.40$   & $-$0.3958 & $-$0.3955 & $-$0.3996 & $-$0.3985 \\
&  & (0.0315) & (0.0266) & (0.0212) & (0.0153) \\
$\hat{\phi}_3$ & $1.20$    & 1.1991    & 1.2017    & 1.2025    & 1.2002 \\
&  & (0.0310) & (0.0259) & (0.0207) & (0.0150) \\[6pt]
\bottomrule
\end{tabular}
\begin{minipage}{0.92\textwidth}
\vspace{4pt}
\scriptsize
\textit{Notes:} Average estimates and their standard errors are based on $M = 500$ Monte Carlo replications with $B = 999$ bootstrap replications. The DGP is a stationary VAR(2) with $n = 3$ variables and true rotation parameters $\phi_0 = (0.80, -0.40, 1.20)'$. Structural shocks are drawn from NIG, $\chi^2(7)$, and $\chi^2(3)$ distributions. Proxies are valid with relevance $\Psi = (0.70, 0.60, 0.80)'$ and noise $\sigma_v = 0.50$.
\end{minipage}
\end{table}

\begin{table}[htbp]
\centering
\caption{Size of Specification Tests under $H_0$: Non-Gaussian Shocks with Valid Proxies}
\label{tab:size_h0}
\begin{tabular}{lcccc}
\toprule
 & $T = 200$ & $T = 300$ & $T = 500$ & $T = 1000$ \\
\midrule
\multicolumn{5}{l}{\textit{Panel A: Empirical Rejection Rates (nominal size $= 0.05$)}} \\[3pt]
Joint specification test      & 0.034 & 0.048 & 0.044 & 0.056 \\
Proxy validity test           & 0.044 & 0.028 & 0.036 & 0.036 \\
Non-Gaussianity test          & 0.024 & 0.044 & 0.028 & 0.044 \\[6pt]
\multicolumn{5}{l}{\textit{Panel B: Mean $J$-statistics}} \\[3pt]
$J_{\text{joint}}$ (bootstrap) & 24.471 & 21.352 & 17.236 & 12.714 \\
$J_{\text{joint}}$ (estimate)  & 20.361 & 18.001 & 14.077 & 12.218 \\[3pt]
$J_{\text{prx}}$ (bootstrap) & 18.379 & 13.680 & 10.315 & 6.555 \\
$J_{\text{prx}}$ (estimate)  & 14.101 & 12.455 & 8.421 & 6.069 \\[3pt]
$J_{\text{NG}}$ (bootstrap)    & 32.052 & 24.031 & 17.549 & 11.941 \\
$J_{\text{NG}}$ (estimate)     & 20.170 & 17.535 & 13.823 & 10.989 \\
\bottomrule
\end{tabular}
\begin{minipage}{0.92\textwidth}
\vspace{4pt}
\footnotesize
\textit{Notes:} Panel A reports the fraction of $M = 500$ Monte Carlo replications in which the null hypothesis is rejected at the 5\% nominal level. Panel B reports average $J$-statistics across replications. ``Bootstrap'' denotes the average bootstrap $J$-statistic ($N_B = 999$); ``estimate'' denotes the average of the $J$-statistic estimates. The joint test evaluates all overidentifying restrictions; the NG test evaluates non-Gaussianity moment conditions conditional on proxy validity; the proxy validity test evaluates proxy validity conditional on non-Gaussianity.
\end{minipage}
\end{table}

\begin{table}[htbp]
\centering
\caption{Estimates of $\phi_0$ under $H_1$: Invalid Proxy}
\label{tab:phi_estimates_h1}
\begin{tabular}{lccccc}
\toprule
 & True Value & $T = 200$ & $T = 300$ & $T = 500$ & $T = 1000$ \\
\midrule
\multicolumn{6}{l}{\textit{Estimates and their standard errors}} \\[3pt]
$\hat{\phi}_1$ & $0.80$    & 0.8277    & 0.8372    & 0.8228    & 0.8034 \\
&  & (0.0384) & (0.0296) & (0.0219) & (0.0152) \\
$\hat{\phi}_2$ & $-0.40$   & $-$0.4014 & $-$0.3671 & $-$0.3539 & $-$0.3427 \\
&  & (0.0475) & (0.0402) & (0.0328) & (0.0234) \\
$\hat{\phi}_3$ & $1.20$    & 0.6723    & 0.8381    & 0.9966    & 1.1371 \\
&  & (0.0494) & (0.0401) & (0.0309) & (0.0223) \\[6pt]
\bottomrule
\end{tabular}
\begin{minipage}{0.92\textwidth}
\vspace{4pt}
\footnotesize
\textit{Notes:} Average estimates and their standard errors are based on $M = 500$ Monte Carlo replications with $B = 999$ bootstrap replications. The DGP is identical to the $H_0$ specification except that proxy $m_2$ is mis-specified: $m_{2t} = 0.4 \, \varepsilon_{1t} + \sigma_v \nu_{2t}$, violating exogeneity by responding to shock 1 instead of shock 2. The substantial bias in $\hat{\phi}_3$ reflects the contamination of the proxy moment conditions.
\end{minipage}
\end{table}

\begin{table}[htbp]
\centering
\caption{Power of Specification Tests under $H_1$: Invalid Proxy}
\label{tab:power_h1}
\begin{tabular}{lcccc}
\toprule
 & $T = 200$ & $T = 300$ & $T = 500$ & $T = 1000$ \\
\midrule
\multicolumn{5}{l}{\textit{Panel A: Empirical Rejection Rates (nominal size $= 0.05$)}} \\[3pt]
Joint specification test      & 0.708 & 0.894 & 0.972 & 1.000 \\
Proxy validity test           & 0.606 & 0.810 & 0.926 & 0.994 \\
Non-Gaussianity test          & 0.072 & 0.288 & 0.642 & 0.942 \\[6pt]
\multicolumn{5}{l}{\textit{Panel B: Mean $J$-statistics}} \\[3pt]
$J_{\text{joint}}$ (bootstrap) & 28.821 & 24.334 & 19.889 & 14.223 \\
$J_{\text{joint}}$ (estimate)  & 90.325 & 116.339 & 165.058 & 292.212 \\[3pt]
$J_{\text{prx}}$ (bootstrap) & 30.403 & 22.313 & 17.051 & 9.107 \\
$J_{\text{prx}}$ (estimate)  & 94.347 & 117.791 & 218.212 & 293.421 \\[3pt]
$J_{\text{NG}}$ (bootstrap)    & 57.127 & 37.248 & 23.962 & 14.197 \\
$J_{\text{NG}}$ (estimate)     & 46.739 & 61.246 & 88.026 & 124.158 \\
\bottomrule
\end{tabular}
\begin{minipage}{0.92\textwidth}
\vspace{4pt}
\scriptsize
\textit{Notes:} Panel A reports the fraction of $M = 500$ Monte Carlo replications in which the null hypothesis is rejected at the 5\% nominal level. Panel B reports average $J$-statistics across replications. ``Bootstrap'' denotes the average bootstrap $J$-statistic ($N_B = 999$); ``estimate'' denotes the average of the $J$-statistic estimates. Under $H_1$, the proxy validity and joint test statistics diverge sharply from their bootstrap counterparts, reflecting growing power as $T$ increases. The NG bootstrap critical values are heavily inflated at small $T$ due to the contamination of the rotation estimates, suppressing NG rejection rates despite the underlying shocks remaining non-Gaussian.
\end{minipage}
\end{table}

\subsubsection{Estimates under $H_0$ and $H_1$}
Table \ref{tab:phi_estimates_h0} reports the average estimates and standard errors for $\phi_0$, over the simulations, under the
baseline DGP. The estimator is effectively
unbiased at all sample sizes and
standard errors decline monotonically.
Table \ref{tab:phi_estimates_h1} reports estimates under proxy
invalidity, where instrument $m_2$ violates the exogeneity condition:
$m_{2t} = 0.4\,\varepsilon_{1t} + \sigma_v v_{2t}$, i.e. it loads on shock 1 rather
than its designated target, shock 2. Table~\ref{tab:phi_estimates_h1} shows
severe bias in $\hat{\phi}_2$ and $\hat{\phi}_3$. Both are consistent with the standard result that GMM estimators converge to a
pseudo-true value rather than $\phi_0$ under moment mis-specification, as shown in
\citet{white_1982, hall_inoue_2003}.

\subsubsection{Size under $H_0$}
Table \ref{tab:size_h0}, Panel A reports empirical rejection rates at
the 5\% nominal level. All three tests are slightly conservative at modest sample sizes. The large mean bootstrap $J$-statistics in small samples reflect the inefficiency of the estimates when the underlying shocks are heavy-tailed; they converge toward their sample counterparts as $T$ grows. The mean $J$ statistics, both estimated and bootstrap, track the means of their limiting distributions closely: $J_{\mathrm{prx}}$ against $\chi^2(q_P)$ and $J_{\mathrm{NG}}$ against $\chi^2(q_{NG})$, as stated in Proposition~\ref{prop:limitdist}.

\subsubsection{Power under $H_1$}
In Table \ref{tab:power_h1}, all three tests accumulate power rapidly as $T$ increases. At $T=500$,
rejection rates are $0.972$, $0.926$, and $0.642$ for the joint, proxy
validity, and non-Gaussianity tests respectively. 

A remark regarding the rejection rates of the non-Gaussianity test is in order.
The test for non-Gaussianity rejects despite the
true shocks are non-Gaussian. The cause is parameter contamination, and it is expected. The joint GMM objective pools proxy and non-Gaussianity
moment conditions, so the invalid proxy moments displace $\hat{\phi}$ from
$\phi_0$. The recovered structural shocks $\hat{\varepsilon}_t =
\hat{B}_0^{-1}(\hat{\phi})\,u_t$ do not match the true shocks,
and their higher-order cross-moments violate the moment conditions even though the underlying DGP is non-Gaussian. This implies that
$g_{NG}(y_t, \hat{\phi})$ has nonzero population expectation under $H_1$,
generating power in the non-Gaussianity test through an indirect spillover
channel. This mechanism is consistent with the partitioned $J$-statistic
structure: the Schur-complement decomposition yields orthogonal test statistics
only under the joint null. Under a mis-specification of the proxy moment conditions, only parameter contamination propagates through the shared space to all components \citep{newey_generalized_1985}.

The power ordering $J_{\mathrm{prx}} \gg J_{\mathrm{NG}}$ is uniform across all sample sizes and supports the theoretical interpretation. The ordering is even more pronounced for the point-estimates of $J$-statistics, which diverge sharply from their bootstrap counterparts as $T$ increases. The proxy validity statistic reaches $218.1$ at $T=500$ against a bootstrap mean of $17.1$, whereas the non-Gaussianity statistic reaches $88.0$ against a bootstrap mean of $23.9$. When all three
tests reject and the $J_{\mathrm{prx}}$ dominates, the pattern is consistent with
proxy invalidity as the primary source of moment mis-specification. 

\subsection{The case of weak instruments}
\label{sec:weakproxy}

One of the central concerns in a proxy-SVAR is the sensitivity of inference
to the strength of the external instruments. Weak instruments can distort the sampling distribution of GMM-based estimators and invalidate standard asymptotic
approximations, see \citet{staiger_instrumental_1997, stock_gmm_2000}. While the relevance of instrument is
routinely verified in applied work through first-stage $F$-statistics, limitations of such pre-tests in terms of low power and distorted subsequent inference are documented in \citet{roth_pretest_2022}. We show that even under weak proxies, the hybrid framework continues to recover the parameters consistently (Proposition~\ref{prop:localtozero}), because the higher-order moment conditions anchor the identification. The specification tests, however, exhibit asymmetric size distortions caused by anchor failure in the conditional testing procedure. $J_{\mathrm{NG}}$ becomes severely conservative; $J_{\mathrm{prx}}$ retains reliable calibration and remains a valid test of proxy orthogonality.

We adopt the local-to-zero framework for proxy
relevance i.e., a Pitman drift $c/\sqrt{T}$ with
$c = 2.24$ across the sample sizes. At these magnitudes, the proxy
Jacobian $G_{P,T} = O(T^{-1/2})$ contributes negligible identifying
information. Results\footnote{Similar results with strong proxies and weak non-Gaussian shocks can be found in Appendix \ref{appendix:weak_ng_sim}. We also provide results for doubly weak-identification, i.e. weak proxies and weak non-Gaussianity. In this case, the estimates remain unbiased with increasing sample size, albeit with larger standard errors compared to strong identification case.} are reported in
Tables  \ref{tab:phi_estimates_weak} and  \ref{tab:size_weak}.

\begin{table}[htbp]
\centering
\caption{Parameter Estimates under $H_0$: Weak Proxies (Pitman Drift $c = 2.24$)}
\label{tab:phi_estimates_weak}
\begin{tabular}{lccccc}
\toprule
 & True Value & $T = 200$ & $T = 300$ & $T = 500$ & $T = 1000$  \\
\midrule
\multicolumn{6}{l}{\textit{Estimates and their standard errors}} \\[3pt]
$\hat{\phi}_1$ & $0.80$    & 0.7851    & 0.7754    & 0.8019    & 0.8007     \\
               &  & (0.0597) & (0.0526) & (0.0439) & (0.0332)  \\
$\hat{\phi}_2$ & $-0.40$   & $-$0.3621 & $-$0.3956 & $-$0.3872 & $-$0.3959  \\
               &  & (0.0639) & (0.0580) & (0.0507) & (0.0387)  \\
$\hat{\phi}_3$ & $1.20$    & 1.1789    & 1.1870    & 1.2013    & 1.2024    \\
               &  & (0.0563) & (0.0492) & (0.0405) & (0.0307)  \\[6pt]
\bottomrule
\end{tabular}
\begin{minipage}{0.95\textwidth}
\vspace{4pt}
\footnotesize
\textit{Notes:} Results based on $M = 500$ Monte Carlo replications with $N_B = 999$ bootstrap replications. Proxy relevance follows a Pitman drift: $\Psi_i = c / \sqrt{T}$ with $c = 2.24$ for all instruments, yielding $\Psi \approx (0.158, 0.129, 0.100, 0.071)$ for $T = (200, 300, 500, 1000)$. All other DGP parameters are identical to the strong-proxy $H_0$ specification.
\end{minipage}
\end{table}

\subsubsection{Consistent estimation}

Table \ref{tab:phi_estimates_weak} shows that the joint GMM estimator recovers the parameters consistently despite severe proxy weakness. Though we do note some finite-sample bias at $T=200$, it
reflects the elevated variance of higher-order sample moments under
heavy tails. The estimates converge toward $\phi_0$ as $T$ increases, albeit with higher standard errors than the baseline estimates of Table \ref{tab:phi_estimates_h0} with strongly relevant proxies.

This is the mechanism of Proposition~\ref{prop:localtozero}. As the proxy block Jacobian $G_{P,T} \xrightarrow{T \to \infty} 0$, the identification burden shifts entirely to the
non-Gaussian block. The co-skewness and symmetric co-kurtosis conditions
provide full-rank local curvature in a neighborhood of $\phi_0$,
ensuring the variance matrix, $\mathcal{V}_{NG}^* =\bigl(G_{NG}'\,\Sigma_{NG}^{-1}\,G_{NG}\bigr)^{-1}$ remains asymptotically non-singular. Simultaneously, the
shrinking, but a nonzero proxy relevance imposes a strictly positive
penalty on each element of the finite discrete set
$\mathcal{B} \setminus \{\phi_0\}$, resolving the global permutation
ambiguity that the higher-order moments alone cannot eliminate. 

\subsubsection{Robust Anderson-Rubin confidence intervals}
\label{subsec:sim_bootstrap_ci}
In the presence of weak instruments, standard Wald-type confidence intervals around the point estimates are not asymptotically valid. Anderson-Rubin (AR) confidence 
sets, see \citet{montiel_olea_inference_2021, 
jentsch_asymptotically_2022}, constructed by inverting the AR statistic over the 
parameter space, provide a valid alternative. They maintain 
correct asymptotic coverage uniformly over proxy strength, 
including the local-to-zero regime.

To assess the inferential contribution of the non-Gaussian 
moment conditions under weak proxy identification, we compare the AR confidence sets for the rotation parameters $\phi$ under two criteria, \textit{Proxy-only} and \textit{Proxy+NG}. The former involves only the proxy orthogonality moment conditions in the GMM criterion for the AR statistic whereas the latter includes all the moment conditions (non-Gaussian and proxy). Consequently, the asymptotic critical values are obtained from $\chi^2(q_p)$ and $\chi^2(q_p + q_{NG})$ distributions. This allows us to compare the effect of additional identifying strength of the higher-order moments conditions, in terms of the length of weak identification robust confidence sets, under local-to-zero proxy relevance.

 For each simulated dataset, the AR statistic is evaluated over a fine $p_{\phi}$-dimension grid of candidate rotation angle vectors $\phi \in \Phi_0$. At each grid point, the AR statistic takes the form
\begin{equation}
    AR_T(\phi_{(i)}) = T \cdot \bar{g}(\phi_{(i)})' \hat{S}(\phi_{(i)})^{-1} \bar{g}(\phi_{(i)}),
\end{equation}
where $\bar{g}(\phi_{(i)})$ is the sample moment vector at $i^{th}$ grid point and $\hat{S}(\phi_{(i)}) = T^{-1}\sum_{t=1}^T (g_t(\phi_{(i)}) - \bar{g}(\phi_{(i)}))(g_t(\phi_{(i)}) - \bar{g}(\phi_{(i)}))'$ is the consistent variance estimator\footnote{In the empirical application, $\widehat{S}(\phi_{(i)})$ is replaced by the residual based moving block bootstrap estimator $\widehat{S}_{MBB}(\phi_{(i)})$ 
to account for serial dependence.}. Under the null that $\phi_{(i)}$ is the true rotation, $AR_T(\phi_{(i)})$ is asymptotically distributed as $\chi^2(q)$, where $q$ is the number of moment conditions.
The AR confidence set is the collection of parameter values not 
rejected by the test at $1 - \alpha$ level  :
\begin{equation}
    \mathcal{C}_{1-\alpha} = \bigl\{\phi_{(i)} \in \Phi_0 : 
    AR_T(\phi_{(i)}) \leq \chi^2_{q,\,1-\alpha}\bigr\}.
\label{eq:AR_set}
\end{equation}
In practice, $\mathcal{C}$ is approximated by 
evaluating $AR_T(\phi_{(i)})$ over a fine grid of $1,000,000$ ($100^{p_{\phi}}$) 
candidate vectors covering $\Phi_0$, and retaining all grid 
points at which the criterion does not exceed the critical 
value. Given that such grid-based inferential methods are computationally prohibitive, we include the simulation results for a sample size $T = 1000$, for $100$ Monte Carlo replications.

\begin{table}[htbp]
\centering
\caption{Anderson-Rubin 90\% Confidence Sets with HGMM standard CIs}\label{tab:compare_sim_ARCI} 
\begin{tabular}{lcccc}
\toprule
Parameter & Mean Estimate & Proxy-only & Proxy+NG & HGMM standard CI \\
\midrule
$0.80$ & $0.8007$ & $[0.2221, 1.3328]$ & $[0.7140, 0.9044]$ & $[0.7447, 0.8567]$\\[1pt]
\quad  & & $1.1107$ & $0.1904$ & 0.1120 \\[4pt]
$-0.40$ & $-0.3959$ & $[-0.8409, 0.0793]$ & $[-0.5077, -0.2697]$ & $[-0.4603, -0.3305]$\\[1pt]
\quad  & & $0.9202$ & $0.2380$ & 0.1298 \\[4pt]
$1.20$ & $1.2024$ & $[-1.5708, 1.5708]$ & $[1.1265, 1.2852]$ & $[1.1509, 1.2539]$\\[1pt]
\quad  & & $3.1416$ & $0.1587$ & 0.1030 \\
\bottomrule
\end{tabular}

\vspace{8pt}
\begin{minipage}{0.95\textwidth}
\scriptsize
\textit{Notes.} The 90\% AR confidence sets are median sets, across the monte carlo simulations. (with their lengths below). The main comparison is between the \textit{Proxy-only} and \textit{Proxy+NG} criterions. The former involves only the proxy orthogonality moment conditions in the AR statistic whereas the latter includes all the moment conditions (non-Gaussian and proxy). The HGMM standard CI is computed as: $[\hat{\phi} - 1.645 \times \hat{\sigma}_{\phi}, \hat{\phi} + 1.645 \times \hat{\sigma}_{\phi}]$, where $\hat{\phi}$ and $\hat{\sigma}_{\phi}$ are the average of the estimates (across the monte carlo replications) and their standard errors, respectively. Results based on $M = 100$ Monte Carlo replications with $N_B = 999$ bootstrap replications. Proxy relevance follows a Pitman drift: $\Psi_i = c / \sqrt{T}$ and $c = 2.24$, with sample size, $T = 1000$. All other DGP parameters remain unchanged from the $H_0$ specification.
\end{minipage}
\end{table}
The results demonstrate a breakdown of proxy-only 
identification under weak instruments, and its recovery 
through non-Gaussianity. For $\phi_3$, the \textit{Proxy-only} 
AR set spans the entire feasible 
rotation parameter space, indicating that the proxy criterion 
fails to reject any parameter value and is entirely 
uninformative. As shown in Table \ref{tab:compare_sim_ARCI}, all three \textit{Proxy+NG} 
sets are centered near the true parameter values ($0.80$, 
$-0.40$, $1.20$) and are comparable in length to the HGMM 
Wald intervals, indicating that the NG conditions restore the informativeness regarding 
the true parameter. The 
\textit{Proxy+NG} AR criterion remains informative and 
well-centered because the higher-order moments 
conditions identify the rotation parameters independently 
of proxy strength, as established in 
Proposition \ref{prop:localtozero}. The joint hybrid 
criterion eliminates the identification failure without sacrificing the validity of the identification-robust coverage.

\subsubsection{Asymmetric size distortions in specification tests}

While point identification is preserved, Table \ref{tab:size_weak}
reveals a pronounced asymmetry in the
size of the specification tests. The asymmetry
originates in the conditional structure of each test, i.e., which
block of the joint moment system anchors the parameter estimation.

Panel A of Table \ref{tab:size_weak} shows that $J_{\mathrm{NG}}$ almost never rejects at all sample sizes, against a nominal significance level of $5\%$. Panel B identifies the
mechanism where the mean sample statistic $\hat{J}_{\mathrm{NG}}$ is stable between
$12.5$ and $9.9$, while their bootstrap analogs are massively inflated
at small samples. The bootstrap mean of $J_{\mathrm{NG}}$ falls from $147.0$ at
$T=200$ to $49.6$ at $T=1000$.
\begin{table}[htbp]
\centering
\caption{Size of Specification Tests under $H_0$: Weak Proxies (Pitman Drift $c = 2.24$)}
\label{tab:size_weak}
\begin{tabular}{lcccc}
\toprule
 & $T = 200$ & $T = 300$ & $T = 500$ & $T = 1000$  \\
\midrule
\multicolumn{5}{l}{\textit{Panel A: Empirical Rejection Rates (nominal size $= 0.05$)}} \\[3pt]
Joint specification test      & 0.000 & 0.008 & 0.006 & 0.040  \\
Proxy validity test           & 0.018 & 0.032 & 0.030 & 0.040  \\
Non-Gaussianity test          & 0.000 & 0.000 & 0.000 & 0.006  \\[6pt]
\multicolumn{5}{l}{\textit{Panel B: Mean $J$-statistics}} \\[3pt]
$J_{\text{joint}}$ (bootstrap) & 25.313 & 21.309 & 17.749 & 14.020  \\
$J_{\text{joint}}$ (estimate)  & 16.524 & 16.017 & 13.967 & 12.958  \\[3pt]
$J_{\text{prx}}$ (bootstrap) & 11.527 & 9.218 & 7.757 & 6.814  \\
$J_{\text{prx}}$ (estimate)  & 9.384 & 8.522 & 7.344 & 6.794 \\[3pt]
$J_{\text{NG}}$ (bootstrap)    & 147.073 & 113.351 & 79.632 & 49.601  \\
$J_{\text{NG}}$ (estimate)     & 12.571 & 11.910 & 10.714 & 9.939  \\
\bottomrule
\end{tabular}
\begin{minipage}{0.95\textwidth}
\vspace{4pt}
\scriptsize
\textit{Notes:} Panel A reports the fraction of $M = 500$ Monte Carlo replications in which the null hypothesis is rejected at the 5\% nominal level. Panel B reports average $J$-statistics across replications. ``Bootstrap'' denotes the average bootstrap $J$-statistic ($N_B = 999$); ``estimate'' denotes the average of the $J$-statistic estimates. Under weak proxy asymptotics ($\Delta_i = c/\sqrt{T}$, $c = 2.24$), the proxy moments become asymptotically uninformative. The NG bootstrap critical values are massively inflated due to the near-singular contribution of the proxy block to the long-run covariance matrix.
\end{minipage}
\end{table}

This failure to reject is a direct consequence of {anchor failure} in the
conditional testing procedure. The $J_{\mathrm{NG}}$ statistic evaluates the higher-order
moment conditions with a maintained hypothesis that the identifying strength is provided by the proxy block\footnote{This is a specific case where there are sufficient instruments to identify the entire system. As noted in Remark \ref{rem:df_compress}, if (instruments) $k < n-1$ (structural shocks), the proxy block cannot provide the entire identification and the distribution of the $J_{\mathrm{NG}}$ statistic faces compression in its (asymptotic) degrees of freedom.}.
Under strong proxy relevance, $G_P$ is full rank and absorbs the full
parameter estimation penalty, leaving the higher-order moments free to
test for over-identification. Under weak proxy relevance, $G_{P,T} \approx 0$, the joint GMM criterion obtains identifying curvature from the non-Gaussian
block alone. The sample statistic $\hat{J}_{NG}$ is
evaluated over a subspace overfitted to the higher-order moments,
compressing the realized statistic toward $q_{NG}-(p_{\phi}-rank(G_{P}))$ (Proposition \ref{prop:limitdist}).

Simultaneously, since the
proxy block contributes near-zero information to the joint long-run
covariance matrix, the restricted variance matrix $\mathcal Q_{NG}^{*}$
inherits extreme instability from these higher-order moments, generating
bootstrap values that are orders of magnitude larger than the
compressed sample statistic.

In contrast, Panel A of Table \ref{tab:size_weak} shows that $J_{\mathrm{prx}}$
maintains substantially better size, improving with $T$. The test
is moderately conservative at small $T$ but improves steadily. This is because, analogous to the $J_{\mathrm{NG}}$ test, the $J_{\mathrm{prx}}$ test relies on
the non-Gaussian block as its identifying anchor. The true DGP is strongly
non-Gaussian, so $G_{NG}$ maintains full column rank independently of
the proxy validity, identifying $\hat{\phi}$ without any contribution from $G_{P}$.
Hence, $J_{\mathrm{prx}}$ evaluates proxy orthogonality
against a correctly specified parameter space. The bootstrap distribution
of the test statistic accurately reflects the true sampling variation of
the proxy moments, providing valid inference.

\section{Empirical Applications}
\label{sec:empirical}

We demonstrate the potential of our HGMM framework with two empirical applications. The first identifies an \emph{oil news shock} from the OPEC-window
oil futures surprise of \citet{kanzig_macroeconomic_2021} after adopting the corrections suggested by \citet{kilian_how_2024}
(Section~\ref{subsec:emp_oil}); the second identifies a Euro-area
\emph{monetary policy shock} from the high-frequency interest-rate surprise of
\citet{altavilla_measuring_2019} (Section~\ref{subsec:emp_mp}). In each case, we
report identification-robust confidence sets for the impulse responses under two
criteria, \textit{Proxy-only} (the exclusion conditions alone) and \textit{HGMM} (the
exclusion conditions combined with the higher-order moment conditions), and we subject the
proxy exclusion restriction to a formal specification test. Each illustration is accompanied by a robustness
exercise, reported in the appendix: for the oil shock, we re-run the analysis on
the \citet{kanzig_macroeconomic_2021} baseline (original instrument and specification) in Appendix~\ref{appendix:kanzig_rob},
and for the monetary policy shock on the \citet{gertler_monetary_2015} specification in Appendix~\ref{appendix:gk}.

\subsection{Identification-robust inference}
\label{subsec:emp_arsets}

The headline comparison in both illustrations is between the
identification-robust confidence sets of the two criteria. For each rotation in
the target-shock subspace, we construct the Anderson--Rubin statistic and retain the
rotation if the statistic does not exceed its per-candidate bootstrap quantile. The confidence set for a given impulse response is the projection of this
acceptance region onto the corresponding coordinate, obtained by optimizing the impulse
response function subject to the acceptance constraint. Because the target-shock
responses depend only on the first column of the impact matrix, the inversion
reduces to the $n-1$ rotation angles that build that column, and the remaining
columns are not individually relevant. This is analogous to the
procedure validated in Section~\ref{subsec:sim_bootstrap_ci}; the full
construction is given in Appendix~\ref{sec:emp_arsets_cons}. The dimension of the search is
therefore $n-1$ rather than $p_\phi=n(n-1)/2$, and it grows linearly rather than
quadratically in the size of the system. For the six-variable oil system the
inversion runs over $5$ angles with $16$ moment conditions rather than over the
$15$ angles of the point estimate, and for the four-variable monetary system over
$3$ angles with $10$ conditions. Throughout,
bootstrap quantiles use a residual-based moving-block bootstrap (MBB) that jointly
resamples the recentered residuals and the proxy \citep{bruggemann_inference_2016, jentsch_asymptotically_2022}
and re-estimates the VAR within each draw, so that first-stage uncertainty
propagates into the structural objects. Confidence sets are reported at the
nominal $90\%$ level.

\subsection{Oil news shock}
\label{subsec:emp_oil}

We revisit the identification of oil news shocks in
\citet{kanzig_macroeconomic_2021}, who use changes in oil futures prices around
pre-scheduled OPEC announcements as an external instrument for the structural
shock to oil price expectations. The setting is well suited to our framework
because the relevance and construction of the monthly instrument have been
questioned, most directly by \citet{kilian_how_2024} but also by
\citet{mori_estimating_2024, cavaliere_bootstrap_2025}, so it is an ideal case
in which the identifying information from the higher-order moment conditions can be
consequential.

\subsubsection{Specification}\label{subsubsec:oil_spec}
Our main specification adopts the corrections of \citet{kilian_how_2024}, so
that the instrument is constructed consistently with the timing of the
reduced-form data. The instrument aggregates daily futures-price surprises
around OPEC announcements into a monthly series. However, a naive within-month sum is inconsistent with a VAR built on monthly
{average} prices, because a permanent surprise of size $\delta$ on trading
day $d_k$ of a month with $T_t$ trading days shifts the current month's average
by $\frac{T_t-d_k+1}{T_t}\,\delta$ and the next month's average by
$\frac{d_k-1}{T_t}\,\delta$. The corrected proxy weights each surprise by its
within-month timing and carries the residual weight into the following month. We
adopt this temporally weighted instrument, restrict the sample to 1989:M4
onwards where the trading-day data are reliable, and estimate a log-levels VAR
with the \citet{kilian_role_2014} oil-market lag order $p=24$.

The system has the six baseline variables: real oil price, world oil
production, world oil inventories, world industrial production, U.S. industrial
production, and U.S. CPI, so that $n=6$ and $k=1$. The reduced-form VAR is
\begin{equation}
  y_t = c + \sum_{j=1}^{p} A_j\, y_{t-j} + u_t,
  \qquad u_t \sim (0,\Sigma_u),
  \label{eq:VAR_emp}
\end{equation}
estimated on 1989:M4-2017:M12; the structural innovations satisfy
$u_t = B\varepsilon_t$, where $\varepsilon_t$ meets Assumption~\ref{as:indep}.
Following Section~\ref{subsec:svarparam}, $B = H\,Q(\phi)$ with $H$ the Cholesky
factor of $\Sigma_u$ and $Q(\phi)\in SO(n)$ governed by the
$p_\phi = n(n-1)/2 = 15$ Givens angles, so $\varepsilon_t(\phi)=Q(\phi)'H^{-1}u_t$.

Indexing the oil news shock as $j=1$ and letting $m_t$ denote the instrument (robust $F = 8.9$)
the stacked moment vector combines the higher-order moment conditions of
Section~\ref{subsec:nongaussianconditions} with the proxy exclusion conditions,
giving $q_P = k(n-1)=5$ proxy conditions and $q_{NG}=75$ non-Gaussian conditions:
the co-skewness and symmetric co-kurtosis blocks of
Section~\ref{subsec:nongaussianconditions}, together
with the asymmetric co-kurtosis\footnote{\textit{Asymmetric co-kurtosis:} the
$n(n-1)$ fourth-order conditions
$\mathbb{E}[\varepsilon_{i,t}^{3}\,\varepsilon_{j,t}]=0$ for all ordered pairs
$(i,j)$ with $i\neq j$. They are not necessary for identification and are not part
of the moment block of Section~\ref{subsec:nongaussianconditions}; appending them
raises the count to $2n(n-1)+n(n-1)/2=75$. Their sample variance involves the cross moment
$\mathbb{E}[\varepsilon_{i,t}^{6}\varepsilon_{j,t}^{2}]$, which reduces to the
sixth-order marginal requirement $\mathbb{E}|\varepsilon_{i,t}|^{6+3\delta}<\infty$
under the factorization discussed in Section~\ref{subsec:nongaussianconditions};
some evidence in Appendix~\ref{appendix:evidence_nongauss} support the requirement
in this specification. Figure~\ref{fig:kl_arsets_nogrp2} shows the
identification-robust sets when the asymmetric co-kurtosis conditions are excluded.} conditions appended as a further source of information, so a total of
$q=80$ moment conditions and $p_\phi=15$ parameters. We estimate $\hat\phi$ by continuous-updating
GMM and normalize the impact matrix to a
positive diagonal; the impulse responses at horizon $i$ are
\begin{equation}
  \hat{\Theta}_i = \bigl(R\,\hat{\mathcal A}^{\,i}\,R'\bigr)\hat{B},
  \qquad i = 0,1,\dots,h,
  \label{eq:IRF_emp}
\end{equation}
with $R=[I_n,\mathbf 0_{n\times np}]$ and $\hat{\mathcal A}$ the companion matrix
of \eqref{eq:VAR_emp}. The identification-robust sets compare \textit{Proxy-only}
(the five proxy conditions that identify the target shock) with \textit{HGMM}
(addition of the target shock's higher-order moment conditions, $16$ conditions in
total), inverting the AR statistic over the five-dimensional target-shock
subspace at $10^5$ candidate rotations; the bootstrap uses block length
$\ell=36$ and $9999$ replications. Responses are normalized to a $10\%$ increase in the
real oil price on impact. Appendix~\ref{appendix:evidence_nongauss} shows some evidence of non-Gaussianity in the reduced-form innovations, and reports the
third and fourth moments of the estimated structural shocks where the joint Gaussianity of the
vector is rejected by the pooled Jarque-Bera test ($JB=35.14$ with  $12 \ d.f.$ and $p=0.0004$).

\subsubsection{Results}\label{subsubsec:oil_results}
Figure~\ref{fig:kl_mbb_vs_ar} plots the 90\% identification-robust confidence
sets along with their standard MBB confidence bands. An adverse oil \emph{supply
news} shock and a positive oil \emph{demand news} shock both raise the real price
of oil, so the two are distinguished only by the accompanying responses of oil
production, global real activity, and inventories \citep{kilian_role_2014}. Under
supply news, the anticipation of future scarcity raises storage demand on
impact, so inventories build immediately, production declines sluggishly but
persistently, and global industrial production falls. Under flow demand news, the oil price rises {together with}
global activity, production responds positively to the price incentive, and
inventories need not build.

Our estimates match the demand-news pattern, and the inventory response is
decisive. World oil inventories are drawn {down}: the response is
broadly negative over the first $12$ months, and turns significantly positive only after two years. This is the opposite of the immediate accumulation that defines a
supply news shock. World industrial production rises on impact by $0.23\%$ and is
significantly positive; U.S. industrial production is significantly positive over first ten
months and U.S. consumer prices increase for more than four years. World oil
production shows no persistent decline: it falls on impact, then rises over the following six months, consistent with
producers responding to the higher price rather than with an anticipated supply
shortfall.

\begin{figure}[H]
    \centering
     \includegraphics[width=0.98\textwidth]{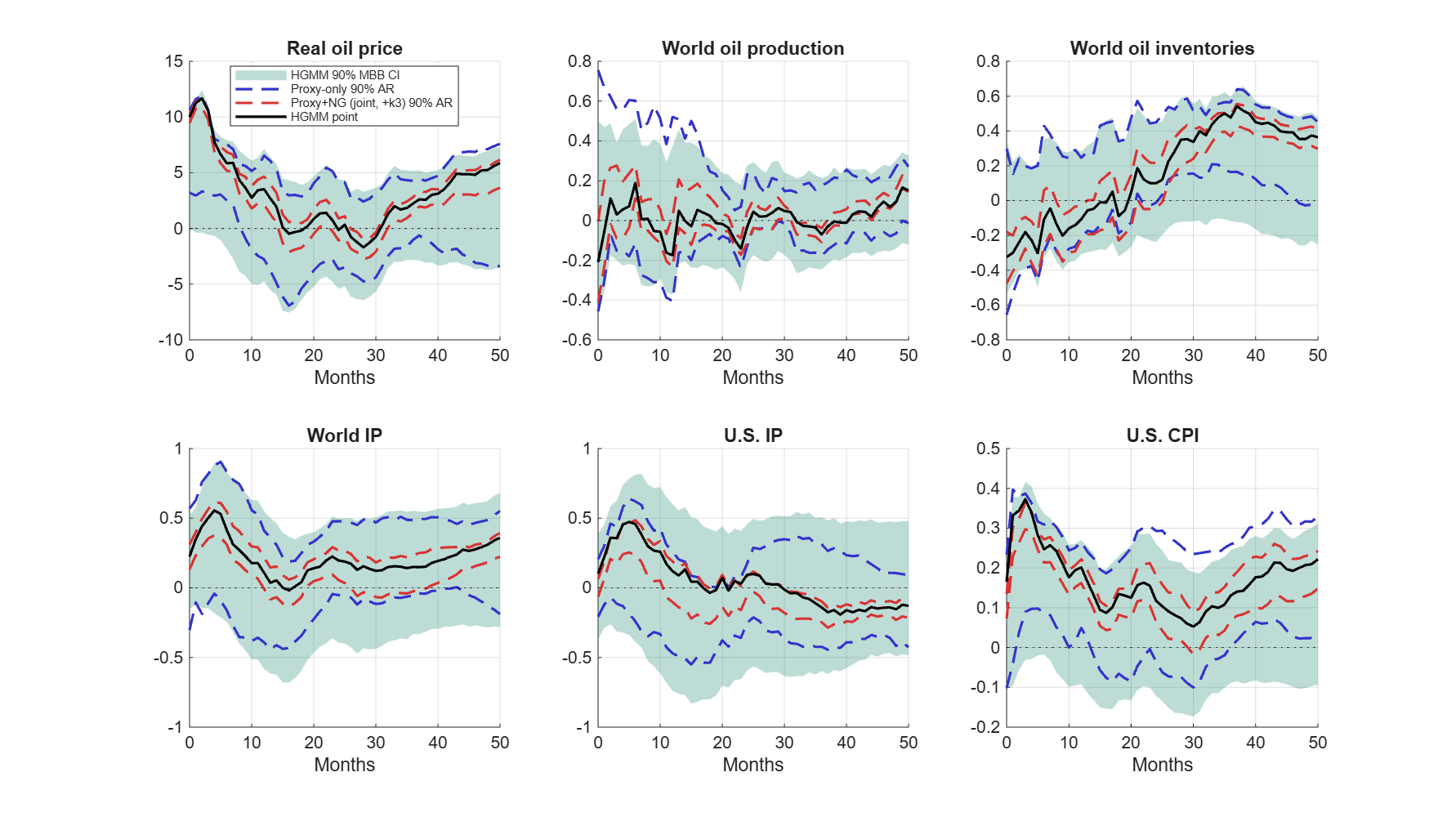}
    \caption{90\% identification-robust confidence sets with \textit{Proxy-only} and \textit{HGMM} criterions, with standard MBB percentile bands.}
    \label{fig:kl_mbb_vs_ar}
    \begin{minipage}{0.92\textwidth}
\vspace{2pt}
\scriptsize
\textit{Notes:} Each panel superimposes three $90\%$ interval estimates for the impulse response of one variable to the oil news shock. The shaded band is the residual-based moving-block-bootstrap (MBB) percentile confidence interval for the HGMM {point} estimator. The two dashed lines are the identification-robust AR sets: \textcolor{blue}{\textit{Proxy-only}} (blue, dashed) inverts the AR statistic using only the five proxy orthogonality conditions that identify the target shock; \textcolor{red}{\textit{HGMM}} (red, dashed) additionally imposes the co-skewness, asymmetric co-kurtosis, and co-kurtosis-aggregate conditions of the target shock ($16$ conditions in total). The solid black line is the HGMM point estimate. Critical values from a residual-based moving-block bootstrap ($9999$ replications, block length $\ell = 36$). Responses are normalized to a $10\%$ increase in the real oil price on impact. The $x$-axis is the horizon in months; the $y$-axis is the response in percent.
    \end{minipage}
\end{figure}

Table~\ref{tab:arlen_kilian} reports the lengths of the confidence sets under
both criteria and Table~\ref{tab:arlen_kilian_ratio} the corresponding length
ratios at horizons $0$ to $4$. The identifying contribution of the higher-order
moment conditions is large: the median reduction in set length is $71.6\%$, and
the \textit{HGMM} sets are narrower than the \textit{Proxy-only} sets at every
variable-horizon pair, with a median length ratio of $3.5$ and
a minimum of $1.6$. The gain is largest for the real oil price itself, where the
impact set contracts from a length of $7.41$ under the instrument alone to $0.75$
once the higher-order moment conditions are added, a ratio of nearly ten; world
industrial production contracts from $0.87$ to $0.18$ on impact and U.S. CPI from
$0.34$ to $0.06$.

The consequence of augmenting non-Gaussian identification is that the demand-shock characterization above exists only under the
hybrid criterion. Table~\ref{tab:arsign_kilian} in
Appendix~\ref{appendix:extra_kilian} records the horizons at which each criterion
determines the sign of a response. The instrument alone never resolves the sign
of the world oil production or U.S. industrial production response; it also never detects the inventory drawdown, since its confidence set for
inventories never lies strictly below zero at any horizon and turns positive only
after two years. A proxy-only analysis of this specification would therefore be unable
to distinguish supply news from demand news at all. Furthermore, the proxy exclusion restriction is not rejected
($\hat J_{\text{prx}}=2.87$, with asymptotic and bootstrap $p$-values of $0.71$
and $0.97$, respectively).

\begin{table}[htbp]
\centering
\caption{Length comparison of 90\% Anderson--Rubin confidence sets: \textit{Proxy-only} vs \textit{HGMM} (Kilian-corrected specification)}
\label{tab:arlen_kilian}
\renewcommand{\arraystretch}{1.15}
\resizebox{\textwidth}{!}{%
\begin{tabular}{lcccccccccc}
\toprule
 & \multicolumn{2}{c}{$h=0$} & \multicolumn{2}{c}{$h=1$} & \multicolumn{2}{c}{$h=2$} & \multicolumn{2}{c}{$h=3$} & \multicolumn{2}{c}{$h=4$} \\
\cmidrule(lr){2-3}\cmidrule(lr){4-5}\cmidrule(lr){6-7}\cmidrule(lr){8-9}\cmidrule(lr){10-11}
Variable & Proxy & HGMM & Proxy & HGMM & Proxy & HGMM & Proxy & HGMM & Proxy & HGMM \\
\midrule
Real oil price & 7.41 & \textbf{0.75} & 8.65 & \textbf{0.63} & 8.55 & \textbf{0.88} & 7.33 & \textbf{0.77} & 5.11 & \textbf{0.91} \\
World oil production & 1.21 & \textbf{0.40} & 0.97 & \textbf{0.43} & 0.70 & \textbf{0.28} & 0.71 & \textbf{0.37} & 0.71 & \textbf{0.29} \\
World oil inventories & 0.95 & \textbf{0.30} & 0.69 & \textbf{0.21} & 0.67 & \textbf{0.24} & 0.59 & \textbf{0.18} & 0.56 & \textbf{0.24} \\
World IP & 0.87 & \textbf{0.18} & 0.73 & \textbf{0.17} & 0.95 & \textbf{0.20} & 0.93 & \textbf{0.21} & 0.92 & \textbf{0.24} \\
U.S. IP & 0.41 & \textbf{0.13} & 0.42 & \textbf{0.12} & 0.53 & \textbf{0.16} & 0.55 & \textbf{0.17} & 0.71 & \textbf{0.22} \\
U.S. CPI & 0.34 & \textbf{0.06} & 0.44 & \textbf{0.07} & 0.34 & \textbf{0.06} & 0.30 & \textbf{0.07} & 0.27 & \textbf{0.05} \\
\bottomrule
\end{tabular}}
\begin{minipage}{0.96\textwidth}
\vspace{4pt}
\footnotesize
\textit{Notes:} Each entry is the length (upper minus lower bound) of the 90\% identification-robust Anderson--Rubin confidence set for the impulse response of the row variable to the target shock at horizon $h$ (months). \textit{Proxy} inverts the AR statistic using only the five proxy exclusion conditions that identify the target shock; \textit{HGMM} (Proxy+NG) additionally imposes the target shock's higher-order moment conditions, for $16$ conditions in total. The smaller length in each (variable, horizon) pair is in \textbf{bold}. Responses are normalized to a $10\%$ increase in the real oil price on impact, so entries are in percent. Sets are obtained by inverting the AR statistic over the target-shock rotation subspace with per-candidate moving-block-bootstrap critical values ($9999$ replications, block length $\ell=36$); see the construction in Appendix~\ref{sec:emp_arsets_cons}.
\end{minipage}
\end{table}

\begin{table}[htbp]
\centering
\caption{Ratio of 90\% Anderson--Rubin confidence-set lengths: \textit{Proxy-only} / \textit{HGMM} (Kilian-corrected specification)}
\label{tab:arlen_kilian_ratio}
\renewcommand{\arraystretch}{1.0}
\resizebox{0.80\textwidth}{!}{%
\begin{tabular}{lccccc|cc}
\toprule
 & \multicolumn{5}{c|}{Ratio at horizon $h$} & \multicolumn{2}{c}{Over $h=0,\dots,50$} \\
\cmidrule(lr){2-6}\cmidrule(lr){7-8}
 Variable & $h=0$ & $h=1$ & $h=2$ & $h=3$ & $h=4$ & Median & Max \\
\midrule
Real oil price & 9.90 & 13.82 & 9.76 & 9.48 & 5.61 & 9.76 & 13.82 \\
World oil production & 3.01 & 2.25 & 2.47 & 1.94 & 2.46 & 2.46 & 3.01 \\
World oil inventories & 3.21 & 3.35 & 2.81 & 3.26 & 2.34 & 3.21 & 3.35 \\
World IP & 4.90 & 4.42 & 4.84 & 4.38 & 3.90 & 4.42 & 4.90 \\
U.S. IP & 3.26 & 3.42 & 3.36 & 3.29 & 3.28 & 3.29 & 3.42 \\
U.S. CPI & 5.42 & 5.97 & 5.64 & 4.55 & 5.56 & 5.56 & 5.97 \\
\bottomrule
\end{tabular}}
\begin{minipage}{0.96\textwidth}
\vspace{4pt}
\footnotesize
\textit{Notes:} Each entry is the ratio $\text{len(Proxy)}/\text{len(HGMM)}$ of the length of the 90\% identification-robust Anderson--Rubin confidence set under \textit{Proxy-only} to the corresponding length under \textit{HGMM}, for the impulse response of the row variable to the target shock. Values above one indicate that the \textit{Proxy-only} set is longer. The last two columns summarize the ratio over all $51$ horizons $h=0,\dots,50$. Construction as in Table~\ref{tab:arlen_kilian}.
\end{minipage}
\end{table}

Other results with the specification without the asymmetric co-kurtosis moment block, the MBB
percentile bands for the proxy-SVAR estimator, 
and sensitivity to the bootstrap block length are reported in
Appendix~\ref{appendix:extra_kilian}. A robustness exercise with application to the \citet{kanzig_macroeconomic_2021} baseline instrument and
specification is reported in Appendix~\ref{appendix:kanzig_rob}.

\subsection{Euro-area monetary policy shock}\label{subsec:emp_mp}

Our second illustration identifies a Euro-area monetary policy shock using the
high-frequency interest-rate surprises of \citet{altavilla_measuring_2019} as external instruments. They
measure surprises as changes in overnight-index-swap (OIS) rates in narrow
windows around ECB policy communications and rotate the resulting surprises into
four orthogonal factors: \emph{target}, \emph{timing}, \emph{forward
guidance}, and \emph{quantitative easing}. For the detailed construction of the
surprises and the factor rotation, we refer the reader to
\citet{altavilla_measuring_2019}.

\subsubsection{Specification}\label{subsubsec:mp_spec}
We follow the specification of four variables ordered with
the policy indicator first: the two-year OIS rate (the target), the log Euro
Stoxx 50 equity index, the log EUR/USD exchange rate, and the two-year
inflation-linked swap rate, so that $n=4$ and $k=1$. The VAR has the form
\eqref{eq:VAR_emp} with $p=15$ lags, estimated at daily frequency over
2005:M3-2007:M12. We use the \emph{forward guidance} factor from the press-conference window as
the external instrument $m_t$ (robust $F = 9.9$), since every variable in this VAR is a
forward-looking asset price, and all four respond to news about the expected path
of policy. A surprise
that is orthogonal to the contemporaneous policy decision isolates this
path component. With $n=4$, the rotation has $p_\phi=n(n-1)/2=6$ Givens angles. We estimate
$\hat\phi$ by continuous-updating GMM, stacking $q_P=k(n-1)=3$ proxy exclusion
conditions with the $q_{NG}=n(n-1)+n(n-1)/2=18$ higher-order moment conditions\footnote{Unlike the oil-news specification of
Section~\ref{subsubsec:oil_spec}, we do not append the asymmetric co-kurtosis
conditions here. Matching a Student-$t$ to the estimated excess kurtosis of the
third structural shock gives $\hat\nu=4.7$, which does not clear the $6+3\delta$ moment bound required
by the asymmetric conditions under the factorization of
Section~\ref{subsec:nongaussianconditions}, whose sample variance would therefore not be
consistently estimable. Identification holds on the conditions above in either
case; see Appendix~\ref{appendix:evidence_nongauss}.} of
Section~\ref{subsec:nongaussianconditions}. Responses are reported for a
one-standard-deviation monetary policy shock. Appendix~\ref{appendix:evidence_nongauss} reports the corresponding moments of the structural shocks, where all four excess kurtoses are positive, and the pooled Jarque-Bera test clearly rejects their joint Gaussianity.

\subsubsection{Results}\label{subsubsec:mp_results}
Figure~\ref{fig:al_mbb_vs_ar} plots the 90\% identification-robust confidence sets along with their standard MBB confidence bands. A contractionary forward guidance surprise raises the two-year OIS rate by
$2.7$ basis points on impact for a one-standard-deviation shock and decays slowly: the response is still
$0.7$ basis points after $200$ business days. Equity prices fall on impact
($-0.09$ log points), and then rise after two weeks. The euro appreciates from about five months onward, the only horizons at which the exchange rate response is signed. Though the medium-run pattern of a
persistent increase in expected rates and a
stronger euro is the conventional transmission of a tightening surprise, the delayed {increase} in equity prices is the
information-effect, which the literature associates with hawkish communication that also
conveys a more favorable assessment of the future economic outlook
\citep{nakamura_high-frequency_2018, jarocinski_deconstructing_2020,
miranda-agrippino_transmission_2021}. 

Table~\ref{tab:arlen_altavilla} reports the lengths of confidence sets at both criteria: \textit{HGMM} and \textit{Proxy-only}, and
Table~\ref{tab:arlen_altavilla_ratio} the corresponding length ratios at horizons
of $0$, $50$, $100$, $150$, and $200$ business days.
The identifying contribution of the higher-order moment conditions is modest relative to oil-news shock in Section \ref{subsec:emp_oil}, but still substantial with a median reduction
in length of $26.4\%$; by
variable, the median reduction is $20\%$ for the two-year OIS rate, $30\%$ for
the equity index, $29\%$ for the exchange rate, and $39\%$ for the
inflation-linked swap, where the sets are more than halved at some horizons. The \textit{HGMM} confidence sets are narrower than the \textit{Proxy-only} sets
at every variable-horizon pair. The two sets are not nested as the per-candidate bootstrap critical value rises with the number of conditions imposed, and Appendix~\ref{appendix:gk} reports a specification in which the added conditions are not enough to offset the additional degrees of freedom. Also, Table \ref{tab:arsign_altavilla} in Appendix \ref{appendix:extra_altavilla} shows these narrower confidence sets allow us to interpret the signs of the point estimates in longer horizons. Furthermore, the proxy exclusion restriction is not rejected ($\hat{J}_{prx} = 3.18$, with asymptotic and bootstrap $p$-values of $0.36$ and $0.75$, respectively). Other results with the
MBB percentile bands (including for the proxy-estimator), AR sets with alternate bootstrap block length are reported in Appendix \ref{appendix:extra_altavilla}. A robustness
exercise on the \citet{gertler_monetary_2015} specification is included in Appendix \ref{appendix:gk}.

\begin{table}[htbp]
\centering
\caption{Length comparison of 90\% Anderson--Rubin confidence sets: \textit{Proxy-only} vs \textit{HGMM} (Altavilla monetary-policy specification)}
\label{tab:arlen_altavilla}
\renewcommand{\arraystretch}{1.15}
\resizebox{\textwidth}{!}{%
\begin{tabular}{lcccccccccc}
\toprule
 & \multicolumn{2}{c}{$h=0$} & \multicolumn{2}{c}{$h=50$} & \multicolumn{2}{c}{$h=100$} & \multicolumn{2}{c}{$h=150$} & \multicolumn{2}{c}{$h=200$} \\
\cmidrule(lr){2-3}\cmidrule(lr){4-5}\cmidrule(lr){6-7}\cmidrule(lr){8-9}\cmidrule(lr){10-11}
Variable & Proxy & HGMM & Proxy & HGMM & Proxy & HGMM & Proxy & HGMM & Proxy & HGMM \\
\midrule
2Y OIS & 0.0180 & \textbf{0.0162} & 0.0308 & \textbf{0.0264} & 0.0302 & \textbf{0.0245} & 0.0238 & \textbf{0.0183} & 0.0193 & \textbf{0.0141} \\
Euro Stoxx 50 & 1.021 & \textbf{0.799} & 0.484 & \textbf{0.278} & 0.363 & \textbf{0.230} & 0.337 & \textbf{0.259} & 0.328 & \textbf{0.258} \\
EUR/USD & 0.485 & \textbf{0.349} & 0.231 & \textbf{0.189} & 0.188 & \textbf{0.111} & 0.192 & \textbf{0.122} & 0.211 & \textbf{0.159} \\
2Y infl.-linked swap & 0.0191 & \textbf{0.0086} & 0.0158 & \textbf{0.0120} & 0.0064 & \textbf{0.0049} & 0.0056 & \textbf{0.0025} & 0.0036 & \textbf{0.0016} \\
\bottomrule
\end{tabular}
}
\begin{minipage}{0.96\textwidth}
\vspace{4pt}
\footnotesize
\textit{Notes:} Each entry is the length (upper minus lower bound) of the 90\% identification-robust Anderson--Rubin confidence set for the impulse response of the row variable to the target shock at horizon $h$ (business days). \textit{Proxy} inverts the AR statistic using only the $q_P=3$ proxy exclusion conditions that identify the target shock; \textit{HGMM} additionally imposes the higher-order moment conditions, for $7$ conditions in total. The smaller length in each (variable, horizon) pair is in \textbf{bold}. The interest-rate and swap responses are in percentage points; the equity and exchange-rate responses are in log points. Sets are obtained by inverting the AR statistic over the target-shock rotation subspace with per-candidate moving-block-bootstrap critical values ($9999$ replications, block length $\ell=50$); see the construction in Appendix~\ref{sec:emp_arsets_cons}.
\end{minipage}
\end{table}

\begin{table}[htbp]
\centering
\caption{Ratio of 90\% Anderson--Rubin confidence-set lengths: \textit{Proxy-only} / \textit{HGMM} (Altavilla monetary-policy specification)}
\label{tab:arlen_altavilla_ratio}
\renewcommand{\arraystretch}{1.0}
\resizebox{0.80\textwidth}{!}{%
\begin{tabular}{lccccc|cc}
\toprule
 & \multicolumn{5}{c|}{Ratio at horizon $h$} & \multicolumn{2}{c}{Over $h=0,\dots,200$} \\
\cmidrule(lr){2-6}\cmidrule(lr){7-8}
 Variable & $h=0$ & $h=50$ & $h=100$ & $h=150$ & $h=200$ & Median & Max \\
\midrule
2Y OIS & 1.11 & 1.17 & 1.23 & 1.30 & 1.37 & 1.25 & 1.38 \\
Euro Stoxx 50 & 1.28 & 1.74 & 1.58 & 1.30 & 1.27 & 1.43 & 1.86 \\
EUR/USD & 1.39 & 1.23 & 1.69 & 1.56 & 1.32 & 1.41 & 1.87 \\
2Y infl.-linked swap & 2.22 & 1.32 & 1.31 & 2.24 & 2.25 & 1.64 & 2.33 \\
\bottomrule
\end{tabular}
}
\begin{minipage}{0.96\textwidth}
\vspace{4pt}
\footnotesize
\textit{Notes:} Each entry is the ratio $\text{len(Proxy)}/\text{len(HGMM)}$ of the length of the 90\% identification-robust Anderson--Rubin confidence set under \textit{Proxy-only} to the corresponding length under \textit{HGMM}, for the impulse response of the row variable to the target shock. Values above one indicate that the \textit{Proxy-only} set is longer. The last two columns summarize the ratio over all $201$ horizons $h=0,\dots,200$. Construction as in Table~\ref{tab:arlen_altavilla}.
\end{minipage}
\end{table}

\begin{figure}[H]
    \centering
    \includegraphics[width=0.75\textwidth]{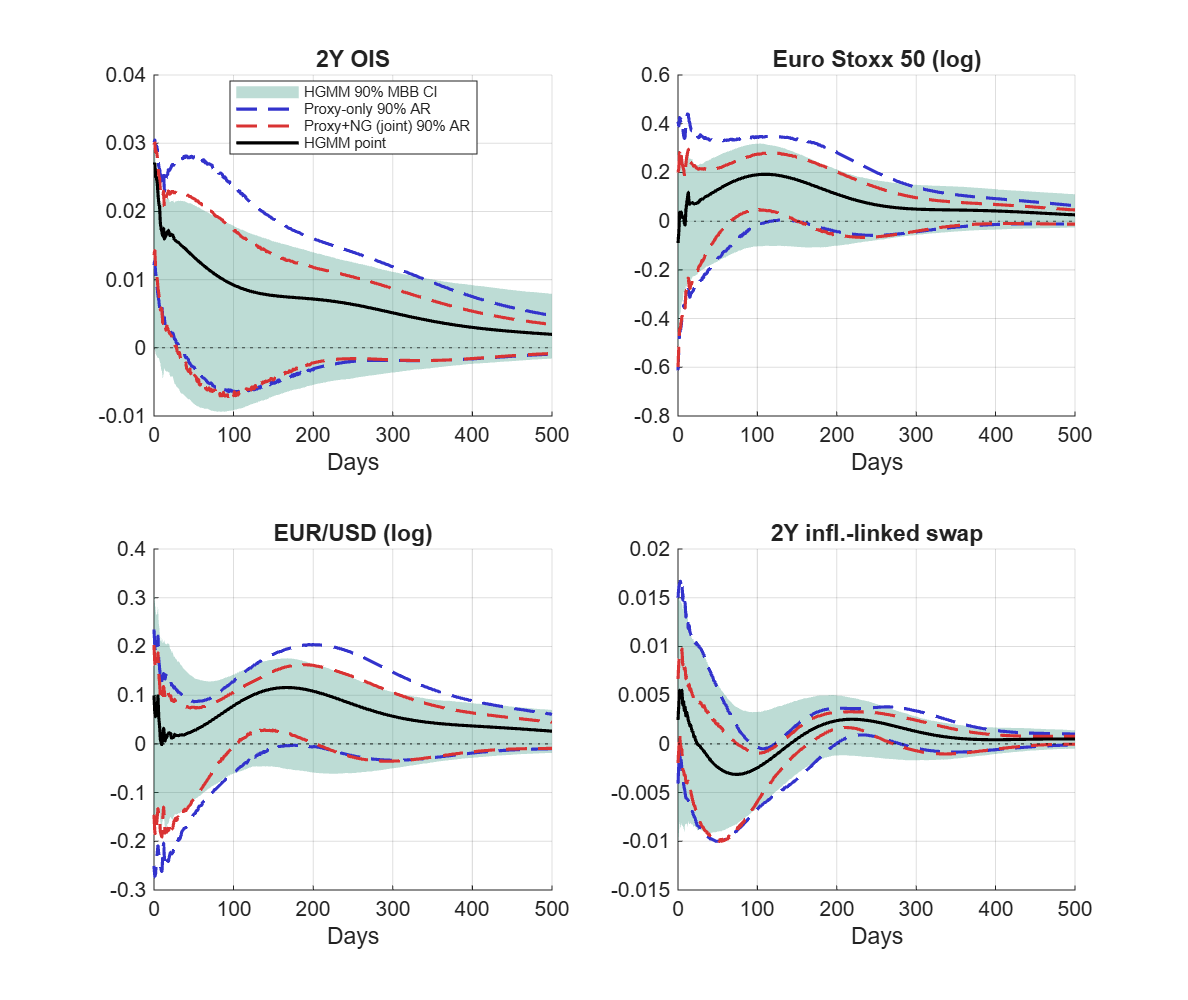}
    \caption{90\% identification-robust confidence sets with \textit{Proxy-only} and \textit{HGMM} criteria, with standard MBB percentile bands.}
    \label{fig:al_mbb_vs_ar}
    \begin{minipage}{0.92\textwidth}
\vspace{2pt}
\scriptsize
\textit{Notes:} Each panel superimposes three $90\%$ interval estimates for the impulse response of one variable to the monetary policy shock. The shaded band is the residual-based moving-block-bootstrap (MBB) percentile confidence interval for the HGMM {point} estimator. The two dashed lines are the identification-robust AR sets: \textcolor{blue}{\textit{Proxy-only}} (blue, dashed) inverts the AR statistic using only the three proxy-exclusion conditions that identify the target shock; \textcolor{red}{\textit{Proxy+NG}} (red, dashed) is the HGMM criterion which additionally imposes the three co-skewness conditions of the target shock and the fourth-order aggregate, for $7$ conditions in total. Critical values are from a residual-based moving-block bootstrap ($9999$ bootstrap replications, block length $\ell = 50$), computed per candidate rotation. The solid black line is the HGMM point estimate. Responses are to a one standard deviation contractionary monetary policy shock; the $x$-axis is the horizon in business days.
    \end{minipage}
\end{figure}

\section{Conclusion}
\label{sec:conclusion}

This paper combines identification through external instruments with identification through non-Gaussianity, stacking higher-order moment conditions and proxy exclusion restrictions into a single over-identified GMM criterion. With non-Gaussian structural shocks, the higher-order moment conditions eliminate the continuous rotational indeterminacy in $\mathcal{O}(n)$ and reduce the identified set to the finite discrete group $\mathcal{B}$; a sufficient number of valid proxies then selects a unique element from $\mathcal{B}$. The proxy-targeted columns of $B_0$ are globally point-identified and the remaining columns are identified up to signed permutation. Neither channel achieves this alone: the estimator is robust to weak instruments, and it does not require an arbitrary labeling convention to give the estimated shocks an economic interpretation.

Under local-to-zero proxy relevance the estimator remains consistent, and the identification-robust Anderson--Rubin confidence sets are substantially narrower than those based on the proxy orthogonality conditions alone. The hybrid estimator is also more efficient than the standalone non-Gaussian GMM estimator, which is less obvious: even when the proxies lose their identifying power, their long-run covariance with the non-Gaussian moment conditions continues to sharpen the effective weighting of the non-Gaussian block. In the local-to-zero limit this gain
vanishes and the hybrid estimator attains the non-Gaussian bound exactly, so
carrying a weak instrument in the criterion never costs efficiency.

We propose two mutually orthogonal specification tests, $J_{\mathrm{prx}}$ and $J_{\mathrm{NG}}$, which assess the maintained assumptions of each identification channel separately. Under the joint null both follow $\chi^2$ limits. The effective degrees of freedom compress below the nominal count by $p_\phi - \mathrm{rank}(G_{\text{complement}})$ whenever the complementary Jacobian block is rank-deficient. A residual-based moving block bootstrap with Hall--Horowitz recentering supplies critical values that adapt both to this compression and to the fourth-order cumulants in the weighting matrix.

Some extensions can be pursued. The first is to explore the estimator when either of the identification source is weak: both channels do not enter the criterion at the same informative level, so this problem does not reduce to a familiar weak-identification analysis with two sources, and requires further research. The second is partial identification through non-Gaussianity, where only some shocks satisfy the higher-order moment conditions; this would involve estimation of the target column directly. The reduction used for the
identification-robust sets shows that the target-shock responses are a function of
the unit vector $q_1$ alone. This implies that a criterion built from the moment conditions
involving the target shock could be minimized over the sphere $\mathcal{S}^{n-1}$
without ever forming the full rotation. This would make the dimension of the
estimation problem grow linearly rather than quadratically in $n$, at the cost of
giving up the joint identification of the remaining columns. It is the natural
route to applying the framework to applications larger than those considered here.
 


\newpage

\bibliographystyle{apalike}
\bibliography{combined}
\pagebreak
\appendix
\counterwithin{figure}{section}
\counterwithin{table}{section}

\section{Preliminary notations and key matrices}
\label{appendix:preliminaries}

We collect preliminary notations and asymptotic conventions used
throughout the appendix. We also derive key matrices used for the proofs, and the moment conditions are stated again for the reader's convenience. We adopt the basic SVAR parameterization of Section \ref{subsec:svarparam}.

\subsection{Preliminaries}

Let $X_T$ be a sequence of random vectors and $X$ a random vector on a common probability space with measure $\mathbb{P}$. We write $X_T \xrightarrow{p} X$ for convergence in probability and $X_T \xrightarrow{d} X$ for convergence in distribution. For a sequence of random variables $Y_T$, $Y_T = o_p(1)$ denotes convergence to zero in probability and $Y_T = O_p(1)$ denotes stochastic boundedness: for every $\epsilon > 0$ there are a finite $M_{\epsilon} > 0$ and an integer $T_0$ with $\mathbb{P}(\|Y_T\| > M_{\epsilon}) < \epsilon$ for all $T > T_0$. For a deterministic sequence $a_T$, $O(a_T)$ denotes a deterministic sequence bounded by a constant multiple of $a_T$. Between non-random matrices, $\rightarrow$ is convergence in operator norm. We write $Z \sim \mathcal{N}(\mu, \Sigma)$ for the multivariate Gaussian with mean $\mu$ and covariance $\Sigma$, and $\chi^2(r)$ for the chi-squared distribution with $r$ degrees of freedom.

Let $v$ be a vector and $A$ a real matrix. We write $\|v\| \coloneq \sqrt{v^\top v}$ for the Euclidean norm and $\|A\| \coloneq \sup_{\|x\|=1} \|Ax\|$ for the induced spectral norm, $\mathrm{vec}(A)$ for the column-wise vectorization of $A$, and $\otimes$ for the Kronecker product. The column space and null space of $A$ are $\mathrm{col}(A)$ and $\mathrm{null}(A)$, and $A^{+}$ is the Moore--Penrose pseudo-inverse. For symmetric $A$, $\lambda_{\min}(A)$ and $\lambda_{\max}(A)$ are the smallest and largest eigenvalues.

Definiteness is stated through the quadratic form $x^\top A x$ for symmetric $A$: $A \succ 0$ means $x^\top A x > 0$ for all $x \neq 0$, and $A \succeq 0$ means $x^\top A x \geq 0$ for all $x$.

\subsection{Moment conditions}\label{appendix:momentcondtions}
Let $g_t(\phi) = (g_{NG,t}(\phi)', g_{P,t}(\phi)' )'$ collect the
 {non-Gaussian} ($NG$) and {proxy} ($P$) moment contributions,
with dimensions $q_{NG}$ and $q_P = k(n-1)$, respectively, and total
$q = q_{NG} + q_P$. Define:
\begin{align*}
  \bar{g}_T(\phi)
    &= T^{-1}\sum_{t=1}^T g_t(\phi)
     = \bigl(\bar{g}_{NG,T}(\phi)' ,\; \bar{g}_{P,T}(\phi)' \bigr)',
     \qquad \bar{g}_T : \Phi \to \mathbb{R}^q, \\
  g(\phi) &= \mathbb{E}\bigl[\bar{g}_T(\phi)\bigr], \\
  G(\phi) &= \frac{\partial g(\phi)}{\partial\phi'}
            \in \mathbb{R}^{q\times p_\phi}, \qquad
  G \equiv G(\phi_0) = \bigl(G_{NG}', G_P' \bigr)',
\end{align*}
where $G_P = \partial g_P(\phi_0)/\partial\phi'$ and
$G_{NG} = \partial g_{NG}(\phi_0)/\partial\phi'$. The consistent
sample Jacobian is $\hat{G}_T(\phi)$.

The {long-run variance} of the moments is $S =
\mathrm{Avar}(T^{1/2}\bar{g}_T(\phi_0)) > 0$, estimated consistently
by $\hat{S}_T$. Conformably with the
$q_{NG}/q_{P}$ partition, write
\[
  S = \begin{pmatrix} S_{NN} & S_{NP} \\ S_{PN} & S_{PP} \end{pmatrix},
  \qquad S_{PN} \equiv S_{P,NG},\quad S_{NP} = S_{PN}',
\]
where $S_{PP} \in \mathbb{R}^{q_P\times q_P}$, $S_{NN} \in
\mathbb{R}^{q_{NG}\times q_{NG}}$, $S_{PN} \in \mathbb{R}^{q_P\times
q_{NG}}$. The efficient weighting matrix is $W = S^{-1}$, with
$\hat{W}_T = \hat{S}_T^{-1}$.

\subsection{Key derived matrices}\label{appendix:keymatrices}

\begin{itemize}
  \item \emph{Schur complements} of $S$:
    \begin{equation}\label{eq:app_schurcomplements}
      \Sigma_P \equiv S_{PP} - S_{PN}S_{NN}^{-1}S_{NP} > 0, \qquad
      \Sigma_{NG} \equiv S_{NN} - S_{NP}S_{PP}^{-1}S_{PN} > 0.
    \end{equation}
    Positive definiteness follows from $S > 0$.
  \item \emph{Annihilator matrix}:
    \begin{equation}\label{eq:app_annihilator}
      \Omega \equiv S - G(G'S^{-1}G)^{-1}G' \succeq 0,
      \quad \mathrm{rank}(\Omega) = q - p_\phi,
    \end{equation}
    and the sample-level annihilator operator
    $M \equiv I_q - G(G'S^{-1}G)^{-1}G'S^{-1}$, satisfying
    $T^{1/2}\bar{g}_T(\hat\phi) = M\,T^{1/2}\bar{g}_T(\phi_0) +
    o_p(1)$.
  \item \emph{Orthogonalizing (Frisch-Waugh) operators}:
    \begin{equation}\label{eq:app_fwloperators}
      L \equiv [I_{q_P},\; -S_{PN}S_{NN}^{-1}]
        \in \mathbb{R}^{q_P\times q}, \qquad
      K \equiv [-S_{NP}S_{PP}^{-1},\; I_{q_{NG}}]
        \in \mathbb{R}^{q_{NG}\times q}.
    \end{equation}
  \item \emph{Partialled (projected) Jacobians}:
   
  \begin{equation}\label{eq:app_partialjacobians}
    \tilde{G}_P \equiv LG \ \text{;} \ \tilde{G}_{NG} \equiv KG
  \end{equation}
  
    \item \emph{Orthogonalized sample moments}:
    \begin{equation}\label{app_orthosamplemoments}
      \tilde{g}_{P,T} \equiv L\bar{g}_T(\hat\phi), \qquad
      \tilde{g}_{NG,T} \equiv K\bar{g}_T(\hat\phi).
    \end{equation}

  \item \emph{Asymptotic variance matrices of orthogonalized moments}:
  \begin{equation}\label{app_asympvarorthomoments}
      \mathcal Q_P \equiv L\Omega L', \qquad \mathcal Q_{NG} \equiv K\Omega K'.
  \end{equation}
    These are the limiting covariance matrices of $T^{1/2}
    \tilde{g}_{P,T}(\hat\phi)$ and $T^{1/2}\tilde{g}_{NG,T}(\hat\phi)$,
    respectively (see Appendix \ref{appendix:proofs}).
\end{itemize}
\newpage


\section{Proofs} \label{appendix:proofs}
This section contains the proofs for all the lemmas, propositions and corollaries stated in the paper.

\subsection{Proof of Proposition \ref{prop:identification_unified}}\label{appendix:proof_identification_unified}
First we prove the Lemma \ref{lem:ng_set}, then we proceed with the main proof:

\begin{proof}
Set $R\equiv Q(\phi)'Q_0$. As a product of orthogonal matrices ($Q(\phi)\in SO(n)$,
$Q_0\in O(n)$), $R$ is orthogonal, the candidate shocks are
$\varepsilon_t(\phi)=R\varepsilon_t$, and $R=I$ at $\phi_0$. Write $\varepsilon_{i,t}(\phi)=r_i'\varepsilon_t$, with $r_i$ the $i$-th row of $R$, and set $X=\varepsilon_{i,t}(\phi)$, $Y=\varepsilon_{j,t}(\phi)$. For mean-zero variables, the moment--cumulant identities are:
\[
\mathbb{E}[X^2Y]=\operatorname{cum}(X,X,Y),\qquad
\mathbb{E}[X^2Y^2]=\operatorname{cum}(X,X,Y,Y)+\mathbb{E}[X^2]\,\mathbb{E}[Y^2]+2\,\mathbb{E}[XY]^2 .
\]
By Assumption~\ref{as:indep}, $\mathbb{E}[X^2]=\|r_i\|^2=1$, $\mathbb{E}[Y^2]=\|r_j\|^2=1$, and $\mathbb{E}[XY]=r_i'r_j=0$ for $i\neq j$ (unit variance, uncorrelatedness, $R$ orthonormal); and by multilinearity of cumulants with the diagonal third- and fourth-order tensors,
\[
\operatorname{cum}(X,X,Y)=\sum_{k}\kappa_{3,k}\,r_{ik}^2 r_{jk},\qquad
\operatorname{cum}(X,X,Y,Y)=\sum_{k}\kappa_{4,k}\,r_{ik}^2 r_{jk}^2 .
\]
Substituting, for $i\neq j$,
\begin{align}
f^{(1)}_{ij}(R) &\equiv \mathbb{E}[\varepsilon_{i,t}^2(\phi)\varepsilon_{j,t}(\phi)]
   = \sum_{k}\kappa_{3,k}\,r_{ik}^2 r_{jk}, \label{eq:C1}\\
f^{(2)}_{ij}(R) &\equiv \mathbb{E}[\varepsilon_{i,t}^2(\phi)\varepsilon_{j,t}^2(\phi)]-1
   = \sum_{k}\kappa_{4,k}\,r_{ik}^2 r_{jk}^2, \label{eq:C2}
\end{align}
where $\kappa_{3,k}=\mathbb{E}[\varepsilon_{k,t}^3]$ and $\kappa_{4,k}=\mathbb{E}[\varepsilon_{k,t}^4]-3$ (in \eqref{eq:C2} the pair terms $\mathbb{E}[X^2]\mathbb{E}[Y^2]+2\mathbb{E}[XY]^2=1$ cancel the $-1$).\\

\noindent\emph{Full column rank of $G_{NG}(\phi_0)$.}
Since $R=Q(\phi)'Q_0\in O(n)$ equals $I$ at $\phi_0$, differentiating $RR'=I$ gives
$\dot R+\dot R'=0$: first-order perturbations of $R$ at the identity are
skew-symmetric, so $R=\exp(\Phi)=I+\Phi+O(\|\Phi\|^2)$ with $\Phi'=-\Phi$. Its
$p_\phi=n(n-1)/2$ strictly upper-triangular entries $\{\Phi_{kl}\}_{k<l}$ are the
first-order images of the Givens angles, so we set $\phi_{kl}=\Phi_{kl}$. Differentiate at $R=I$, writing $R=I+\Phi$ with $\Phi$ skew-symmetric and free
angles $\phi_{kl}=\Phi_{kl}$, $k<l$ (so $\Phi_{lk}=-\phi_{kl}$), and substituting
$r_{ik}=\delta_{ik}+\Phi_{ik}$. For the co-kurtosis conditions,
\[
   f^{(2)}_{ij}=\sum_k\kappa_{4,k}(\delta_{ik}+\Phi_{ik})^2(\delta_{jk}+\Phi_{jk})^2
   =O(\|\Phi\|^2),\qquad i\neq j,
\]
because every first-order term carries the factor $\delta_{ik}\delta_{jk}=0$. Hence,
the co-kurtosis block is flat at $\phi_0$ and adds nothing to the Jacobian. For the
co-skewness conditions,
\[
   f^{(1)}_{ij}=\sum_k\kappa_{3,k}(\delta_{ik}+\Phi_{ik})^2(\delta_{jk}+\Phi_{jk})
   =\kappa_{3,i}\Phi_{ji}+O(\|\Phi\|^2).
\]
Thus, each free angle $\phi_{kl}$ enters exactly two conditions,
\[
   f^{(1)}_{kl}=-\kappa_{3,k}\,\phi_{kl}+O(\|\Phi\|^2),
   \qquad
   f^{(1)}_{lk}=\kappa_{3,l}\,\phi_{kl}+O(\|\Phi\|^2),
\]
so the column of $G_{NG}(\phi_0)$ for $\phi_{kl}$ is $(-\kappa_{3,k},\,\kappa_{3,l})'$
on rows $f^{(1)}_{kl},f^{(1)}_{lk}$ and zero elsewhere. Distinct angles occupy
distinct rows, so the columns are linearly independent if and only if each is
nonzero, i.e.\ $(\kappa_{3,k},\kappa_{3,l})\neq(0,0)$ for every pair $k<l$. By
Assumption~\ref{as:gau}(a) at most one skewness is zero, so every pair has a nonzero
entry and $G_{NG}(\phi_0)$ has full column rank $p_\phi$.

\medskip
\noindent\emph{Identified set.}
Let $R\in O(n)$ satisfy $f^{(2)}_{ij}(R)=0$ for all $i\neq j$, and write
$s_{ik}=r_{ik}^2\ge0$. By Assumption~\ref{as:gau}(b) the $n-1$ non-mesokurtic shocks
have excess kurtosis of a common sign; here consider it positive. For any pair $i\neq j$,
\[
   0=f^{(2)}_{ij}(R)=\sum_k\kappa_{4,k}\,s_{ik}s_{jk}
   \;\Longrightarrow\;
   s_{ik}s_{jk}=0\ \text{ for every non-mesokurtic } k,
\]
a sum of nonnegative terms being zero. Hence, each of these $n-1$ columns of $R$ has
at most one nonzero entry; by orthonormality $\sum_i s_{ik}=1$, so that entry equals
one, $r_{ik}=\pm1$, and the rest of the column is zero. The $n-1$ columns are thus
distinct signed unit vectors, and the remaining column, orthonormal to them, is the
last signed unit vector. Therefore, $R$ is a signed permutation,
$\varepsilon_t(\phi)$ is a signed permutation of $\varepsilon_t$, and
$B(\phi)=HQ(\phi)\in\mathcal{B}$; combined with the rank result, the non-Gaussian
identified set is exactly $\mathcal{B}$.

\medskip
The common-sign restriction used here is that of \citet{lanne_identifying_2023} where the shocks are assumed to be leptokurtic;
the underlying signed-permutation indeterminacy is the classical result of
\citet{comon_independent_1994}.
\end{proof}
\begin{remark}[Local versus global identification]\label{rem:local_global_ng}
The two conclusions of Lemma~\ref{lem:ng_set} draw on different moments.
{Local} identification (full-rank $G_{NG}(\phi_0)$) comes from the
co-skewness conditions and needs Assumption~\ref{as:gau}(a). The symmetric
co-kurtosis conditions are first-order flat at $\phi_0$ ($G^{(2)}(\phi_0)=0$) and
add curvature only at second order. {Global} reduction to $\mathcal{B}$ comes
from the co-kurtosis conditions and needs Assumption~\ref{as:gau}(b). With mixed
kurtosis signs the conditions can be satisfied by cancellation at non-permutation
rotations, whereas a common sign turns them into a sum of non-negative terms and
separates signed permutations \citep{lanne_identifying_2023}; dropping the sign
restriction leaves only generic identification. If we weaken the
Assumption~\ref{as:gau}(a) to admit two or more symmetric
($\kappa_3=0$) shocks, those shocks would remain globally identified through
kurtosis but locally under-identified by $g_{NG}$, so standard $\sqrt T$ inference
would degenerate along the corresponding angles. The local regularity could be
restored by appending self-kurtosis $\mathbb{E}[\varepsilon_i^4]$ or asymmetric
co-kurtosis $\mathbb{E}[\varepsilon_i^3\varepsilon_j]$ conditions, at the cost of
higher-order moments in the weighting matrix. We maintain
Assumption~\ref{as:gau}.
\end{remark}
Now we provide proof of the main proposition:
\begin{proof}
By Lemma~\ref{lem:ng_set}, any
$\tilde{B}_0 \in \mathcal{B}$ takes the form 
$\tilde{B}_0 = B_0 P\Lambda$ with corresponding alternative 
shocks
\begin{equation}\label{eq:alt_eps_u}
    \tilde{\varepsilon}_t 
    = \tilde{B}_0^{-1}u_t 
    = \Lambda P'\varepsilon_t,
\end{equation}
where $\Lambda^{-1} = \Lambda$ and $P^{-1} = P'$. Since 
$\mathcal{B}$ is the non-Gaussian identified set, it suffices to
show that the proxy exclusion conditions of Assumption \ref{ass:proxy_u} 
and the sign normalization of Assumption \ref{ass:norm_u} 
jointly select a unique element from $\mathcal{B}$ for the 
target columns, and characterize the indeterminacy 
for the remaining columns.

For $\tilde{B}_0$ to satisfy Assumption \ref{ass:proxy_u}, the 
cross-moment $\mathbb{E}[m_t\tilde{\varepsilon}_t']$ must 
preserve the required block-zero structure. Substituting 
\eqref{eq:alt_eps_u} and using 
$\mathbb{E}[m_t\varepsilon_t'] = [\Psi\;\;0]$ with $\Psi$,  
a $k \times k$ nonsingular diagonal matrix,
\begin{equation}\label{eq:proxy_blocks_u}
    \mathbb{E}[m_t\tilde{\varepsilon}_t']
    = [\,\Psi\;\;0\,]\,P\Lambda
    = \bigl[\;\Psi P_{11}\Lambda_1 \;\;\;
    \Psi P_{12}\Lambda_2\;\bigr],
\end{equation}
where $P$ and $\Lambda$ are partitioned conformably with block 
sizes $(k, n-k)$. The $k\times(n-k)$ zero-block requirement 
in Assumption \ref{as:validproxy} forces 
$\Psi P_{12}\Lambda_2 = 0$. Since $\Psi$ and $\Lambda_2$ 
are both nonsingular, this gives $P_{12} = 0_{k\times(n-k)}$, 
and the permutation structure of $P$ then implies 
$P_{21} = 0_{(n-k)\times k}$. Hence, $P$ is block-diagonal:
\begin{equation}\label{eq:P_blockdiag}
    P = \begin{pmatrix} P_{11} & 0 \\ 0 & P_{22} \end{pmatrix},
    \qquad P_{11} \in \mathcal{P}_k, \quad 
    P_{22} \in \mathcal{P}_{n-k}.
\end{equation}
The relevance block of \eqref{eq:proxy_blocks_u} reduces to 
$\Psi P_{11}\Lambda_1 = \tilde{\Psi}$, where $\tilde{\Psi}$ 
is $k\times k$ diagonal matrix. The product $\Psi P_{11}$ is a 
scaled permutation matrix with non-zero entry $\psi_i$ at 
position $(i, \sigma(i))$, where $\sigma$ is the permutation 
induced by $P_{11}$. For $\Psi P_{11}\Lambda_1$ to be 
diagonal, each non-zero entry must lie on the main diagonal, 
requiring $\sigma(i) = i$ for all $i = 1,\ldots,k$. Hence, 
$P_{11} = I_k$. Substituting, $\Psi\Lambda_1 = \tilde{\Psi}$ 
implies $\psi_i\lambda_i = \tilde{\psi}_i > 0$; since 
$B_{ii} > 0$ (sign-normalized), we obtain $\lambda_i = +1$ for all $i$, so 
$\Lambda_1 = I_k$. The allowable transformation therefore 
reduces to
\begin{equation}\label{eq:residual_transform}
    P\Lambda = \begin{bmatrix} I_k & 0 \\ 0 & P_{22}\Lambda_2 
    \end{bmatrix},
    \qquad P_{22} \in \mathcal{P}_{n-k}, \quad 
    \Lambda_2 \in \mathcal{D}_{n-k}^{\pm},
\end{equation}
and any admissible $\tilde{B}_0 \in \mathcal{B}$ takes the form
\begin{equation}\label{eq:admissible}
    \tilde{B}_0 = B_0 P\Lambda 
    = \bigl[\,B_{0,1} \;\;\; B_{0,2}P_{22}\Lambda_2\,\bigr].
\end{equation}
Since every admissible $\tilde{B}_0$ shares the first $k$ 
columns $B_{0,1}$ regardless of the choice of $P_{22}$ and 
$\Lambda_2$, the target sub-matrix $B_{0,1}$ is globally 
point-identified for all $1 \leq k \leq n$. This establishes 
result \ref{res:target}.

The remaining identification result \ref{res:residual} depends 
on the cardinality of the set of admissible 
$(P_{22}, \Lambda_2)$ pairs, which is governed by $k$ relative 
to $n$.

\smallskip
\noindent\textit{Case $k = n$.} Both $P_{22}$ and $\Lambda_2$ 
are vacuous (the block is empty), so \eqref{eq:residual_transform} 
gives $P\Lambda = I_n$ directly. Hence, $\tilde{B}_0 = B_0$ and 
$B_0$ is globally point-identified.

\smallskip
\noindent\textit{Case $k = n-1$.} $P_{22} \in \mathcal{P}_1$ 
forces $P_{22} = (1)$, so $P = I_n$. The remaining sign 
$\lambda_n \in \{-1, +1\}$ enters the $(n,n)$ entry of 
$\tilde{B}_0$ as $\lambda_n [B_0]_{nn}$. Assumption \ref{ass:norm_u} 
requires this entry to be positive, which holds if and only if 
$\lambda_n = +1$. Hence, $\Lambda_2 = (1)$, $P\Lambda = I_n$, 
and $B_0$ is globally point-identified.

\smallskip
\noindent\textit{Case $1 \leq k < n-1$.} The residual 
permutation $P_{22}$ ranges over $\mathcal{P}_{n-k}$, which 
has cardinality $(n-k)!$, and $\Lambda_2$ ranges over 
$\mathcal{D}_{n-k}^{\pm}$, which has cardinality $2^{n-k}$. 
The proxy and sign normalization conditions place no further 
restrictions on $(P_{22}, \Lambda_2)$, so the equivalence 
class $\mathcal{E}_{n-k}$ has cardinality at most 
$(n-k)!\cdot 2^{n-k}$. The residual sub-matrix $B_{0,2}$ is 
not point-identified; it is identified only up to 
$\mathcal{E}_{n-k}$. This establishes result \ref{res:residual} 
in all cases.
\end{proof}

\subsection{Reversion to set identification under proxy irrelevance}\label{appendix:irrelevance}


\begin{corollary}\label{cor:irrelevance}
Suppose the conditions of Proposition \ref{prop:identification_unified} hold,
but Assumption \ref{as:validproxy} is replaced by strict irrelevance:
$\mathbb{E}[m_t\varepsilon_t'] = 0_{k\times n}$. Then
$\mathbb{E}[g_{P,t}(\phi)] = 0$ for every $\phi \in \Phi$, the proxy
moment block imposes no identifying restrictions, and the identified set
under the joint moment system coincides exactly with $\mathcal{B}$.
The parameter $\phi_0$ is globally set-identified with identified set of
cardinality $|\mathcal{B}| \leq n!\cdot 2^n$.
\end{corollary}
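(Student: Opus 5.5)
The plan is to compute the proxy moment expectation directly as a function of $\phi$, using the rotation representation from the proof of Lemma~\ref{lem:ng_set}. Write $R(\phi)\equiv Q(\phi)'Q_0$, so that $\varepsilon_t(\phi)=R(\phi)\varepsilon_t$ with $R(\phi)$ orthogonal. Each proxy moment coordinate is $m_{l,t}\varepsilon_{i,t}(\phi)=m_{l,t}\,r_i(\phi)'\varepsilon_t$. By linearity of expectation,
\[
\mathbb{E}[m_t\varepsilon_t(\phi)']=\mathbb{E}[m_t\varepsilon_t']R(\phi)'=0_{k\times n}R(\phi)'=0_{k\times n}
\]
for every $\phi\in\Phi$. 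Every coordinate of $\mathbb{E}[g_{P,t}(\phi)]$ is an entry of this matrix, so $\mathbb{E}[g_{P,t}(\phi)]\equiv 0$ on $\Phi$. Finite second moments of $(m_t,\varepsilon_t)$, implied by the regularity conditions of Section~\ref{ass:regularity}, guarantee that these expectations exist. This establishes the first claim.

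Next I would observe that the joint population moment vector $g(\phi)=(g_{NG}(\phi)',g_P(\phi)')'$ vanishes if and only if $g_{NG}(\phi)=0$, because the second block vanishes identically. The joint identified set $\{\phi:g(\phi)=0\}$ therefore equals the non-Gaussian identified set. By Lemma~\ref{lem:ng_set}, under Assumptions~\ref{as:indep}--\ref{as:gau}, this set maps exactly onto $\mathcal{B}=\{B_0P\Lambda\}$. The Lemma's argument only uses the NG conditions and the shock assumptions, not Assumption~\ref{as:validproxy}, so it applies unchanged.

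For the reverse inclusion, every element of $\mathcal{B}$ must be attained. Take any $B_0P\Lambda\in\mathcal{B}$. Its implied shocks $\Lambda P'\varepsilon_t$ still satisfy the NG conditions, since these are invariant to signed permutations: the diagonal cumulant structure is preserved, and odd powers only flip sign on terms that are zero. They also trivially satisfy the proxy conditions. So the whole of $\mathcal{B}$ is observationally equivalent, and no element is singled out.

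The main subtlety is the bookkeeping in this last step. Elements of $\mathcal{B}$ with $\det(P\Lambda)=-1$ are not of the form $HQ(\phi)$ with $Q\in SO(n)$, so in $\phi$-space only the determinant-compatible half is representable. The claim is therefore best stated at the level of $B$, with cardinality bounded by $n!\,2^n$, which is exactly how the corollary phrases it. Finally, the sign normalization~\ref{ass:norm_u} on $B_{ii}$ refers to target columns that are no longer tied to proxies. Its role was to select $\Lambda_1$ via $\psi_i\lambda_i>0$, which is unavailable when $\Psi=0$. At most it trims signs without resolving the permutation, so the identified set is set-valued and finite rather than a singleton, and the stated upper bound on $|\mathcal{B}|$ holds.
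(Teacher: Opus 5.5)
Your proposal is correct and follows essentially the same route as the paper. The paper writes $\mathbb{E}[m_t\varepsilon_t(\phi)'] = \mathbb{E}[m_tu_t'](H^{-1})'Q(\phi)$ with $\mathbb{E}[m_tu_t'] = \mathbb{E}[m_t\varepsilon_t']B_0' = 0$, which is the same linear-transformation step as your $\mathbb{E}[m_t\varepsilon_t']R(\phi)'$, and then appeals to Lemma~\ref{lem:ng_set} to conclude that the joint identified set is $\mathcal{B}$. Your explicit reverse-inclusion check, and your remark that only the positive-determinant half of $\mathcal{B}$ can be written as $HQ(\phi)$ with $Q(\phi)\in SO(n)$, are valid refinements that the paper's proof leaves implicit; both are consistent with the stated bound $n!\cdot 2^n$.
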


\begin{proof}
Under strict irrelevance, $\mathbb{E}[m_tu_t'] =
\mathbb{E}[m_t\varepsilon_t']B_0' = 0$. For any $\phi \in \Phi$, using
the notation of Section \ref{appendix:keymatrices}:
\[
  \mathbb{E}[m_t\varepsilon_t(\phi)']
  = \mathbb{E}[m_tu_t'](H^{-1})'Q(\phi)
  = 0_{k\times n}.
\]
Accordingly, $\mathbb{E}[g_{P,t}(\phi)] = 0$ for every $\phi$. The
proxy conditions are trivially satisfied throughout $\Phi$ and
contribute no restrictions beyond those already imposed by
Assumptions \ref{as:indep} and \ref{as:gau}. The identified set therefore equals $\mathcal{B}$,
from which no element can be excluded.
\end{proof}

\subsection{Proof of Proposition \ref{prop:efficiency}}\label{appendix:efficiency}

\begin{proof}
Substituting $G_{NG} = \tilde{G}_{NG} + S_{NP}S_{PP}^{-1}G_P$ and
applying the block-matrix inversion identity $[S^{-1}]_{NN} =
\Sigma_{NG}^{-1}$ (see Section \ref{appendix:preliminaries}) yields
\[
  I_{joint} \;=\; G'S^{-1}G
  \;=\;
  \underbrace{G_P'S_{PP}^{-1}G_P}_{I_P}
  \;+\;
  \underbrace{\tilde{G}_{NG}'\Sigma_{NG}^{-1}\tilde{G}_{NG}}_{\Delta I_{NG}}.
\]

$\Sigma_{NG}=S_{NN}-S_{NP}S_{PP}^{-1}S_{PN}$ is the Schur complement of $S_{PP}$ in
$S$. Since $S\succ0$ by the regularity conditions in Section \ref{ass:regularity}, $S_{PP}\succ0$ and
$\Sigma_{NG}\succ0$; hence $\Delta I_{NG}=\tilde G_{NG}'\Sigma_{NG}^{-1}\tilde G_{NG}
\succeq0$, and $I_{joint}=I_P+\Delta I_{NG}\succeq I_P\succ0$. 
Matrix inversion is order-reversing on the positive definite cone: $A\succeq B \succ 0$ implies $B^{-1} \succeq A^{-1}$.  Therefore,  $\mathcal{V}_P =
I_P^{-1} \succeq I_{joint}^{-1} =  \mathcal{V}_{joint}$.
\end{proof}

\subsection{Proof of Proposition \ref{prop:localtozero}}
\label{appendix:proof_localtozero}
\begin{proof}
\noindent\textbf{Part \ref{localtozero_infolimit}} Under the Pitman sequence only the
first-moment relevance $\mathbb{E}[m_{t,T}\varepsilon_{\text{target},t}]$ drifts to zero so the population
proxy Jacobian
$G_{P,T}=\partial\,\mathbb{E}[g_{P}(\phi_0)]/\partial\phi'$ satisfies
$G_{P,T}=O(T^{-1/2})$. But the proxy's own second moment $\mathbb{E}[m_{t,T}^2]$ is bounded away from zero, so the
proxy-block covariance $S_{PP,T}$ stays positive definite with $S_{PP,T}^{-1}$ bounded
(and $S_{NN}\succ0$), and hence
$\mathcal{I}_{P,T}=G_{P,T}'S_{PP}^{-1}G_{P,T}=O(T^{-1})\to 0$ in operator
norm. Using the additive decomposition \eqref{eq:info_decomp},
\[
   \mathcal{I}_{\mathrm{joint},T}
   = \underbrace{G_{P,T}'\,S_{PP}^{-1}\,G_{P,T}}_{\to\, 0}
   \;+\; \tilde{G}_{NG,T}'\,\Sigma_{NG}^{-1}\,\tilde{G}_{NG,T},
   \qquad
   \tilde{G}_{NG,T}=G_{NG}-S_{NP}S_{PP}^{-1}G_{P,T}.
\]
Since $G_{P,T}\to 0$, we have $\tilde{G}_{NG,T}\to G_{NG}$, and therefore
$\mathcal{I}_{\mathrm{joint},T}\to
\mathcal{I}^{*}=G_{NG}'\Sigma_{NG}^{-1}G_{NG}$ in operator norm. By
Lemma~\ref{lem:ng_set}, $G_{NG}$ has full column rank at
$\phi_0$ and $\Sigma_{NG}\succ 0$, so $\mathcal{I}^{*}\succ 0$. 
 
\medskip
\noindent\textbf{Part \ref{localtozero_can}}
Note that under local-to-zero {relevance} the exclusion
conditions remain correctly specified, i.e., $\mathbb{E}[g_{P}(\phi_0)]=0$ holds.
Under the regularity conditions of Section \ref{ass:regularity} and
Assumptions \ref{as:validproxy}--\ref{as:gau}, the criterion
$\bar{g}_T(\phi)'\hat{S}^{-1}\bar{g}_T(\phi)$ obeys a uniform law of large
numbers with a limit uniquely minimized at $\phi_0$. The consistency theorem of
\citet[Thm.~2.6]{newey_mcfadden_1994} then gives
$\hat{\phi}\xrightarrow{p}\phi_0$. Combining the central limit theorem
$\sqrt{T}\,\bar{g}_T(\phi_0)\xrightarrow{d}\mathcal{N}(0,S)$ with the
Jacobian limit of \ref{localtozero_infolimit}, the asymptotic-normality theorem of
\citet[Thm.~3.4]{newey_mcfadden_1994} yields
\[
   \sqrt{T}\,(\hat{\phi}-\phi_0)
   \ \xrightarrow{d}\
   \mathcal{N}\!\bigl(0,\,(\mathcal{I}^{*})^{-1}\bigr),
   \qquad
   \mathcal{V}_{NG}^{*}=(\mathcal{I}^{*})^{-1}.
\]

\medskip
\noindent\textbf{Part \ref{localtozero_effoverng}}
The Schur complement satisfies
$\Sigma_{NG}=S_{NN}-S_{NP}S_{PP}^{-1}S_{PN}\preceq S_{NN}$, hence
$\Sigma_{NG}^{-1}\succeq S_{NN}^{-1}$ and
$\mathcal{I}^{*}=G_{NG}'\Sigma_{NG}^{-1}G_{NG}
\succeq G_{NG}'S_{NN}^{-1}G_{NG}$. The order-reversing property of
inversion on the positive-definite cone gives
$\mathcal{V}_{NG}^{*}=(\mathcal{I}^{*})^{-1}
\preceq (G_{NG}'S_{NN}^{-1}G_{NG})^{-1}=\mathcal{V}_{NG}$, with equality
if $S_{NP}=0$. 
\end{proof}

\begin{remark}
Part~\ref{localtozero_effoverng} states $\mathcal V_{NG}^{*}\preceq\mathcal V_{NG}$, with
equality iff $S_{NP}=0$. Under the local-to-zero sequence $S_{NP,T}\to0$, so the bound
holds with equality in the limit. For any {fixed} nonzero relevance, however,
$S_{NP}\neq0$ and the inequality is strict: the efficiency gain over the standalone
non-Gaussian estimator is a fixed-relevance property that attenuates as the instrument
weakens and disappears only in the limit.
\end{remark}

\subsection{Proof of Proposition \ref{prop:uniform}}\label{appendix:proof_uniform}
First we prove the Lemma \ref{lem:uniform_info}, then we proceed with the proof of the proposition.
\begin{proof}
By the information decomposition \eqref{eq:info_decomp}, written with the
non-Gaussian block marginal,
\[
   G'S^{-1}G
   = G_{NG}'S_{NN}^{-1}G_{NG}
     + \tilde G_P'\Sigma_P^{-1}\tilde G_P
   \ \succeq\ G_{NG}'S_{NN}^{-1}G_{NG},
\]
since $\tilde G_P'\Sigma_P^{-1}\tilde G_P\succeq0$. The dropped term is the only component which is dependent on the proxy strength, so the bound is independent of $\Psi$.
Then $G_{NG}'S_{NN}^{-1}G_{NG}\succeq \lambda_{\max}(S_{NN})^{-1}\,G_{NG}'G_{NG}
\succeq (\sigma_{\min}(G_{NG})^2/\bar s)\,I_{p_\phi}\succeq(\underline g^2/\bar s)I_{p_\phi}$,
using $\lambda_{\max}(S_{NN})\le\bar s$ and Assumption \ref{as:uniform_ng}.
\end{proof}

\begin{proof}[Sketch of the proof]
Lemma \ref{lem:uniform_info} places every $\gamma\in\mathcal P$ in the strongly
identified regime of \citet{andrews_estimation_2012}, with the parameter of interest
$\phi$ strongly identified by the non-Gaussian block irrespective of $\Psi$. The
four standard steps are then uniform over $\mathcal P$.
\emph{(i) Uniform consistency.} The criterion
$\bar g_T(\phi)'\hat S^{-1}\bar g_T(\phi)$ converges uniformly to a limit with a
well-separated minimiser at $\phi_0$ (uniform LLN, \citealp{andrews_generic_1992};
separation from the curvature bound of Lemma \ref{lem:uniform_info}), so
$\sup_{\gamma\in\mathcal P}\|\hat\phi-\phi_0\|=o_p(1)$.
\emph{(ii) Uniform linearization.} A mean-value expansion of the first-order
conditions, with $G'S^{-1}G$ uniformly nonsingular (Lemma \ref{lem:uniform_info})
and uniform stochastic equicontinuity \citep{andrews_empirical_1994}, gives
$\sqrt T(\hat\phi-\phi_0)=-(G'S^{-1}G)^{-1}G'S^{-1}T^{-1/2}\sum_t(g_t+\Gamma_\theta\zeta_t)+r_T$,
$\sup_{\gamma\in\mathcal P}\|r_T\|=o_p(1)$, where $\zeta_t$ is the first-stage
influence function of Section \ref{sec:joint_asymp}.
\emph{(iii) Uniform CLT.} The triangular-array martingale-difference CLT with
uniformly bounded $(2+\delta)$ moments (Section \ref{ass:regularity}) yields
$T^{-1/2}\sum_t(g_t+\Gamma_\theta\zeta_t)\xrightarrow{d}\mathcal N(0,\Omega_\psi)$
uniformly, hence $\sqrt T(\hat\phi-\phi_0)\xrightarrow{d}\mathcal N(0,\mathcal V_\phi)$
uniformly, with $\mathcal V_\phi$ as in \eqref{eq:vphi_corrected}.
\emph{(iv) Uniform Studentization.} $\hat{\mathcal V}_\phi\xrightarrow{p}\mathcal V_\phi$
uniformly (continuity, with the inverse uniformly continuous by Lemma
\ref{lem:uniform_info}), and $R\mathcal V_\phi R'$ is uniformly nonsingular, so by
the generic uniformity results of \citet{andrews_generic_2020},
$W_T\xrightarrow{d}\chi^2_d$ uniformly over $\mathcal P$. The coverage statement
follows by test inversion.
\end{proof}

\subsection{Proof of Proposition \ref{prop:bias_bound_local}}\label{appendix:bias_bound_local}
\begin{proof}
Throughout, the estimator uses the efficient weight $W=S^{-1}$, and
$\mathbb{E}[g(\phi_0)]=(\,0',\,\mu_T'\,)'=(\,0',\,c'/\sqrt{T}\,)'$ with
$\mathbb{E}[g_{NG}(\phi_0)]=0$ and $\mathbb{E}[g_P(\phi_0)]=\mu_T=c/\sqrt{T}$.

Under the regularity conditions of \citet{newey_mcfadden_1994}, the pseudo-true
parameter $\phi_{*,T}$ is an interior minimizer of the population criterion
$\mathcal{Q}_T(\phi)=\mathbb{E}[g(\phi)]'S^{-1}\mathbb{E}[g(\phi)]$ and therefore
satisfies the population first-order conditions
\begin{equation}
    G(\phi_{*,T})'\,S^{-1}\,\mathbb{E}[g(\phi_{*,T})]=0,
    \label{eq:pop_foc_eff}
\end{equation}
where $G(\phi)\equiv\partial\mathbb{E}[g(\phi)]/\partial\phi'$ is partitioned
conformably as $G(\phi)=(G_{NG}(\phi)',G_P(\phi)')'$. Since
$\mathbb{E}[g(\phi_0)]=O(T^{-1/2})$, the criterion is minimised within an
$O(T^{-1/2})$ neighborhood of $\phi_0$, so $\phi_{*,T}-\phi_0=O(T^{-1/2})$.

Let $\delta_T=\phi_{*,T}-\phi_0=O(T^{-1/2})$. A first-order mean-value expansion of
$\mathbb{E}[g]$ around $\phi_0$ gives
\begin{equation}
    \mathbb{E}[g(\phi_{*,T})]
    =\mathbb{E}[g(\phi_0)]+G(\phi_0)\,\delta_T+O(\|\delta_T\|^2)
    =\mathbb{E}[g(\phi_0)]+G(\phi_0)\,\delta_T+O(T^{-1}),
    \label{eq:taylor_g_eff}
\end{equation}
and the corresponding expansion of the Jacobian gives
\begin{equation}
    G(\phi_{*,T})=G(\phi_0)+O(\|\delta_T\|)=G(\phi_0)+O(T^{-1/2}).
    \label{eq:taylor_G_eff}
\end{equation}
Substituting \eqref{eq:taylor_g_eff}--\eqref{eq:taylor_G_eff} into
\eqref{eq:pop_foc_eff} and expanding,
\begin{equation}
    G(\phi_0)'S^{-1}\mathbb{E}[g(\phi_0)]
    +G(\phi_0)'S^{-1}G(\phi_0)\,\delta_T
    +O(T^{-1/2})\!\cdot\!O(T^{-1/2})+O(T^{-1})=0,
    \label{eq:expanded_eff}
\end{equation}
where the third term is the cross-product of the $O(T^{-1/2})$ Jacobian remainder
in \eqref{eq:taylor_G_eff} with $\mathbb{E}[g(\phi_0)]=O(T^{-1/2})$. All remainder
terms are therefore $O(T^{-1})$.

Since $W=S^{-1}$ is {not} block-diagonal, the contamination is transmitted
through the off-diagonal blocks of $S^{-1}$. Writing the inverse via the Schur
complement of $S_{NN}$, $\Sigma_P\equiv S_{PP}-S_{PN}S_{NN}^{-1}S_{NP}$,
\begin{equation}
    S^{-1}=
    \begin{pmatrix}
    S_{NN}^{-1}+S_{NN}^{-1}S_{NP}\Sigma_P^{-1}S_{PN}S_{NN}^{-1}
        & -S_{NN}^{-1}S_{NP}\Sigma_P^{-1}\\[2pt]
    -\Sigma_P^{-1}S_{PN}S_{NN}^{-1} & \Sigma_P^{-1}
    \end{pmatrix},
    \label{eq:Sinv_block}
\end{equation}
so that, using $\mathbb{E}[g(\phi_0)]=(0',\mu_T')'$,
\begin{equation}
    S^{-1}\mathbb{E}[g(\phi_0)]
    =\begin{pmatrix}-S_{NN}^{-1}S_{NP}\Sigma_P^{-1}\mu_T\\[2pt]
    \Sigma_P^{-1}\mu_T\end{pmatrix}.
    \label{eq:Sinv_mu}
\end{equation}
Pre-multiplying by $G(\phi_0)'=(G_{NG}',G_P')$ and collecting terms,
\begin{equation}
    G(\phi_0)'S^{-1}\mathbb{E}[g(\phi_0)]
    =\bigl(G_P-S_{PN}S_{NN}^{-1}G_{NG}\bigr)'\Sigma_P^{-1}\mu_T
    =\tilde G_P'\,\Sigma_P^{-1}\,\mu_T,
    \label{eq:GSg_eff}
\end{equation}
where $\tilde G_P=G_P-S_{PN}S_{NN}^{-1}G_{NG}$ is the partialed proxy Jacobian and
we used $(S_{PN}S_{NN}^{-1}G_{NG})'=G_{NG}'S_{NN}^{-1}S_{NP}$ ($S_{NN}=S_{NN}'$,
$S_{PN}'=S_{NP}$). The same identity applied to $G(\phi_0)'S^{-1}G(\phi_0)$ yields
the (non-Gaussian-marginal) information decomposition
\begin{equation}
    G'S^{-1}G
    =G_{NG}'S_{NN}^{-1}G_{NG}
    +\tilde G_P'\,\Sigma_P^{-1}\,\tilde G_P,
    \label{eq:info_decomp_NGmarginal}
\end{equation}
the symmetric counterpart of \eqref{eq:info_decomp} with the roles of the two
moment blocks interchanged. By Lemma~\ref{lem:ng_set},
$G_{NG}$ has full column rank, so $G_{NG}'S_{NN}^{-1}G_{NG}\succ0$ and hence
$G'S^{-1}G\succ0$.

Substituting \eqref{eq:GSg_eff} into \eqref{eq:expanded_eff},
\begin{equation}
    \bigl(G'S^{-1}G\bigr)\,\delta_T
    =-\,\tilde G_P'\,\Sigma_P^{-1}\,\mu_T+O(T^{-1})
    =-\frac{1}{\sqrt{T}}\,\tilde G_P'\,\Sigma_P^{-1}\,c+O(T^{-1}).
    \label{eq:linear_system_eff}
\end{equation}
Multiplying by $\sqrt{T}$, setting $\mathbb{B}_T=\sqrt{T}\,\delta_T$, and
pre-multiplying by $(G'S^{-1}G)^{-1}$,
\begin{equation}
    \mathbb{B}_T
    =-\bigl(G'S^{-1}G\bigr)^{-1}\,\tilde G_P'\,\Sigma_P^{-1}\,c
    +O(T^{-1/2}),
    \label{eq:scaled_bias_eff}
\end{equation}
which establishes \eqref{eq:local_bias}. Taking $T\to\infty$ and recalling
$\delta_T=\mathbb{B}_T/\sqrt{T}\to0$ confirms $\phi_{*,T}\to\phi_0$, so $\hat\phi$
remains consistent for $\phi_0$ under local contamination.

Apply the GMM central limit theorem of \citet{newey_mcfadden_1994} to the
recentered moment vector $\bigl(g_{NG}(\phi_0)',\,g_P(\phi_0)-\mu_T\bigr)'$, which
has mean zero and satisfies
$T^{-1/2}\sum_t\bigl(g_t-\mathbb{E}[g_t]\bigr)\xrightarrow{d}\mathcal{N}(0,S)$.
For the efficiently weighted estimator the influence-function representation is
$\sqrt{T}(\hat\phi-\phi_0)=-(G'S^{-1}G)^{-1}G'S^{-1}\,T^{1/2}\bar g_T(\phi_0)+o_p(1)$,
whose stochastic part has asymptotic variance
$(G'S^{-1}G)^{-1}G'S^{-1}\,S\,S^{-1}G(G'S^{-1}G)^{-1}=(G'S^{-1}G)^{-1}$, while the
deterministic drift $\mu_T=c/\sqrt{T}$ enters through \eqref{eq:GSg_eff} as the
non-centrality $-(G'S^{-1}G)^{-1}\tilde G_P'\Sigma_P^{-1}c$. Hence
\begin{equation*}
    \sqrt{T}(\hat\phi-\phi_0)\xrightarrow{d}
    \mathcal{N}\!\Big(-(G'S^{-1}G)^{-1}\tilde G_P'\Sigma_P^{-1}c,\
    (G'S^{-1}G)^{-1}\Big),
\end{equation*}
which is \eqref{eq:local_limit}. (When the innovations are estimated, the variance
$(G'S^{-1}G)^{-1}$ carries the generated-regressor correction of
Section~\ref{sec:joint_asymp}; this leaves the non-centrality unchanged.)

Taking Euclidean norms in \eqref{eq:scaled_bias_eff} and applying
sub-multiplicativity of the spectral norm,
\begin{equation}
    \|\mathbb{B}_T\|_2
    \le\bigl\|(G'S^{-1}G)^{-1}\bigr\|_2\,
    \bigl\|\tilde G_P'\Sigma_P^{-1}c\bigr\|_2+O(T^{-1/2})
    =\frac{\bigl\|\tilde G_P'\Sigma_P^{-1}c\bigr\|_2}
          {\lambda_{\min}(G'S^{-1}G)}+O(T^{-1/2}),
    \label{eq:norm_step_eff}
\end{equation}
using $\|A^{-1}\|_2=1/\lambda_{\min}(A)$ for symmetric positive-definite $A$. By
the decomposition \eqref{eq:info_decomp_NGmarginal} and
$\tilde G_P'\Sigma_P^{-1}\tilde G_P\succeq0$,
\begin{equation}
    \lambda_{\min}\bigl(G'S^{-1}G\bigr)
    \ \ge\ \lambda_{\min}\bigl(G_{NG}'S_{NN}^{-1}G_{NG}\bigr)\ >\ 0,
    \label{eq:anchor_eff}
\end{equation}
so replacing the denominator in \eqref{eq:norm_step_eff} with
$\lambda_{\min}(G_{NG}'S_{NN}^{-1}G_{NG})$ yields the valid upper bound
\eqref{eq:local_bias_bound}. The minorant
$\lambda_{\min}(G_{NG}'S_{NN}^{-1}G_{NG})$ is the marginal non-Gaussian
information and coincides with the anchor of Lemma~\ref{lem:uniform_info}.
\end{proof}

\subsection{Limiting distributions of the test statistics}\label{app:limitingdistributions}

\subsubsection{Setup}
\label{app:B:setup}

All notation is as defined in Section \ref{appendix:preliminaries}.
We additionally maintain Assumption \ref{ass:B1} below, which holds under Assumptions~\ref{as:indep}--\ref{as:gau} by Lemma~\ref{lem:ng_set}.

\begin{assumption}\label{ass:B1}
$G_{NG} \in \mathbb{R}^{q_{NG}\times p_\phi}$ has full column rank
$p_\phi$, with $q_{NG} \geq p_\phi$.
\end{assumption}

Assumption \ref{ass:B1} implies $G$ has full column rank, so $G'S^{-1}G \succ 0$
and the annihilator $\Omega = S - G(G'S^{-1}G)^{-1}G'$ is positive
semidefinite of rank $q - p_\phi$.

\subsubsection{Properties of $L$ and $K$}
\label{app:B:LK}

As defined in Section \ref{appendix:preliminaries} Equation \eqref{eq:app_fwloperators}, 
\begin{lemma}\label{lem:LK}
Under $S > 0$:
\begin{align}
  LS &= [\Sigma_P,\; 0], \quad KS = [0,\; \Sigma_{NG}],
  \label{eq:LS_KS}\\
  SL' &= (\Sigma_P', 0)', \quad SK' = (0', \Sigma_{NG}')',
  \label{eq:SLt_SKt}\\
  \mathrm{Cov}(Lg(\phi_0),\, Kg(\phi_0)) &= LSK' = 0.
  \label{eq:orth}
\end{align}
\end{lemma}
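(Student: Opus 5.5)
The plan is a direct block computation. Throughout I would fix one ordering convention, since $L$ and $K$ as defined in \eqref{eq:app_fwloperators} implicitly order the moment vector differently. $L=[I_{q_P},\,-S_{PN}S_{NN}^{-1}]$ acts on $(g_P',g_{NG}')'$, and $K=[-S_{NP}S_{PP}^{-1},\,I_{q_{NG}}]$ acts on $(g_P',g_{NG}')'$ as well. I would therefore write
\[
S=\begin{pmatrix} S_{PP} & S_{PN}\\ S_{NP} & S_{NN}\end{pmatrix}
\]
in that order, and state explicitly that the $(NG,P)$ ordering of the main text is only a permutation of rows and columns. Positive definiteness of $S$ gives $S_{PP}\succ0$ and $S_{NN}\succ0$, so both inverses exist. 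The Schur complements $\Sigma_P$ and $\Sigma_{NG}$ of \eqref{eq:app_schurcomplements} are then positive definite.

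\emph{Step 1 (identity \eqref{eq:LS_KS}).} Multiply out the blocks. The first block of $LS$ is $S_{PP}-S_{PN}S_{NN}^{-1}S_{NP}=\Sigma_P$, and the second block is $S_{PN}-S_{PN}S_{NN}^{-1}S_{NN}=0$. Symmetrically, the first block of $KS$ is $-S_{NP}S_{PP}^{-1}S_{PP}+S_{NP}=0$, and the second is $-S_{NP}S_{PP}^{-1}S_{PN}+S_{NN}=\Sigma_{NG}$.

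\emph{Step 2 (identity \eqref{eq:SLt_SKt}).} Transpose Step 1, using $S=S'$ and the symmetry of $\Sigma_P$ and $\Sigma_{NG}$.

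\emph{Step 3 (identity \eqref{eq:orth}).} Since $\mathrm{Var}(g(\phi_0))=S$ in the long-run sense, bilinearity gives $\mathrm{Cov}(Lg,Kg)=LSK'$. By Step 1,
\[
LSK'=[\Sigma_P,\;0]\,K'=\Sigma_P\bigl(-S_{NP}S_{PP}^{-1}\bigr)'=-\Sigma_P S_{PP}^{-1}S_{PN}.
\]
This last step is the main obstacle. Because $\Sigma_P\succ0$ and $S_{PP}^{-1}\succ0$, the product vanishes if and only if $S_{PN}=0$, that is, if and only if the two moment blocks are uncorrelated. So the computation establishes \eqref{eq:orth} exactly under that cross-block condition, and not for general $S$. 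I would check this first on a scalar example with $q_P=q_{NG}=1$ and $S_{PN}\neq0$.

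If the check confirms the obstruction, I would state \eqref{eq:orth} under the additional hypothesis $S_{NP}=0$. This is the case in which Proposition~\ref{prop:localtozero}\ref{localtozero_effoverng} holds with equality, and it is the local-to-zero limit noted in the remark following that proof. I would then record, as the valid unconditional content of Steps 1–2, the one-sided statements $\mathrm{Cov}(Lg,g_{NG})=0$ and $\mathrm{Cov}(Kg,g_P)=0$. These are exactly the zero blocks of $LS$ and $KS$, and they are what the later derivation of $\mathcal Q_P=L\Omega L'$ and $\mathcal Q_{NG}=K\Omega K'$ actually uses.
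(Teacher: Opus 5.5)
Your Steps 1 and 2 are exactly the paper's argument. It uses the same block multiplication against $S$ in the $(P,NG)$ order, which the paper's proof also adopts without comment even though the moments are stacked as $(NG,P)$ in \eqref{eq:moment_conditions}, and then takes transposes.

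On Step 3 your computation is right and the paper's proof is wrong. The paper writes $LSK'=[\Sigma_P,\;0](0',\Sigma_{NG}')'$. But $[\Sigma_P,\;0]=LS$ and $(0',\Sigma_{NG}')'=SK'$, so that product is $LS^2K'$, which is always zero, and not $LSK'$. The correct value is
\[
  LSK' \;=\; -\Sigma_P S_{PP}^{-1}S_{PN} \;=\; -S_{PN}S_{NN}^{-1}\Sigma_{NG}.
\]
Both expressions equal $-(S_{PN}-S_{PN}S_{NN}^{-1}S_{NP}S_{PP}^{-1}S_{PN})$, and $LSK'$ vanishes if and only if $S_{PN}=0$, as you say.

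Your scalar check confirms this. With $S_{PP}=a$, $S_{PN}=b$, $S_{NN}=c$ one gets $LSK'=-b(ac-b^2)/(ac)$, which is nonzero whenever $b\neq0$ because $ac>b^2$. Statistically, $Lg$ and $Kg$ are residuals from two opposite projections ($g_P$ on $g_{NG}$, and $g_{NG}$ on $g_P$), and such a pair is not uncorrelated in general. Only the one-sided statements $\mathrm{Cov}(Lg,g_{NG})=0$ and $\mathrm{Cov}(Kg,g_P)=0$ hold.

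So \eqref{eq:orth} is false under $S\succ0$ alone, and your repair is the right one: either add $S_{NP}=0$ or replace \eqref{eq:orth} with the one-sided statements. State $S_{NP}=0$ as an explicit hypothesis, though. It holds in designs like the Monte Carlo proxy $m_t=\psi\varepsilon_{1t}+v_t$ as $\psi\to0$. In general, however, it restricts fourth-order cross-moments of $m_t$ with the shocks, and the Pitman relevance sequence does not imply it.

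Your account of what survives is also accurate. The formulas for $\mathcal Q_P$ and $\mathcal Q_{NG}$ in \eqref{eq:QP} and \eqref{eq:QNG}, Lemmas \ref{lem:QP_rank} and \ref{lem:QNG_rank}, and hence the $\chi^2$ limits of Proposition \ref{prop:limitdist}, use only $LSL'=\Sigma_P$, $KSK'=\Sigma_{NG}$ and the zero blocks in \eqref{eq:SLt_SKt}. All of these are correct. What fails is \eqref{eq:cross}, which should read
\[
  L\Omega K' \;=\; -\Sigma_P S_{PP}^{-1}S_{PN} \;-\; \tilde G_P(G'S^{-1}G)^{-1}\tilde G_{NG}'.
\]
Consequently, the ``if and only if'' condition in the non-independence remark and the description of $J_{\mathrm{prx}}$ and $J_{NG}$ as mutually orthogonal statistics both need to be amended.
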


\begin{proof}
Direct block multiplication establishes \eqref{eq:LS_KS}:
\begin{align*}
  LS &= [I_{q_P},\ {-}S_{PN}S_{NN}^{-1}]
       \begin{bmatrix}S_{PP}&S_{PN}\\S_{NP}&S_{NN}\end{bmatrix}
     = [S_{PP} - S_{PN}S_{NN}^{-1}S_{NP},\ 0]
     = [\Sigma_P,\ 0],\\
  KS &= [{-}S_{NP}S_{PP}^{-1},\ I_{q_{NG}}]
       \begin{bmatrix}S_{PP}&S_{PN}\\S_{NP}&S_{NN}\end{bmatrix}
     = [0,\ S_{NN} - S_{NP}S_{PP}^{-1}S_{PN}]
     = [0,\ \Sigma_{NG}].
\end{align*}
Equations \eqref{eq:SLt_SKt} follow by transposition. For
\eqref{eq:orth}: $LSK' = [\Sigma_P,\,0](0',\Sigma_{NG}')' = 0$.
\end{proof}

\subsubsection{Projected annihilator representations}
\label{app:B:proj}

\begin{lemma}\label{lem:proj}
Under assumptions of Proposition \ref{prop:identification_unified} and \ref{ass:B1}, with annihilator
$M = I_q - G(G'S^{-1}G)^{-1}G'S^{-1}$:
\begin{equation}
  T^{1/2}\tilde{g}_{P,T}(\hat\phi) = LM\,T^{1/2}\bar{g}_T(\phi_0)
  + o_p(1), \qquad
  T^{1/2}\tilde{g}_{NG,T}(\hat\phi) = KM\,T^{1/2}\bar{g}_T(\phi_0)
  + o_p(1).
  \label{eq:proj_rep}
\end{equation}
The asymptotic variances and cross-covariance are
\begin{align}
  \mathcal Q_P &\equiv L\Omega L'
      = \Sigma_P - \tilde{G}_P(G'S^{-1}G)^{-1}\tilde{G}_P',
      \label{eq:QP}\\
  \mathcal Q_{NG} &\equiv K\Omega K'
         = \Sigma_{NG} - \tilde{G}_{NG}(G'S^{-1}G)^{-1}\tilde{G}_{NG}',
         \label{eq:QNG}\\
  L\Omega K' &= -\tilde{G}_P(G'S^{-1}G)^{-1}\tilde{G}_{NG}'.
  \label{eq:cross}
\end{align}
\end{lemma}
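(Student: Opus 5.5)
The plan has three steps: a standard GMM expansion of the sample moments at $\hat\phi$, then the block identities of Lemma~\ref{lem:LK} to simplify the variance sandwiches.

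\emph{Step 1: the representation \eqref{eq:proj_rep}.} By consistency of $\hat\phi$ (Section~\ref{sec:inference}, using Assumption~\ref{ass:B1} and Proposition~\ref{prop:identification_unified}), a mean-value expansion gives
\[
T^{1/2}\bar g_T(\hat\phi)=T^{1/2}\bar g_T(\phi_0)+\hat G_T(\bar\phi)\,T^{1/2}(\hat\phi-\phi_0).
\]
The first-order conditions of \eqref{eq:estimator} then give
\[
T^{1/2}(\hat\phi-\phi_0)=-(G'S^{-1}G)^{-1}G'S^{-1}T^{1/2}\bar g_T(\phi_0)+o_p(1).
\]
This uses $\hat G_T\xrightarrow{p}G$ uniformly near $\phi_0$, $\hat S\xrightarrow{p}S$, and $G'S^{-1}G\succ0$, which holds because $G_{NG}$ has full rank. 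Substituting yields
\[
T^{1/2}\bar g_T(\hat\phi)=M\,T^{1/2}\bar g_T(\phi_0)+o_p(1).
\]
Since $L$ and $K$ are fixed matrices (population $S$), premultiplying by $L$ and $K$ delivers \eqref{eq:proj_rep}. If $\hat S$ is used in place of $S$ inside $L$ and $K$, the difference is $o_p(1)\cdot O_p(1)$, so \eqref{eq:proj_rep} still holds.

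\emph{Step 2: the variances.} The CLT gives $T^{1/2}\bar g_T(\phi_0)\xrightarrow{d}\mathcal N(0,S)$, so the limiting covariance of the pair is $(L;K)\,MSM'\,(L;K)'$. I will check that
\[
MSM'=S-G(G'S^{-1}G)^{-1}G'=\Omega .
\]
To see this, expand $MS=S-G(G'S^{-1}G)^{-1}G'$. Multiplying by $M'$ adds $-G(\cdot)^{-1}G'$ and $+G(\cdot)^{-1}G'S^{-1}G(\cdot)^{-1}G'$, and these two terms cancel. Hence $\mathcal Q_P=L\Omega L'$, $\mathcal Q_{NG}=K\Omega K'$, and the cross-covariance is $L\Omega K'$.

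\emph{Step 3: simplifying with Lemma~\ref{lem:LK}.} Lemma~\ref{lem:LK} gives
\[
LSL'=[\Sigma_P,0](I,-S_{NN}^{-1}S_{NP})'=\Sigma_P ,\qquad KSK'=\Sigma_{NG},\qquad LSK'=0 .
\]
With $\tilde G_P=LG$ and $\tilde G_{NG}=KG$ (eq.~\eqref{eq:app_partialjacobians}),
\[
L\Omega L'=\Sigma_P-\tilde G_P(G'S^{-1}G)^{-1}\tilde G_P',\qquad
K\Omega K'=\Sigma_{NG}-\tilde G_{NG}(G'S^{-1}G)^{-1}\tilde G_{NG}',\qquad
L\Omega K'=-\tilde G_P(G'S^{-1}G)^{-1}\tilde G_{NG}'.
\]
These are \eqref{eq:QP}–\eqref{eq:cross}.

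\emph{Main obstacle.} The one point needing care is the ordering convention. In the statement the blocks of $L,K$ are written $P$-first, while $g$ is stacked $NG$-first in \eqref{eq:moment_conditions}. I will fix a single ordering (reordering the blocks of $S$ and $G$ conformably) so that $LG$ really equals $G_P-S_{PN}S_{NN}^{-1}G_{NG}$. The remaining work is routine: uniform convergence of the sample Jacobian, which follows from regularity conditions~1–2.
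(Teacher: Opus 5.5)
Your Steps 1 and 2 follow the paper's route. The paper simply invokes the annihilator representation $T^{1/2}\bar g_T(\hat\phi)=M\,T^{1/2}\bar g_T(\phi_0)+o_p(1)$ from its preliminaries, while you derive it from the first-order conditions. Your check $MSM'=\Omega$ is correct, and \eqref{eq:QP}--\eqref{eq:QNG} do follow from $LSL'=\Sigma_P$, $KSK'=\Sigma_{NG}$, $LG=\tilde G_P$ and $KG=\tilde G_{NG}$.

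\textbf{The gap is in Step 3.} The identity $LSK'=0$ that you import from Lemma~\ref{lem:LK} is false. In the $P$-first ordering you (rightly) fix, $LS=[\Sigma_P,\ 0]$, so
\[
LSK'=[\Sigma_P,\ 0]\begin{pmatrix}-S_{PP}^{-1}S_{PN}\\ I_{q_{NG}}\end{pmatrix}=-\Sigma_P S_{PP}^{-1}S_{PN}=-S_{PN}S_{NN}^{-1}\Sigma_{NG}.
\]
This vanishes only if $S_{PN}=0$. In the scalar case $S=\begin{pmatrix}a&b\\ b&c\end{pmatrix}$ it equals $-b(ac-b^2)/(ac)$. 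The residual of $g_P$ on $g_{NG}$ is orthogonal to the \emph{raw} $g_{NG}$, and the residual of $g_{NG}$ on $g_P$ is orthogonal to the raw $g_P$. The two residuals are not orthogonal to each other. The paper's own argument has exactly this defect: in the proof of Lemma~\ref{lem:LK} it multiplies $LS=[\Sigma_P,0]$ by $SK'=(0',\Sigma_{NG}')'$, which evaluates $LS^{2}K'$ rather than $LSK'$.

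What your argument (and the paper's) actually establishes is
\[
L\Omega K'=-\Sigma_P S_{PP}^{-1}S_{PN}-\tilde G_P(G'S^{-1}G)^{-1}\tilde G_{NG}',
\]
so \eqref{eq:cross} holds as stated only when $S_{PN}=0$. That case is excluded in exactly the setting of interest. The remark following the proof of Proposition~\ref{prop:localtozero} stresses that $S_{NP}\neq0$ at fixed relevance. In the simulation design, the proxy moment $m_{l,t}\varepsilon_{i,t}$ and the co-skewness moment $\varepsilon_{l,t}^2\varepsilon_{i,t}$ have covariance $\psi_l\kappa_{3,l}\neq0$.

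The representation \eqref{eq:proj_rep} and the two variance formulas survive intact, and they are all that Proposition~\ref{prop:limitdist} uses. What must be corrected is \eqref{eq:cross}. With it go the claimed mutual orthogonality of $J_{\mathrm{prx}}$ and $J_{NG}$ and the zero-covariance condition in the ``non-independence'' remark, which must now include the extra term. The ordering issue you flag as the main obstacle is real but cosmetic; this cross term is the step that actually fails.
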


\begin{proof}
Pre-multiplying the full annihilator representation
$T^{1/2}\bar{g}_T(\hat\phi) = M\,T^{1/2}\bar{g}_T(\phi_0) + o_p(1)$
by $L$ and $K$ gives \eqref{eq:proj_rep}. For the variances, use
$\Omega = MSM'= S - G(G'S^{-1}G)^{-1}G'$:
\[
  L\Omega L' = LSL' - LG(G'S^{-1}G)^{-1}G'L'.
\]
From \eqref{eq:SLt_SKt}, $LSL' = [I_{q_P},\,{-}S_{PN}S_{NN}^{-1}]
(\Sigma_P', 0)' = \Sigma_P$. With $LG = \tilde{G}_P$, equation
\eqref{eq:QP} follows. The derivation of \eqref{eq:QNG} is identical
with $K$ in place of $L$, using $KSK' = \Sigma_{NG}$ from
\eqref{eq:LS_KS}. For \eqref{eq:cross}:
$L\Omega K' = LSK' - LG(G'S^{-1}G)^{-1}G'K' =
0 - \tilde{G}_P(G'S^{-1}G)^{-1}\tilde{G}_{NG}'$,
where $LSK' = 0$ by \eqref{eq:orth}. 
\end{proof}

\begin{remark}[Non-independence]
Equation \eqref{eq:cross} shows that $J_{\mathrm{prx}}$ and $J_{NG}$ are not
asymptotically independent in general: their limiting covariance is zero
if and only if $\tilde{G}_P(G'S^{-1}G)^{-1}\tilde{G}_{NG}' = 0$,
which is not implied by the maintained assumptions.
\end{remark}

\subsubsection{Ranks of $\mathcal Q_P$ and $\mathcal Q_{NG}$}
\label{app:B:ranks}

The null space of $\Omega$ is $\mathrm{null}(\Omega) =
\mathrm{col}(S^{-1}G)$ with dimension $p_\phi$, since $\Omega v = 0
\Leftrightarrow v = S^{-1}G\alpha$ for $\alpha =
(G'S^{-1}G)^{-1}G'v$.

\begin{lemma}[Full rank of $\mathcal Q_P$]\label{lem:QP_rank}
Under Assumption \ref{ass:B1}, $\mathcal Q_P \succ 0$ and $\mathrm{rank}(\mathcal Q_P) = q_P$.
\end{lemma}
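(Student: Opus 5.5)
The plan is to show that $x'\mathcal Q_P x>0$ for every nonzero $x\in\mathbb{R}^{q_P}$. Write $\mathcal Q_P=L\Omega L'$ as in \eqref{eq:QP} and use two facts established just above: $\Omega\succeq0$, and $\mathrm{null}(\Omega)=\mathrm{col}(S^{-1}G)$. Since $\Omega$ is positive semidefinite, $x'L\Omega L'x=0$ holds if and only if $\Omega L'x=0$, that is, if and only if $L'x\in\mathrm{null}(\Omega)$. So it suffices to show that $L'x=S^{-1}G\alpha$ for some $\alpha\in\mathbb{R}^{p_\phi}$ forces $x=0$. Once that is done, $\mathcal Q_P\succ0$, and full rank $q_P$ follows immediately.

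Suppose then that $L'x=S^{-1}G\alpha$. Multiplying by $S$ gives $SL'x=G\alpha$. By \eqref{eq:SLt_SKt} in Lemma~\ref{lem:LK}, $SL'=(\Sigma_P',0)'$. Hence the left-hand side has $\Sigma_P x$ in the proxy coordinates and zero in the non-Gaussian coordinates. Reading the equation block by block therefore gives two conditions: $G_P\alpha=\Sigma_P x$ and $G_{NG}\alpha=0$.

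Assumption~\ref{ass:B1}, via Lemma~\ref{lem:ng_set}, says $G_{NG}$ has full column rank $p_\phi$. So the second equation forces $\alpha=0$, and the first then reduces to $\Sigma_P x=0$. Since $S\succ0$ by the regularity conditions of Section~\ref{ass:regularity}, its Schur complement satisfies $\Sigma_P\succ0$ by \eqref{eq:app_schurcomplements}. Hence $x=0$, which is the required contradiction. Note that no condition on $G_P$ or on proxy strength is used anywhere; this is exactly why $J_{\mathrm{prx}}$ keeps full degrees of freedom $q_P$ even when $G_P$ is rank-deficient or local-to-zero.

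The only delicate point I expect is bookkeeping of the block ordering. The moment vector in \eqref{eq:moment_conditions} is stacked NG-first, whereas $L=[I_{q_P},-S_{PN}S_{NN}^{-1}]$ and the block computations in Lemma~\ref{lem:LK} use a P-first ordering. I would fix the P-first convention for $g$, $G$ and $S$ throughout this proof, so that $LG=G_P-S_{PN}S_{NN}^{-1}G_{NG}=\tilde G_P$ and $SL'=(\Sigma_P',0)'$ are mutually consistent. The argument is invariant to this relabelling, since it only permutes coordinates of $g$. I would also briefly re-verify the characterization $\mathrm{null}(\Omega)=\mathrm{col}(S^{-1}G)$. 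Writing $\Omega=S^{1/2}\bigl(I-P_{S^{-1/2}G}\bigr)S^{1/2}$, with $P_{S^{-1/2}G}$ the orthogonal projector onto $\mathrm{col}(S^{-1/2}G)$, shows that $\Omega v=0$ if and only if $S^{1/2}v\in\mathrm{col}(S^{-1/2}G)$, i.e.\ $v\in\mathrm{col}(S^{-1}G)$. This uses only $S\succ0$ and $G$ of full column rank, and the latter follows from Assumption~\ref{ass:B1}.
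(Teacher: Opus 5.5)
Your proof is correct and follows the paper's argument essentially step for step. Both reduce $\mathcal Q_P v=0$ to $L'v\in\mathrm{null}(\Omega)=\mathrm{col}(S^{-1}G)$, use $SL'=(\Sigma_P',0)'$ to obtain $G_{NG}\alpha=0$, and conclude $\alpha=0$ and $v=0$ from the full column rank of $G_{NG}$ and $\Sigma_P\succ0$. Your two side remarks are also correct and tidy up points the paper leaves implicit. The paper stacks $g$ NG-first in \eqref{eq:moment_conditions} but uses P-first ordering in Lemma~\ref{lem:LK} and in this proof, and your $S^{1/2}$ verification of $\mathrm{null}(\Omega)$ is valid.
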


\begin{proof}
Let $\mathcal Q_P v = 0$. Since $\mathcal Q_P \succeq 0$, this gives $(L'v)'\Omega(L'v)
= 0$, so $L'v \in \mathrm{null}(\Omega) = \mathrm{col}(S^{-1}G)$.
Thus $L'v = S^{-1}G\alpha$ for some $\alpha$. Pre-multiplying by $S$
and applying \eqref{eq:SLt_SKt}:
\[
  \begin{pmatrix}\Sigma_P v \\ 0\end{pmatrix}
  = SL'v = G\alpha
  = \begin{pmatrix}G_P\alpha \\ G_{NG}\alpha\end{pmatrix}.
\]
The lower block gives $G_{NG}\alpha = 0$. Since $G_{NG}$ has full
column rank (Assumption \ref{ass:B1}), $\alpha = 0$. Then $\Sigma_P v = 0$, and
$\Sigma_P \succ 0$ gives $v = 0$. 
\end{proof}

\begin{lemma}[Rank of $\mathcal Q_{NG}$]\label{lem:QNG_rank}
Under Assumption \ref{ass:B1},
\begin{equation}
  \mathrm{rank}(\mathcal Q_{NG}) = q_{NG} - (p_\phi - \mathrm{rank}(G_P)).
  \label{eq:rank_QNG}
\end{equation}
In particular, $\mathcal Q_{NG} \succ 0$ if and only if
$\mathrm{rank}(G_P) = p_\phi$.
\end{lemma}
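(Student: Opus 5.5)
The plan is to mirror the proof of Lemma~\ref{lem:QP_rank}, computing the null space of $\mathcal Q_{NG}=K\Omega K'$ exactly rather than only showing it is trivial. Throughout I use the block ordering of Lemma~\ref{lem:LK}, in which $SK'=(0',\Sigma_{NG}')'$. I also use the characterization $\mathrm{null}(\Omega)=\mathrm{col}(S^{-1}G)$ established just before Lemma~\ref{lem:QP_rank}. The target is
\[
\mathrm{null}(\mathcal Q_{NG})=\{\Sigma_{NG}^{-1}G_{NG}\alpha:\ \alpha\in\mathrm{null}(G_P)\}.
\]
This subspace will turn out to have dimension $p_\phi-\mathrm{rank}(G_P)$, and rank--nullity then gives \eqref{eq:rank_QNG}.

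\emph{Step 1 (necessity).} Suppose $\mathcal Q_{NG}v=0$. Since $\Omega\succeq0$, we have $(K'v)'\Omega(K'v)=0$, so $K'v\in\mathrm{null}(\Omega)$. Hence $K'v=S^{-1}G\alpha$ for some $\alpha\in\mathbb R^{p_\phi}$. Premultiplying by $S$ and using \eqref{eq:SLt_SKt} gives
\[
\begin{pmatrix}0\\ \Sigma_{NG}v\end{pmatrix}=SK'v=G\alpha=\begin{pmatrix}G_P\alpha\\ G_{NG}\alpha\end{pmatrix}.
\]
The first block gives $G_P\alpha=0$. The second block gives $v=\Sigma_{NG}^{-1}G_{NG}\alpha$, which uses $\Sigma_{NG}\succ0$ from \eqref{eq:app_schurcomplements}.

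\emph{Step 2 (sufficiency).} Take any $\alpha\in\mathrm{null}(G_P)$ and set $v=\Sigma_{NG}^{-1}G_{NG}\alpha$. Then $SK'v=(0',(G_{NG}\alpha)')'=G\alpha$. Because $S$ is invertible, this means $K'v=S^{-1}G\alpha\in\mathrm{null}(\Omega)$, so $\Omega K'v=0$ and therefore $\mathcal Q_{NG}v=0$.

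\emph{Step 3 (dimension count).} The linear map $\alpha\mapsto\Sigma_{NG}^{-1}G_{NG}\alpha$ is injective, because $G_{NG}$ has full column rank by Assumption~\ref{ass:B1} and $\Sigma_{NG}^{-1}$ is nonsingular. It therefore maps $\mathrm{null}(G_P)$ isomorphically onto $\mathrm{null}(\mathcal Q_{NG})$. Hence $\dim\mathrm{null}(\mathcal Q_{NG})=p_\phi-\mathrm{rank}(G_P)$, and
\[
\mathrm{rank}(\mathcal Q_{NG})=q_{NG}-\bigl(p_\phi-\mathrm{rank}(G_P)\bigr).
\]
Since $\mathcal Q_{NG}\succeq0$, it is positive definite exactly when this null space is trivial, that is, exactly when $\mathrm{rank}(G_P)=p_\phi$.

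\emph{Main obstacle.} The only delicate point is the block bookkeeping. The operator $K$ and the identity $SK'=(0',\Sigma_{NG}')'$ must be written in the same partition ordering as $G=(G_P',G_{NG}')'$, so that the zero block of $SK'v$ lines up with $G_P\alpha$ and not with $G_{NG}\alpha$. I will state the ordering of Lemma~\ref{lem:LK} explicitly at the outset. The rest is linear algebra of the same form as Lemma~\ref{lem:QP_rank}.
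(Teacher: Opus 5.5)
Your proposal is correct and follows the paper's proof essentially step for step. Both reduce $\mathcal Q_{NG}v=0$ to $K'v=S^{-1}G\alpha$, use $SK'=(0',\Sigma_{NG}')'$ in the $P$-first ordering of Lemma~\ref{lem:LK} to obtain $G_P\alpha=0$ and $v=\Sigma_{NG}^{-1}G_{NG}\alpha$, and count dimensions through the injective map $\alpha\mapsto\Sigma_{NG}^{-1}G_{NG}\alpha$ on $\mathrm{null}(G_P)$. Your Step~2 also checks that this map actually lands in $\mathrm{null}(\mathcal Q_{NG})$, which the paper's bijection argument leaves implicit, so your version is slightly more complete.
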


\begin{proof}
Let $\mathcal Q_{NG}v = 0$. The same argument gives $K'v = S^{-1}G\alpha$, and
applying \eqref{eq:SLt_SKt}:
\[
  \begin{pmatrix}0 \\ \Sigma_{NG}v\end{pmatrix}
  = SK'v = G\alpha
  = \begin{pmatrix}G_P\alpha \\ G_{NG}\alpha\end{pmatrix}.
\]
This yields the system
\begin{equation}
  G_P\alpha = 0, \qquad G_{NG}\alpha = \Sigma_{NG}v.
  \label{eq:sys_NG}
\end{equation}
The first equation confines $\alpha$ to
$\mathcal{N}_P \equiv \mathrm{null}(G_P)$, which has dimension
$p_\phi - \mathrm{rank}(G_P)$.

The map $\varphi: \mathcal{N}_P \to \mathbb{R}^{q_{NG}}$ defined by
$\varphi(\alpha) = \Sigma_{NG}^{-1}G_{NG}\alpha$ is a bijection onto
$\mathrm{null}(\mathcal Q_{NG})$:
\begin{itemize}
  \item \emph{Injective}: if $\varphi(\alpha) = 0$ then
    $G_{NG}\alpha = 0$; full column rank of $G_{NG}$
    (Assumption \ref{ass:B1}) forces $\alpha = 0$.
  \item \emph{Surjective}: every $v \in \mathrm{null}(\mathcal Q_{NG})$
    satisfies \eqref{eq:sys_NG} for some $\alpha \in \mathcal{N}_P$,
    and $v = \Sigma_{NG}^{-1}G_{NG}\alpha = \varphi(\alpha)$.
\end{itemize}
Therefore $\dim(\mathrm{null}(\mathcal Q_{NG})) = \dim(\mathcal{N}_P) = p_\phi -
\mathrm{rank}(G_P)$, giving \eqref{eq:rank_QNG}. 
\end{proof}

\begin{remark}[Degrees-of-freedom compression]
\label{rem:df}
In a SVAR with $n$ variables and $k$ proxy instruments,
$p_\phi = n(n-1)/2$ and $q_P = k(n-1)$, we have $\mathrm{rank}(G_P) < p_\phi$
generically, and $\mathcal Q_{NG}$ is rank-deficient. As a concrete example, for
$n = 3$ and $k = 1$: $q_P = 2 < p_\phi = 3$, so
$\mathrm{rank}(\mathcal Q_{NG}) \leq q_{NG} - 1$.
\end{remark}

\subsubsection{Asymptotic distributions}
\label{app:B:asym}

\begin{lemma}[Quadratic form under singular normal]\label{lem:chi2}
Let $Z \sim \mathcal{N}(0, A)$ with $A \succeq 0$ of rank $r$.
Then $Z'A^+Z \sim \chi^2(r)$.
\end{lemma}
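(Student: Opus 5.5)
The plan is to reduce the singular case to the classical nonsingular one through the spectral decomposition of $A$. Since $A\succeq 0$ is symmetric of rank $r$, I would write $A=U\Lambda_r U'$. Here $U\in\mathbb{R}^{m\times r}$ has orthonormal columns ($U'U=I_r$) spanning $\mathrm{col}(A)$, and $\Lambda_r=\operatorname{diag}(\lambda_1,\ldots,\lambda_r)$ with every $\lambda_i>0$. In this basis the Moore--Penrose pseudo-inverse is $A^{+}=U\Lambda_r^{-1}U'$, which I would confirm by checking the four Penrose conditions directly; each is a one-line computation using $U'U=I_r$.

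Next I would show that $Z$ lives almost surely in $\mathrm{col}(A)$. Let $U_\perp$ complete $U$ to an orthonormal basis. Then $U_\perp'Z\sim\mathcal N(0,U_\perp'AU_\perp)=\mathcal N(0,0)$, so $U_\perp'Z=0$ almost surely and $Z=UU'Z$ almost surely.

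I would then define $Y\equiv\Lambda_r^{-1/2}U'Z$. This is a linear transformation of a Gaussian vector, so $Y\sim\mathcal N\bigl(0,\Lambda_r^{-1/2}U'AU\Lambda_r^{-1/2}\bigr)=\mathcal N(0,I_r)$, using $U'AU=\Lambda_r$. Substituting gives
\[
Z'A^{+}Z=Z'U\Lambda_r^{-1}U'Z=Y'Y=\sum_{i=1}^r Y_i^2 .
\]
Since the $Y_i$ are i.i.d.\ $\mathcal N(0,1)$, this sum is $\chi^2(r)$ by definition. Strictly, this identity holds exactly without invoking the support argument, because $A^{+}$ annihilates $\mathrm{col}(A)^{\perp}$. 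I would still include the support step, because it is what justifies the variants used downstream: replacing $A^{+}$ by any generalized inverse, or inserting the projector $\hat\Pi$ as in part (iii) of Proposition~\ref{prop:limitdist}.

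The only step needing care is the identification of $A^{+}$, since a generalized inverse other than Moore--Penrose could give a different quadratic form off the support. The support argument removes that ambiguity, so I do not expect any genuine obstacle. The degenerate case $r=0$ is trivially $\chi^2(0)$, i.e.\ the point mass at zero.
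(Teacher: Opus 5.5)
Your proof is correct and follows essentially the same route as the paper: spectral decomposition of $A$, $A^{+}$ obtained by inverting the nonzero eigenvalues, standardizing $U'Z$ to independent $\mathcal N(0,1)$ coordinates, and summing squares. The paper uses the full orthogonal $U$ with a zero-padded $\Lambda^{+}$ where you use the thin $U$. Your remark that the support step is not needed for the identity $Z'A^{+}Z=Y'Y$, and matters only for invariance to the choice of generalized inverse and for the projector argument in Proposition~\ref{prop:limitdist}(iii), is accurate and applies equally to the paper's version.
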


\begin{proof}
Let $A = U\Lambda U'$ with $\Lambda = \mathrm{diag}(\lambda_1, \ldots,
\lambda_r, 0, \ldots, 0)$, $\lambda_i > 0$, $U$ orthogonal, and
$A^+ = U\Lambda^+U'$. Set $W = U'Z \sim \mathcal{N}(0, \Lambda)$ and
partition $W = (W_1', W_2')'$ with $W_1 \in \mathbb{R}^r$. Since
$\mathrm{Var}(W_2) = 0$, $Z$ is supported on $\mathrm{col}(A)$, so
$W_2 = U_2'Z = 0$ a.s.\ Therefore:
\[
  Z'A^+Z = W'\Lambda^+W = \sum_{i=1}^r \frac{W_{1,i}^2}{\lambda_i}.
\]
Each $W_{1,i}/\sqrt{\lambda_i} \sim \mathcal{N}(0,1)$ independently,
so the sum is $\chi^2(r)$. 
\end{proof}

\textbf{Proof of Proposition \ref{prop:limitdist}}
\begin{proof}
\textit{Part (i).} By \eqref{eq:proj_rep} and the CLT (Assumption A2),
$T^{1/2}\tilde{g}_{P,T}(\hat\phi) \xrightarrow{d} Z_P \sim
\mathcal{N}(0, \mathcal Q_P)$, since $LMS(LM)' = L\Omega L' = \mathcal Q_P$. By
Lemma \ref{lem:QP_rank}, $\mathcal Q_P \succ 0$, so $\hat{\mathcal Q}_P^{-1}
\xrightarrow{p} \mathcal Q_P^{-1}$ and the continuous mapping theorem gives
$J_{\mathrm{prx}} \xrightarrow{d} Z_P'\mathcal Q_P^{-1}Z_P = \|\mathcal Q_P^{-1/2}Z_P\|^2 \sim
\chi^2(q_P)$.

\textit{Part (ii).} Identical to Part (i) with $K$, $\mathcal Q_{NG}$, $q_{NG}$
in place of $L$, $\mathcal Q_P$, $q_P$, using $\mathcal Q_{NG} \succ 0$ from
Lemma \ref{lem:QNG_rank}.

\textit{Part (iii).} We have $T^{1/2}\tilde{g}_{NG,T}(\hat\phi)
\xrightarrow{d} Z_{NG} \sim \mathcal{N}(0, \mathcal Q_{NG})$ with
$\mathrm{rank}(\mathcal Q_{NG}) = r_{NG}$ by Lemma \ref{lem:QNG_rank}.

Let $\mathcal Q_{NG} = V\Lambda_{NG}V'$ be the spectral decomposition, with
$\Lambda_{NG} = \mathrm{diag}(\mu_1, \ldots, \mu_{r_{NG}}, 0, \ldots,
0)$, $\mu_i > 0$, and $\Pi$ the projector onto $\mathrm{col}(\mathcal Q_{NG})$
(spanned by the first $r_{NG}$ columns of $V$). The thresholded
estimator $\hat\Pi$ retains eigenvectors of $\hat{\mathcal Q}_{NG}$ with
eigenvalues $\geq\tau_T$. Since the positive eigenvalues of $\mathcal Q_{NG}$
are bounded away from zero and $\hat{\mathcal Q}_{NG} \xrightarrow{p} \mathcal Q_{NG}$,
Weyl's inequality implies $\hat\mu_i \xrightarrow{p} \mu_i > 0$ for
$i \leq r_{NG}$ and $\hat\mu_j \xrightarrow{p} 0$ for $j > r_{NG}$.
The rate condition $\epsilon_T \to 0$, $T^{1/2}\epsilon_T \to \infty$
ensures $\hat\Pi \xrightarrow{p} \Pi$ and
$\hat\Pi\hat{\mathcal Q}_{NG}^+\hat\Pi \xrightarrow{p} \mathcal Q_{NG}^+$
\citep{stewart_rank_1977} Theorem 2.3.

Since $Z_{NG}$ is supported on $\mathrm{col}(\mathcal Q_{NG}) =
\mathrm{col}(\Pi)$ a.s., we have $\Pi Z_{NG} = Z_{NG}$ a.s. The
continuous mapping theorem then gives
\[
  J_{NG} \xrightarrow{d} Z_{NG}'\mathcal Q_{NG}^+Z_{NG} \sim \chi^2(r_{NG}),
\]
where the last step applies Lemma \ref{lem:chi2} with $A = \mathcal Q_{NG}$
and $r = r_{NG}$. 
\end{proof}

\begin{remark}[Structural interpretation of the asymmetry]
$J_{\mathrm{prx}}$ uses the jointly identified estimator $\hat\phi$; the
full-rank $G_{NG}$ absorbs all $p_\phi$ estimation degrees of freedom,
leaving the proxy moments free to span all $q_P$ dimensions. For
$J_{NG}$, only $\mathrm{rank}(G_P)$ of the $p_\phi$ identification
directions are supplied by the proxy block; the remaining $p_\phi -
\mathrm{rank}(G_P)$ directions compress the effective degrees of freedom
below $q_{NG}$.
\end{remark}

\begin{remark}[Symmetric contamination]
By the same argument, proxy mis-specification
($\mathbb{E}[g_P(\phi_0)] \neq 0$) drives $\hat\phi$ to $\phi^{**}$
with $\mathbb{E}[g_{NG}(\phi^{**})] \neq 0$ generically, inflating
$J_{NG}$. This non-separability is why both statistics are constructed
from the jointly identified $\hat\phi$.
\end{remark}

\subsubsection{Summary of degrees of freedom}
\label{app:B:dof}

Table \ref{tab:dof} collects the degrees of freedom for each test
statistic.

\begin{table}[h]
\centering
\renewcommand{\arraystretch}{1.3}
\begin{tabular}{lll}
\hline\hline
Statistic & d.f.\ (general) & d.f.\ ($\mathrm{rank}(G_P) = p_\phi$) \\
\hline
$J_{\mathrm{joint}}$ & $q - p_\phi$ & $q - p_\phi$ \\
$J_{\mathrm{prx}}$               & $q_P$        & $q_P$ \\
$J_{NG}$            & $q_{NG} - (p_\phi - \mathrm{rank}(G_P))$ & $q_{NG}$ \\
\hline\hline
\end{tabular}
\caption{Effective degrees of freedom for each test statistic. The
  general column applies whenever assumptions for Proposition \ref{prop:identification_unified} and Assumption \ref{ass:B1} hold.}
\label{tab:dof}
\end{table}

The sum of the subset degrees of freedom satisfies
\[
  q_P + \bigl(q_{NG} - (p_\phi - \mathrm{rank}(G_P))\bigr)
  = (q - p_\phi) + \mathrm{rank}(G_P).
\]
This exceeds the full-test degrees of freedom $q - p_\phi$ whenever
$\mathrm{rank}(G_P) > 0$, reflecting the fact that $J_{\mathrm{prx}}$ and $J_{NG}$
are not asymptotically independent (see the Remark following
Lemma \ref{lem:proj}): the additive decomposition $J_{\mathrm{full}}
= J_{\mathrm{prx}} + J_{NG}$ holds in distribution if and only if
$\tilde{G}_P(G'S^{-1}G)^{-1}\tilde{G}_{NG}' = 0$, which is not
generically satisfied.

\section{Inference for structural impulse-response functions}\label{subsec:structural_irfs}

The structural VMA representation is $y_t = \mu + \sum_{h=0}^\infty
\Theta_h\varepsilon_{t-h}$, where $\Theta_h = \Xi_h B_0$ (see
Section \ref{appendix:preliminaries} for the definitions of $\Theta_h$,
$\Xi_h$, and related objects). The reduced-form VMA coefficients
satisfy the recursion
\begin{equation}
  \Xi_h = \sum_{j=1}^{\min(h,p)} A_j\Xi_{h-j}, \qquad \Xi_0 = I_n,
\end{equation}
and $[\Theta_h]_{ij}$ measures the response of variable $i$ to a unit
impulse in structural shock $j$ at horizon $h$.

Since $\Theta_h = \Xi_h(\theta)B_0(\theta,\phi)$ is a smooth function
of $\psi$, the composite Jacobian $\mathcal{J}_{\Theta_h} =
\partial\,\mathrm{vec}(\Theta_h)/\partial\psi'$ follows from the
product rule:
\begin{equation}
  \mathcal{J}_{\Theta_h}
  = (B_0'\otimes I_n)\,
    \frac{\partial\,\mathrm{vec}(\Xi_h)}{\partial\psi'}
  + (I_n\otimes\Xi_h)\,\mathcal{J}_{B_0}.
  \label{eq:irf_jac}
\end{equation}
Since $\Xi_h$ depends only on $\theta$, we have
$\partial\,\mathrm{vec}(\Xi_h)/\partial\phi' = 0$. Differentiating
the VMA recursion with respect to $\theta$ \citep{lutkepohl_2005}
gives
\begin{equation}
  \frac{\partial\,\mathrm{vec}(\Xi_h)}{\partial\theta'}
  = \sum_{j=1}^{\min(h,p)}
    \left[
      (I_n\otimes A_j)\,
      \frac{\partial\,\mathrm{vec}(\Xi_{h-j})}{\partial\theta'}
      + (\Xi_{h-j}'\otimes I_n)\,E_j
    \right],
  \label{eq:vma_deriv}
\end{equation}
where $(\Xi_{h-j}'\otimes I_n)E_j$ captures the direct effect of the
$j$-th lag matrix on $\Xi_h$, and $(I_n\otimes A_j)$ propagates the
derivative of $\Xi_{h-j}$ forward through the recursion. The delta
method applied to the joint asymptotic distribution of $\hat\psi$ with
Jacobian $\mathcal{J}_{\Theta_h}$ yields
\begin{equation}
  \sqrt{T}\,\mathrm{vec}(\hat\Theta_h - \Theta_h)
  \xrightarrow{d}
  \mathcal{N}\bigl(0,\;\mathcal{J}_{\Theta_h} \mathcal{V}_\psi
  \mathcal{J}_{\Theta_h}'\bigr).
  \label{eq:irf_asymp}
\end{equation}

\section{Residual-Based moving block bootstrap for HGMM estimator}\label{appendix:mbb}

The bootstrap follows \citet{bruggemann_inference_2016}. The procedure
operates on the joint residual-proxy pairs
\[
  z_t = (\hat{u}_t', m_t')',
\]
where $\hat{u}_t = y_t - \hat\nu - \sum_{j=1}^p \hat{A}_jy_{t-j}$
for $t = 1, \ldots, T$ are OLS residuals. Construct the $T - \ell + 1$
overlapping blocks
\begin{equation}
  \mathcal{L}_i = \bigl(z_i,\, z_{i+1},\, \ldots,\, z_{i+\ell-1}\bigr),
  \qquad i = 1, \ldots, T - \ell + 1,
  \label{eq:mbb_blocks}
\end{equation}
where the block length $\ell$ satisfies $\ell \to \infty$ and $\ell^3/T
\to 0$ as $T \to \infty$. Following \citet{jentsch_asymptotically_2022}, we set
$\ell = \lfloor 5.03\,T^{1/4} \rfloor$.

For each replication $b = 1, \ldots, B$, execute the following steps.

\begin{enumerate}

\item \textbf{Block resampling.}
Draw $b_T = \lceil T/\ell \rceil$ blocks independently with replacement
from $\{\mathcal{L}_1, \ldots, \mathcal{L}_{T-\ell+1}\}$. Concatenate
and trim to length $T$, yielding the bootstrap joint sequence
$\{z_t^*\}_{t=1}^T = \{(\hat{u}_t^{*\prime}, m_t^{*\prime})'\}_{t=1}^T$.
By construction, the bootstrap blocks preserve the empirical joint
dependence between $\hat{u}_t$ and $m_t$ within each block.

\item \textbf{Bootstrap standardised innovations.}
Recursively generate the bootstrap time series $\hat{y}_t^*$ using the
autoregressive coefficients and the resampled residuals; then demean the
bootstrap residuals $\hat{u}_t^*$ and compute the bootstrap residual
covariance $\hat\Sigma^* = T^{-1}(U^{*\prime})U^*$. With the
lower-triangular Cholesky factor $H^*$ satisfying $H^*(H^*)' =
\hat\Sigma^*$, the bootstrap standardised innovations are
\begin{equation}
  \tilde{U}^* = (H^*)^{-1}(U^*)'.
  \label{eq:boot_innov}
\end{equation}

\item \textbf{$k$-step Newton--Raphson bootstrap estimation.}
Initialize the bootstrap parameter at the original estimate,
$\phi_0^* = \hat\phi$, and iterate the Newton--Raphson update
using the bootstrap moment matrix $G_m(\phi^*, \tilde{U}^*, M^*)$ and
the {fixed} original-data weighting matrix $W = \hat{S}^{-1}$. The
bootstrap gradient and Hessian are
\begin{equation}
  \nabla = G^{*\prime}W g^*, \qquad
  \mathcal{H} = G^{*\prime}W G^*,
  \label{eq:boot_nr}
\end{equation}
with update $\phi^* \leftarrow \phi^* - \mathcal{H}^{-1}\nabla$ until
both the step norm and gradient norm fall below tolerance $\epsilon =
10^{-6}$, or after a maximum of 100 iterations. This $k$-step procedure
follows \citet{andrews_2002} and achieves first-order asymptotic
equivalence to the full two-step bootstrap estimator at substantially
reduced computational cost.

\item \textbf{Recentering and bootstrap weighting matrix.}
Evaluate the recentered bootstrap moment vector
\begin{equation}
  g_{\mathrm{final}}^* = \bar{G}_m^* - \bar{g}_T(\hat\phi),
  \label{eq:boot_recenter}
\end{equation}
and the bootstrap weighting matrix
\begin{equation}
  S^* = \frac{1}{T}(G_m^* - \bar{G}_m^*)'(G_m^* - \bar{G}_m^*),
  \qquad W^* = (S^*)^{-1}.
  \label{eq:boot_S}
\end{equation}

\item \textbf{Bootstrap Schur complements and projected Jacobians.}
Partition $S^*$ and $G^*$ conformably with block sizes $(q_{NG}, q_P)$
and compute the bootstrap projection operators
\begin{equation}
  P_P^* = S_{PN}^*(S_{NN}^*)^{-1}, \qquad
  P_{NG}^* = S_{NP}^*(S_{PP}^*)^{-1}.
  \label{eq:boot_proj}
\end{equation}
The bootstrap projected Jacobians and Schur complements are
\begin{align}
  \tilde{G}_P^* &= G_P^* - P_P^*G_{NG}^*, &
  \tilde{G}_{NG}^* &= G_{NG}^* - P_{NG}^*G_P^*, \\
  \Sigma_P^* &= S_{PP}^* - P_P^*S_{NP}^*, &
  \Sigma_{NG}^* &= S_{NN}^* - P_{NG}^*S_{PN}^*.
  \label{eq:boot_schur}
\end{align}

\item \textbf{Parameter-estimation correction.}
Apply the estimation-degrees-of-freedom correction to each Schur
complement:
\begin{align}
  \mathcal Q_P^* &= \Sigma_P^*
          - \tilde{G}_P^*(G^{*\prime}W^*G^*)^{-1}(\tilde{G}_P^*)',
          \label{eq:boot_QP} \\
  \mathcal Q_{NG}^* &= \Sigma_{NG}^*
              - \tilde{G}_{NG}^*(G^{*\prime}W^*G^*)^{-1}
                (\tilde{G}_{NG}^*)'.
              \label{eq:boot_QNG}
\end{align}

\item \textbf{Bootstrap test statistics.}
The bootstrap orthogonalized moments are
\begin{equation}
  g_P^* = g_P^{*,\mathrm{fin}} - P_P^*g_{NG}^{*,\mathrm{fin}},
  \qquad
  g_{NG}^* = g_{NG}^{*,\mathrm{fin}} - P_{NG}^*g_P^{*,\mathrm{fin}}.
  \label{eq:boot_gstar}
\end{equation}
The bootstrap test statistics are
\begin{align}
  J_{\mathrm{joint}}^*
    &= T\cdot(g_{\mathrm{final}}^*)'\,W^*\,g_{\mathrm{final}}^*,
    \label{eq:boot_j_joint}\\
  J_{\mathrm{prx}}^*
    &= T\cdot(g_P^*)'\,(\mathcal Q_P^*)^+\,g_P^*,
    \label{eq:boot_j_prx}\\
  J_{NG}^*
    &= T\cdot(g_{NG}^*)'\,(\mathcal Q_{NG}^*)^+\,g_{NG}^*.
    \label{eq:boot_j_ng}
\end{align}

\end{enumerate}

\section{Estimates under Weak non-Gaussianity} \label{appendix:weak_ng_sim}
\begin{table}[h!]
\centering
\caption{Parameter Estimates under $H_0$: Weak non-Gaussianity (Pitman Drift $c = 2.24$)}
\label{tab:phi_estimates_weakng}
\begin{tabular}{lccccc}
\toprule
 & True Value & $T = 200$ & $T = 300$ & $T = 500$ & $T = 1000$  \\
\midrule
\multicolumn{6}{l}{\textit{Estimates and their standard errors}} \\[3pt]
$\hat{\phi}_1$ & $0.80$    & 0.7881    & 0.7996    & 0.8009    & 0.8002     \\
            &  & (0.0356) & (0.0295) & (0.0231) & (0.0165)  \\
$\hat{\phi}_2$ & $-0.40$   & $-$0.3976 & $-$0.3992 & $-$0.3996 & $-$0.3999 \\
                &  & (0.0358) & (0.0299) & (0.0235) & (0.0168)  \\
$\hat{\phi}_3$ & $1.20$    & 1.2013    & 1.1993    & 1.2044    & 1.1989     \\
            &  & (0.0373) & (0.0312) & (0.0243) & (0.0174)  \\[6pt]
\bottomrule
\end{tabular}
\begin{minipage}{0.95\textwidth}
\vspace{4pt}
\footnotesize
\textit{Notes:} Results based on $M = 500$ Monte Carlo replications with $N_B = 999$ bootstrap replications. Structural shocks follow a Pitman mixture: $\varepsilon_{kt} = w \cdot \varepsilon_{kt}^{NG} + (1-w) \cdot z_{kt}$, where $w = c/\sqrt{T}$ with $c = 2.24$, $\varepsilon^{NG}$ are the baseline non-Gaussian shocks, and $z_{kt} \sim N(0,1)$. As $T$ increases, the shocks approach Gaussianity ($w \approx 0.158, 0.129, 0.100, 0.071, 0.022$). Proxies are valid with relevance $\Psi = (0.70, 0.60, 0.80)'$ and noise $\sigma_v = 0.50$.
\end{minipage}
\end{table}

\begin{table}[ht]
\centering
\caption{Parameter Estimates under $H_0$: Weak Proxies and Weak non-Gaussianity (Pitman Drift $c = 2.24$)}
\label{tab:phi_estimates_weakpng}
\begin{tabular}{lccccc}
\toprule
 & True Value & $T = 200$ & $T = 300$ & $T = 500$ & $T = 1000$ \\
\midrule
\multicolumn{6}{l}{\textit{Estimates and their standard errors}} \\[3pt]
$\hat{\phi}_1$ & $0.80$    & 0.7619    & 0.7942    & 0.8001    & 0.7984 \\
               &  & (0.1218) & (0.1256) & (0.1275) & (0.1314) \\
$\hat{\phi}_2$ & $-0.40$   & $-$0.3776 & $-$0.3931 & $-$0.3685 & $-$0.3962 \\
               &  & (0.1092) & (0.1117) & (0.1163) & (0.1207) \\
$\hat{\phi}_3$ & $1.20$    & 1.0839    & 1.1105    & 1.1493    & 1.1229 \\
               &  & (0.1202) & (0.1275) & (0.1273) & (0.1320) \\[6pt]
\bottomrule
\end{tabular}
\begin{minipage}{0.95\textwidth}
\vspace{4pt}
\footnotesize
\textit{Notes:} Results based on $M = 500$ Monte Carlo replications with $N_B = 999$ bootstrap replications. Structural shocks follow a Pitman mixture: $\varepsilon_{kt} = w \cdot \varepsilon_{kt}^{NG} + (1-w) \cdot z_{kt}$, where $w = c/\sqrt{T}$ with $c = 2.24$, $\varepsilon^{NG}$ are the baseline non-Gaussian shocks, and $z_{kt} \sim N(0,1)$. As $T$ increases, the shocks approach Gaussianity ($w \approx 0.158, 0.129, 0.100, 0.071$). Proxy relevance follows a Pitman drift: $\Psi_i = c / \sqrt{T}$ with $c = 2.24$ for all instruments, yielding $\Psi \approx (0.158, 0.129, 0.100, 0.071)$ for $T = (200, 300, 500, 1000)$.
\end{minipage}
\end{table}

\section{Construction of identification-robust confidence sets}
\label{sec:emp_arsets_cons}

Constructing the set requires inverting an identification-robust test statistic
over the rotation parameter, and doing so is made tractable by a reduction to the
target shock. The impulse responses of the target shock depend on the impact
matrix $B=H\,Q(\phi)$ only through its first column,
\begin{equation}
  b_1 \;=\; H\,q_1(\phi_{(1)}),
  \qquad
  q_1(\phi_{(1)}) \;=\; Q(\phi)\,e_1,
  \label{eq:ar_q1reduction}
\end{equation}
where $q_1$ is the first column of the rotation matrix and $e_1$ is the first
unit vector. Under the Givens parameterization of Section~\ref{subsec:svarparam},
$q_1$ is a function of only the first $n-1$ angles $\phi_{(1)}$, not of all
$p_\phi=n(n-1)/2$; the remaining angles rotate the non-target shocks among
themselves and leave every object below unchanged. We therefore invert the test
over the $(n-1)$-dimensional box $\phi_{(1)}\in(-\pi/2,\pi/2)^{n-1}$, on which
$q_1$ ranges over the hemisphere with $q_{1,1}>0$. Because $H$ is lower
triangular with positive diagonal, this sign restriction fixes a positive impact
of the shock on its own target variable (the real oil price in the oil
illustration, the two-year OIS rate in the monetary illustration), coinciding
with the normalization of the point estimate, so no mirror-image set arises.

For a candidate unit vector $q_1$ the implied target shock is
$e_{1,t}(q_1)=q_1'\tilde u_t$, with $\tilde u_t=H^{-1}u_t$ the whitened
residuals, and the $n-1$ remaining shocks are $e_{\perp,t}(q_1)=N(q_1)'\tilde u_t$,
where the columns of $N(q_1)$ form an orthonormal basis of the orthogonal
complement of $q_1$. The two criteria differ only in which moment conditions
enter. The \textit{Proxy-only} criterion uses the $n-1$ proxy exclusion
conditions that identify the target shock; the \textit{HGMM} criterion adds
the higher-order moment conditions of Section~\ref{subsec:nongaussianconditions}
evaluated at the target shock,
\begin{equation}
  \bar g(q_1)=
  \frac{1}{T}\sum_{t=1}^{T}
  \begin{bmatrix}
    m_t\, e_{\perp,t}(q_1)\\[3pt]
    e_{1,t}^{2}(q_1)\, e_{\perp,t}(q_1)\\[3pt]
    e_{1,t}^{3}(q_1)\, e_{\perp,t}(q_1)\\[3pt]
    e_{1,t}^{2}(q_1)\bigl(\tilde u_t'\tilde u_t-e_{1,t}^{2}(q_1)\bigr)-(n-1)
  \end{bmatrix},
  \label{eq:ar_moments}
\end{equation}
which are, respectively, the proxy exclusion conditions $\mathbb E[m_t e_{j,t}]=0$,
the co-skewness conditions $\mathbb E[e_{1,t}^{2}e_{j,t}]=0$, the co-kurtosis
conditions $\mathbb E[e_{1,t}^{3}e_{j,t}]=0$, and the aggregated symmetric
co-kurtosis condition $\sum_{j\ne 1}\bigl(\mathbb E[e_{1,t}^{2}e_{j,t}^{2}]-1\bigr)=0$,
all of which hold at the true rotation under Assumption~\ref{as:indep} and proxy
exogeneity. Each condition in \eqref{eq:ar_moments} is invariant to the choice of complement basis $N(q_1)$, which is why the reduction to $\phi_{(1)}$ is exact. For the target shock this yields $n-1$ conditions under \textit{Proxy-only}
and $3(n-1)+1$ under \textit{HGMM} (for the six-variable oil system, $5$ and
$16$; for the four-variable monetary system, $3$ and $10$).\footnote{Only the
\emph{aggregate} of the symmetric co-kurtosis conditions is invariant to the
completion of the complement of $q_1$; the individual terms
$\mathbb E[e_{1,t}^{2}e_{j,t}^{2}]-1$ are not, and are therefore not used in the
reduction. As a robustness check, we also estimate the impulse responses omitting
the asymmetric co-kurtosis conditions $\mathbb E[e_{1,t}^{3}e_{j,t}]$, yielding
$2(n-1)+1$ conditions under \textit{HGMM}, see
Figure~\ref{fig:kl_arsets_nogrp2}.}

Given the candidate moments, the sample Anderson--Rubin statistic is the
quadratic form
\begin{equation}
  \mathrm{AR}(q_1)\;=\;T\,\bar g(q_1)'\,\widehat S(q_1)^{-1}\,\bar g(q_1),
  \label{eq:ar_stat}
\end{equation}
where $\widehat S(q_1)$ is a long-run variance matrix of the moment conditions. $\widehat S$ is re-estimated at every candidate $q_1$. We estimate $\widehat S(q_1)$
and the critical value of \eqref{eq:ar_stat} by the residual-based moving-block
bootstrap of Section~\ref{sec:bootstrap}, which resamples the recentered
residuals and the proxy jointly in blocks of length $\ell$ that preserve the
serial and cross dependence of $(\tilde u_t,m_t)$, and re-estimates the VAR within
each draw. The statistic is robust to weak identification because it involves no estimated
Jacobian: under the null that $q_1$ is the true rotation,
$\sqrt{T}\,\bar g(q_1)$ is asymptotically normal irrespective of the strength of
identification, so \eqref{eq:ar_stat} is asymptotically pivotal
\citep{stock_gmm_2000}. Consistent studentization then
requires $\widehat S$ to be evaluated at the same candidate $q_1$: under weak proxies $\hat q_1$ is inconsistent,
and $\widehat S(\hat q_1)$ would not converge to the variance of the moments
under the null. We therefore re-estimate $\widehat S$ at every candidate. The
same logic underlies the weak-proxy-robust inference of
\citet{montiel_olea_inference_2021}. 

Writing $\{\bar g_b^{*}(q_1)\}_{b=1}^{B}$ for the bootstrap moments
and $Z_b^{*}(q_1)=\sqrt{T}\bigl(\bar g_b^{*}(q_1)-\bar g(q_1)\bigr)$ for their
recentered counterparts,
\begin{equation}
  \widehat S(q_1)=\frac{1}{B-1}\sum_{b=1}^{B}Z_b^{*}(q_1)\,Z_b^{*}(q_1)',
  \qquad
  \mathrm{AR}_b^{*}(q_1)=Z_b^{*}(q_1)'\,\widehat S(q_1)^{-1}\,Z_b^{*}(q_1),
  \label{eq:ar_boot}
\end{equation}
and the critical value $q_{\alpha}(q_1)$ is the $(1-\alpha)$ empirical quantile
of $\{\mathrm{AR}_b^{*}(q_1)\}_{b=1}^{B}$. The rotation $q_1$ is retained in the
$(1-\alpha)$ confidence set when the test does not reject,
\begin{equation}
  \mathcal C_{1-\alpha}
  =\Bigl\{\,\phi_{(1)}\in(-\tfrac{\pi}{2},\tfrac{\pi}{2})^{\,n-1}
     \;:\; \mathrm{AR}\bigl(q_1(\phi_{(1)})\bigr)\le q_{\alpha}\bigl(q_1(\phi_{(1)})\bigr)\Bigr\}.
  \label{eq:ar_accept}
\end{equation}
Both the statistic and its critical value depend on $\phi_{(1)}$; the set
\eqref{eq:ar_accept} collects the rotations consistent with the data at nominal
level $1-\alpha$, and its size measures how sharply the criterion identifies the
shock.

The confidence set for a given impulse response is the projection of
$\mathcal C_{1-\alpha}$ onto that coordinate. The response of variable $v$ at
horizon $h$ is linear in $q_1$,
\begin{equation}
  \mathrm{IRF}_{v}(h;q_1)=e_v'\,\bigl(R\,\hat{\mathcal A}^{\,h}R'\bigr)H\,q_1
  \;=\; c_{v,h}'\,q_1,
  \label{eq:ar_projection}
\end{equation}
so its confidence interval is obtained by minimizing and maximizing
\eqref{eq:ar_projection} over the acceptance region,
$\bigl[\min_{\phi_{(1)}\in\mathcal C_{1-\alpha}} c_{v,h}'q_1,\;
        \max_{\phi_{(1)}\in\mathcal C_{1-\alpha}} c_{v,h}'q_1\bigr]$.
Because an exhaustive grid over the $(n-1)$-dimensional box is infeasible, we
solve each endpoint as a constrained optimization: we first explore the box at
$10^{5}$ candidate rotations, half drawn uniformly and half concentrated around
the point estimate to chart a tight acceptance region accurately, retain the
feasible points satisfying \eqref{eq:ar_accept}, and then sharpen the extreme
responses by constrained nonlinear optimization of \eqref{eq:ar_projection}
subject to $\mathrm{AR}(q_1)\le q_{\alpha}(q_1)$, seeded from the most extreme
feasible points. This procedure is the empirical counterpart of the inversion
validated in Section~\ref{subsec:sim_bootstrap_ci}; it delivers an inner
approximation of the projection, so a reported interval can only be conservative.
An interval whose endpoint reaches the boundary of the exploration box is
unbounded and is reported as such.

\section{Additional Results}
\label{appendix:extra_irfs}

This appendix collects further results for the two main illustrations. For each
we report some evidence regarding the non-Gaussianity of the reduced-form innovations and the estimated structural shocks. We also provide impulse responses of the target shock estimated by the hybrid GMM
criterion, with moving-block-bootstrap percentile confidence intervals, alongside
the single-proxy SVAR benchmark; both estimators are computed on the same
bootstrap replications, so the two comparisons differ only in the identification step.
The point estimates and their percentile bands lead to the same economic reading
as the identification-robust sets in the main text. 

\subsection{Evidence for non-Gaussianity in the structural shocks}\label{appendix:evidence_nongauss}
Panel A of Table~\ref{tab:resid_nongauss} reports the third and fourth moments
of the reduced-form innovations and of the structural shocks recovered at the
estimate. Identification requires Assumption~\ref{as:gau}: at most one
shock may have zero skewness, and at least $n-1$ shocks must have nonzero excess
kurtosis of a common sign. Both hold at the estimates, the second exactly. Every
skewness is nonzero so
part (a) holds with the oil news shock as the single admissible symmetric
component. Five of the six excess kurtoses are positive and one, the world IP
shock, is negative, so part (b) holds with exactly $n-1$ shocks sharing a sign. The weaker
condition of \citet{comon_independent_1994}, that at most one shock have both
cumulants equal to zero holds comfortably. Taken as a system, joint Gaussianity of the shock vector is rejected
decisively (pooled $JB=35.14$ on $12$ degrees of freedom, $p=0.0004$).

Matching a Student-$t$ to the estimated kurtoses gives us
$\hat\nu\geq 9.9$ for every shock, so the sixth- and eighth-order moments that
govern the asymptotic variance of the fourth-order conditions are finite. This allows the asymmetric co-kurtosis conditions to be appended in
Section~\ref{subsubsec:oil_spec} for oil-shock illustration; they are an addition to the identifying block,
not a necessary requirement for it.
\begin{table}[htbp]
\centering
\caption{Evidence of non-Gaussianity in the reduced-form innovations and estimated structural shocks}
\label{tab:resid_nongauss}
\renewcommand{\arraystretch}{1.1}
\setlength{\tabcolsep}{3pt}
\resizebox{0.60\textwidth}{!}{%
\begin{tabular}{llrrrr}
\toprule
& & \multicolumn{2}{c}{Reduced-form $\hat u_t$} & \multicolumn{2}{c}{Structural $\hat\varepsilon_t(\hat\phi)$} \\
\cmidrule(lr){3-4}\cmidrule(lr){5-6}
& Variable / shock & Skew & Ex.\ kurt & Skew & Ex.\ kurt \\
\midrule
\multicolumn{6}{l}{\textit{Panel A: oil news shock, Kilian-corrected specification}} \\
\midrule
$1$ & Real oil price / \textbf{target} & $-0.084$ & $+0.043$ & $-0.042$ & $+0.004$ \\
$2$ & World oil production  & $-0.014$ & $-0.139$ & $+0.150$ & $+0.102$ \\
$3$ & World oil inventories & $-0.146$ & $+0.003$ & $-0.246$ & $+0.016$ \\
$4$ & World IP              & $+0.059$ & $+0.311$ & $-0.083$ & $-0.289$ \\
$5$ & U.S. IP               & $-0.295$ & $+1.862$ & $-0.308$ & $+1.021$ \\
$6$ & U.S. CPI              & $+0.165$ & $+1.141$ & $+0.035$ & $+0.860$ \\
\addlinespace
\multicolumn{6}{l}{\quad Pooled $JB=35.14$ on $12$ d.f., $p=0.0004$} \\

\midrule
\multicolumn{6}{l}{\textit{Panel B: monetary policy shock, Altavilla specification}} \\
\midrule
$1$ & 2Y OIS / \textbf{target}  & $-0.179$ & $+1.883$ & $0.105$ & $+1.258$ \\
$2$ & Euro Stoxx 50 (log)  & $0.241$ & $0.657$ & $-0.433$ & $+0.733$ \\
$3$ & EUR/USD (log)        & $0.252$ & $0.533$ & $+0.424$ & $+1.031$ \\
$4$ & 2Y infl.-linked swap & $0.038$ & $12.146$ & $-0.334$ & $+10.307$ \\
\addlinespace
\multicolumn{6}{l}{\quad Pooled $JB=2572.49$ on $8$ d.f., $p<10^{-16}$} \\

\bottomrule
\end{tabular}}
\begin{minipage}{0.96\textwidth}
\vspace{4pt}
\footnotesize
\textit{Notes:} Sample skewness and excess kurtosis of the reduced-form residuals $\hat u_t$ of \eqref{eq:VAR_emp} and of the structural shocks $\hat\varepsilon_t(\hat\phi)=Q(\hat\phi)'H^{-1}\hat u_t$ recovered at the HGMM estimate. Identification requires Assumption~\ref{as:gau}: at most one shock with zero skewness, and at least $n-1$ shocks with nonzero excess kurtosis of a common sign. The pooled statistic, the sum of the marginal $JB$ statistics, is distributed $\chi^2_{2n}$ under joint Gaussianity of the shock vector and rejects decisively in both applications.
\end{minipage}
\end{table}

Panel B of Table~\ref{tab:resid_nongauss} reports the same statistics for the
daily financial VAR. The normality is rejected for all four structural shocks, and on the raw
statistics the euro-area shocks are further from Gaussian than the oil-market
shocks of Panel A. The identifying gain reported in
Section~\ref{subsubsec:mp_results} is nonetheless the smaller of the two since
the additional moment block (asymmetric co-kurtosis) that the oil application appends is not available
here. Under the factorization of Section~\ref{subsec:nongaussianconditions},
$\mathbb{E}\|g_t\|^{2+\delta}<\infty$ needs $4+2\delta$ marginal moments for
$\mathbb{E}[\varepsilon_i^2\varepsilon_j^2]=1$ but $6+3\delta$ for
$\mathbb{E}[\varepsilon_i^3\varepsilon_j]=0$; without independence both bounds
double and the ordering is unchanged. An estimated structural shock, with excess
kurtosis of over $10$ and hence $\hat\nu \approx 4.5$, clears the first threshold and fails the
second: the identifying conditions remain admissible, but the asymmetric co-kurtosis
conditions cannot be appended as their variance is
undefined\footnote{The symmetric co-kurtosis conditions clear
$4+2\delta$ but not the $8+4\delta$ bound that holds without the factorization, so
the euro-area results may maintain weak tail dependence across shocks.}. The euro-area
criterion therefore rests on the identifying conditions of co-skewness and symmetric co-kurtosis alone.

\subsection{Oil news shock (Kilian (2024)-corrected specification)}\label{appendix:extra_kilian}
Table \ref{tab:arsign_kilian} records the horizons at which each criterion
determines the sign of a response.
Figure~\ref{fig:kl_mbb} reports the HGMM and proxy-SVAR responses with MBB
percentile bands. Figure~\ref{fig:kl_arsets_nogrp2} shows the
identification-robust sets when the asymmetric co-kurtosis conditions
$\mathbb{E}[\varepsilon_i^3\varepsilon_j]$ are omitted from the non-Gaussian
block, for both estimation and inference. The ranking of the two criteria is
unaffected, the \textit{HGMM} set remains the shorter one at $277$ of the
$306$ (variable, horizon) pairs, but the magnitude of the gain is smaller. The
median reduction in length falls from $71.6\%$ to $14.8\%$. The asymmetric block therefore carries a
large share of the identifying information in this application, which is consistent with the moment-existence argument of
Appendix~\ref{appendix:evidence_nongauss}: the oil-market shocks show finite
eighth moments and can support fourth-order conditions, whereas the euro-area 
shocks cannot. Figure~\ref{fig:kl_mbb_vs_ar_newboot} repeats the main-text comparison at the
shorter bootstrap block length $\ell=25$.

\begin{table}[htbp]
\centering
\caption{Horizons at which the sign of the response is resolved: \textit{Proxy-only} vs \textit{HGMM} (Kilian-corrected specification)}
\label{tab:arsign_kilian}
\renewcommand{\arraystretch}{1.15}
\resizebox{0.98\textwidth}{!}{%
\begin{tabular}{lcccc}
\toprule
 & \multicolumn{2}{c}{Horizons with $0\notin$ CS} & \multicolumn{2}{c}{Sign-resolved horizon ranges} \\
\cmidrule(lr){2-3}\cmidrule(lr){4-5}
Variable & Proxy & HGMM & Proxy & HGMM \\
\midrule
Real oil price & 9 & \textbf{41} & $[0,8]$ ($+$) & $[0,14]$, $[21,22]$ ($+$), $[26,30]$ ($-$), $[32,50]$ ($+$) \\
World oil production & 0 & \textbf{23} & -- & $[22,23]$ ($-$), $[29,31]$ ($+$), $[32,36]$ ($-$), $[41,50]$ ($+$) \\
World oil inventories & 23 & \textbf{37} & $[25,46]$ ($+$) & $[0,5]$, $[8,12]$ ($-$), $[26,50]$ ($+$) \\
World IP & 1 & \textbf{31} & -- & $[0,11]$, $[19,25]$, $[39,50]$ ($+$) \\
U.S. IP & 0 & \textbf{31} & -- & $[1,10]$ ($+$), $[31,50]$ ($-$) \\
U.S. CPI & 26 & \textbf{49} & $[2,13]$, $[36,50]$ ($+$) & $[0,29]$, $[32,50]$ ($+$) \\
\bottomrule
\end{tabular}}
\begin{minipage}{0.96\textwidth}
\vspace{4pt}
\footnotesize
\textit{Notes:} Columns 2--3 count the horizons $h\in\{0,\dots,50\}$ (out of $51$) at which the 90\% identification-robust confidence set excludes zero, so that the sign of the response is determined. Columns 4--5 report every maximal block of consecutive such horizons, in horizon order, with the sign of the resolved response as a superscript. Out of the $306$ (variable, horizon) pairs, \textit{HGMM} resolves a sign that \textit{Proxy-only} leaves open at $154$ pairs, while the reverse occurs at a single pair. Under \textit{Proxy-only} the sign of the world oil production and U.S. industrial production responses is undetermined at every horizon, and the world oil inventory response is never resolved as negative. Construction as in Table~\ref{tab:arlen_kilian}.
\end{minipage}
\end{table}
\begin{figure}[H]
    \centering
    \includegraphics[width=0.95\textwidth]{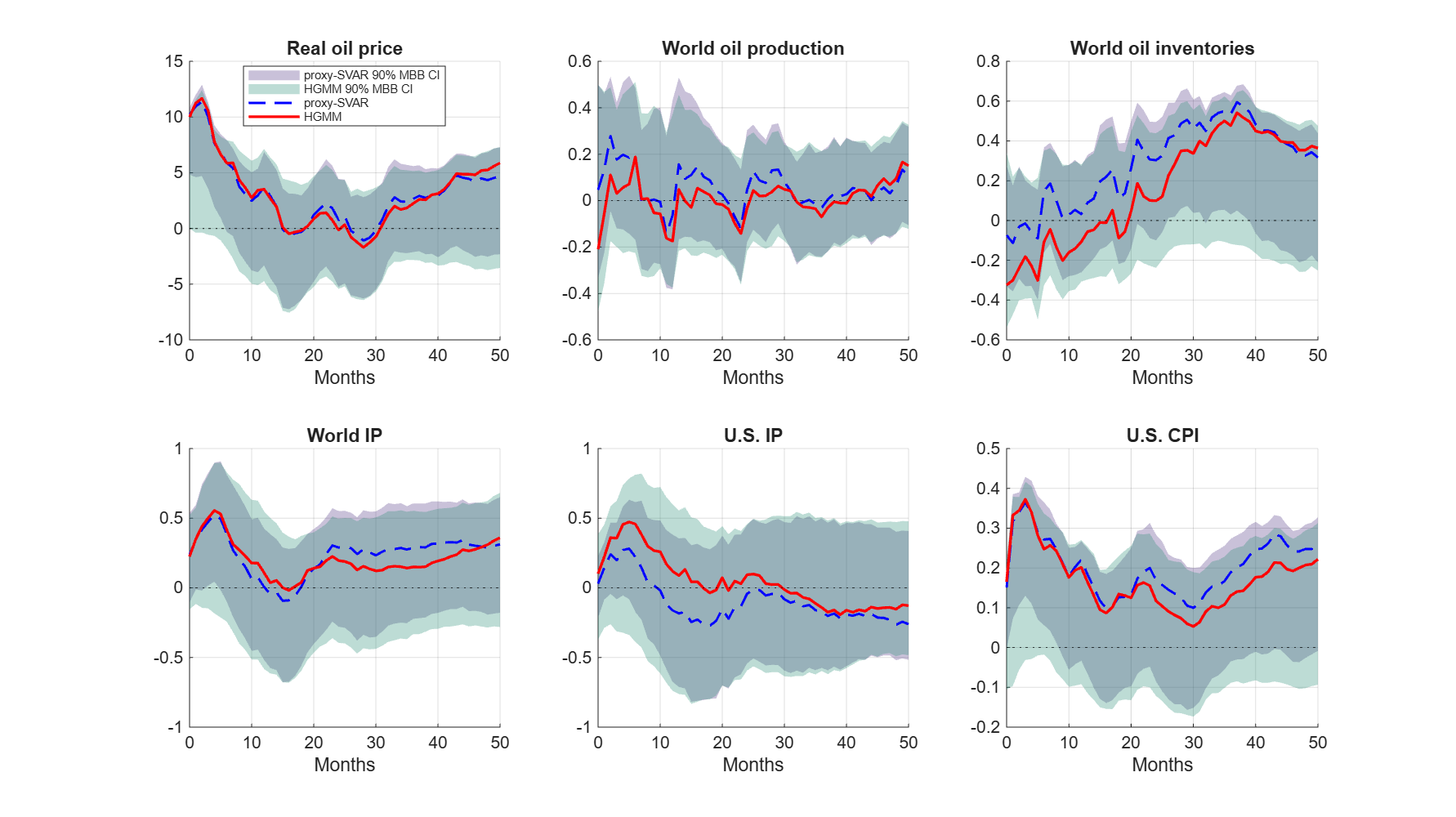}
    \caption{Impulse responses to the oil news shock: HGMM vs proxy-SVAR, with moving-block-bootstrap percentile confidence intervals (Kilian-corrected specification).}
    \label{fig:kl_mbb}
    \begin{minipage}{0.92\textwidth}
\vspace{2pt}
\scriptsize
\textit{Notes:} Point impulse responses of the oil news shock estimated by the hybrid GMM criterion (\textcolor{red}{HGMM}, solid red) and by the single-proxy SVAR (\textcolor{blue}{proxy-SVAR}, dashed blue), each with $90\%$ residual-based moving-block-bootstrap percentile confidence intervals (shaded; HGMM in green, proxy-SVAR in purple). Both estimators are computed on every bootstrap draw from the {same} resampled residuals, so the two comparisons differ only in the identification step. Bootstrap: $9999$ replications, block length $\ell=36$, VAR re-estimated within each draw. Responses are normalized to a $10\%$ increase in the real oil price on impact; the $x$-axis is the horizon in months, the $y$-axis the response in percent.
    \end{minipage}
\end{figure}

\begin{figure}[H]
    \centering
    \includegraphics[width=0.95\textwidth]{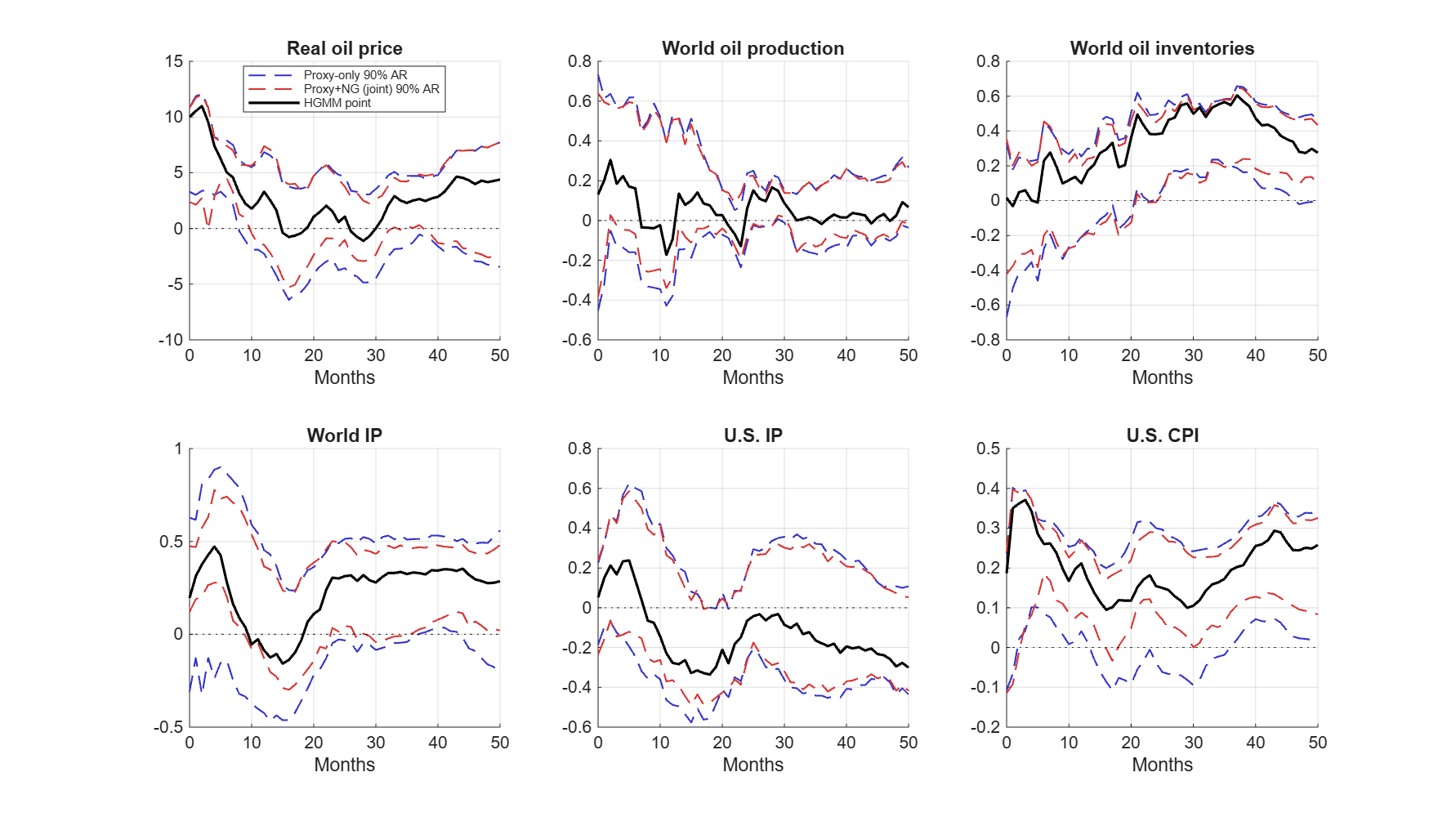}
    \caption{90\% Anderson--Rubin confidence sets for the oil news shock: \textit{Proxy-only} vs \textit{HGMM} (without asymmetric co-kurtosis conditions).}
    \label{fig:kl_arsets_nogrp2}
    \begin{minipage}{0.92\textwidth}
\vspace{2pt}
\scriptsize
\textit{Notes:} Each panel plots the 90\% identification-robust Anderson--Rubin (AR) confidence set for the impulse response of one variable to the oil news shock, at horizons $0$--$50$ months. \textcolor{blue}{\textit{Proxy-only}} (blue, dashed) inverts the AR statistic using only the five proxy-orthogonality conditions that identify the target shock; \textcolor{red}{\textit{HGMM}} (red, dashed) additionally imposes the co-skewness and (only) symmetric co-kurtosis-aggregate conditions of the target shock ($11$ conditions in total). The solid black line is the HGMM point estimate. Critical values from a residual-based moving-block bootstrap ($9999$ replications, block length $\ell = 36$). Responses are normalized to a $10\%$ increase in the real oil price on impact. The $x$-axis is the horizon in months; the $y$-axis is the response in percent. 
    \end{minipage}
\end{figure}

\begin{figure}[H]
    \centering
     \includegraphics[width=0.98\textwidth]{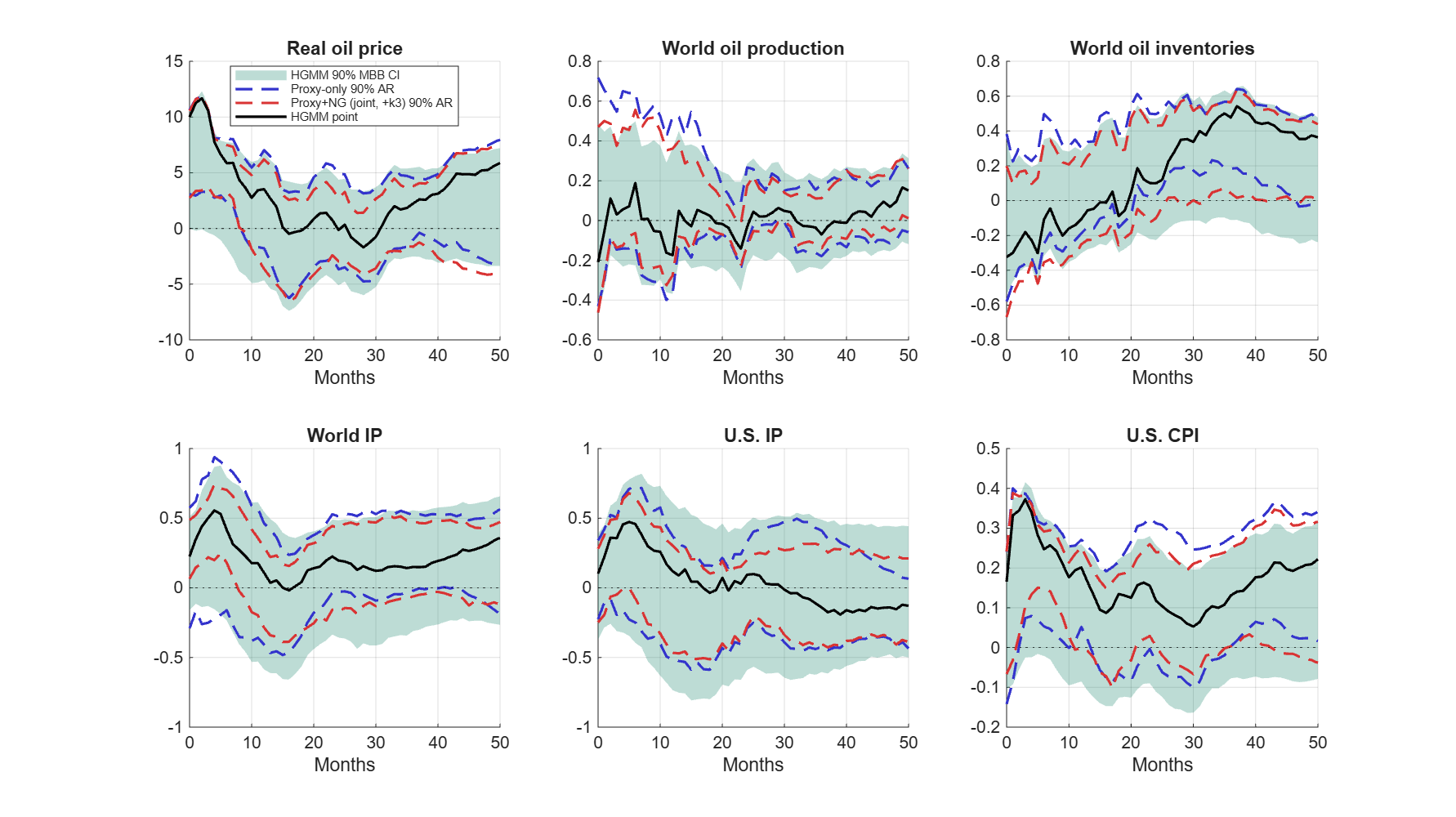}
    \caption{90\% identification-robust confidence sets with \textit{Proxy-only} and \textit{HGMM} criterions, with standard MBB percentile bands (block length $\ell = 25$)}
    \label{fig:kl_mbb_vs_ar_newboot}
    \begin{minipage}{0.92\textwidth}
\vspace{2pt}
\scriptsize
\textit{Notes:} Each panel superimposes three $90\%$ interval estimates for the impulse response of one variable to the oil news shock. The shaded band is the residual-based moving-block-bootstrap (MBB) percentile confidence interval for the HGMM {point} estimator. The two dashed lines are the identification-robust AR sets: \textcolor{blue}{\textit{Proxy-only}} (blue, dashed) inverts the AR statistic using only the five proxy orthogonality conditions that identify the target shock; \textcolor{red}{\textit{HGMM}} (red, dashed) additionally imposes the higher-order moment conditions of the target shock ($16$ conditions in total). The solid black line is the HGMM point estimate. Critical values from a residual-based moving-block bootstrap ($9999$ replications, block length $\ell = 25$). Responses are normalized to a $10\%$ increase in the real oil price on impact. The $x$-axis is the horizon in months; the $y$-axis is the response in percent.
    \end{minipage}
\end{figure}

\subsection{Monetary policy shock (Altavilla (2019) specification)}
\label{appendix:extra_altavilla}
Table \ref{tab:arsign_altavilla} reports the horizons $h\in\{0,\dots,200\}$ (out of $201$) at which the 90\% identification-robust confidence sets excludes zero, so that the sign of the response is determined. Under
\textit{Proxy-only}, the exchange rate response is not signed at any horizon, while under \textit{HGMM} it
is signed over $[103,200]$. The delayed equity appreciation is resolved from
$h=67$ under \textit{HGMM} but only from $h=119$ under \textit{Proxy-only}, a
difference of roughly two months, and the inflation-linked swap is signed over
$[76,126]$ against $[94,121]$. The policy rate is the exception: both criteria
sign it only over the first six weeks, and \textit{Proxy-only} does so at four
more horizons than \textit{HGMM}.

Figure~\ref{fig:al_mbb} reports the HGMM and proxy-SVAR responses with
percentile bands. Consistent with the identification-robust sets in the main
text, the hybrid and proxy-SVAR point estimates are close, with modest sharpening of confidence sets.

Figure \ref{fig:al_mbb_vs_ar_newboot} reports the AR sets with standard MBB CIs, for alternate bootstrap block length of $35$. The shorter block widens both criteria's sets and compresses the gain, from a median reduction of $26.4\%$ to $10.2\%$, but the median reduction stays positive for all four variables.

\begin{table}[htbp]
\centering
\caption{Horizons at which the sign of the response is resolved: \textit{Proxy-only} vs \textit{HGMM} (Altavilla monetary-policy specification)}
\label{tab:arsign_altavilla}
\renewcommand{\arraystretch}{1.15}
\resizebox{0.98\textwidth}{!}{%
\begin{tabular}{lcccc}
\toprule
 & \multicolumn{2}{c}{Horizons with $0\notin$ CS} & \multicolumn{2}{c}{Sign-resolved horizon ranges} \\
\cmidrule(lr){2-3}\cmidrule(lr){4-5}
Variable & Proxy & HGMM & Proxy & HGMM \\
\midrule
2Y OIS & \textbf{33} & 29 & $[0,32]$ ($+$) & $[0,28]$ ($+$) \\
Euro Stoxx 50 & 31 & \textbf{84} & $[119,150]$ ($+$) & $[67,150]$ ($+$) \\
EUR/USD & 0 & \textbf{98} & -- & $[103,200]$ ($+$) \\
2Y infl.-linked swap & 28 & \textbf{88} & $[94,121]$ ($-$) & $[76,126]$ ($-$), $[165,200]$ ($+$) \\
\bottomrule
\end{tabular}
}
\begin{minipage}{0.96\textwidth}
\vspace{4pt}
\footnotesize
\textit{Notes:} Columns 2--3 count the horizons $h\in\{0,\dots,200\}$ (out of $201$) at which the 90\% identification-robust confidence set excludes zero, so that the sign of the response is determined. Columns 4--5 report the corresponding horizon ranges, with the sign of the resolved response in parentheses; isolated horizons within a reported block are suppressed. Out of the $804$ (variable, horizon) pairs, \textit{HGMM} resolves a sign that \textit{Proxy-only} leaves open at $211$ pairs, while the reverse occurs at $4$ pairs, all for the two-year OIS rate. Construction as in Table~\ref{tab:arlen_altavilla}.
\end{minipage}
\end{table}

\begin{figure}[H]
    \centering
    \includegraphics[width=0.75\textwidth]{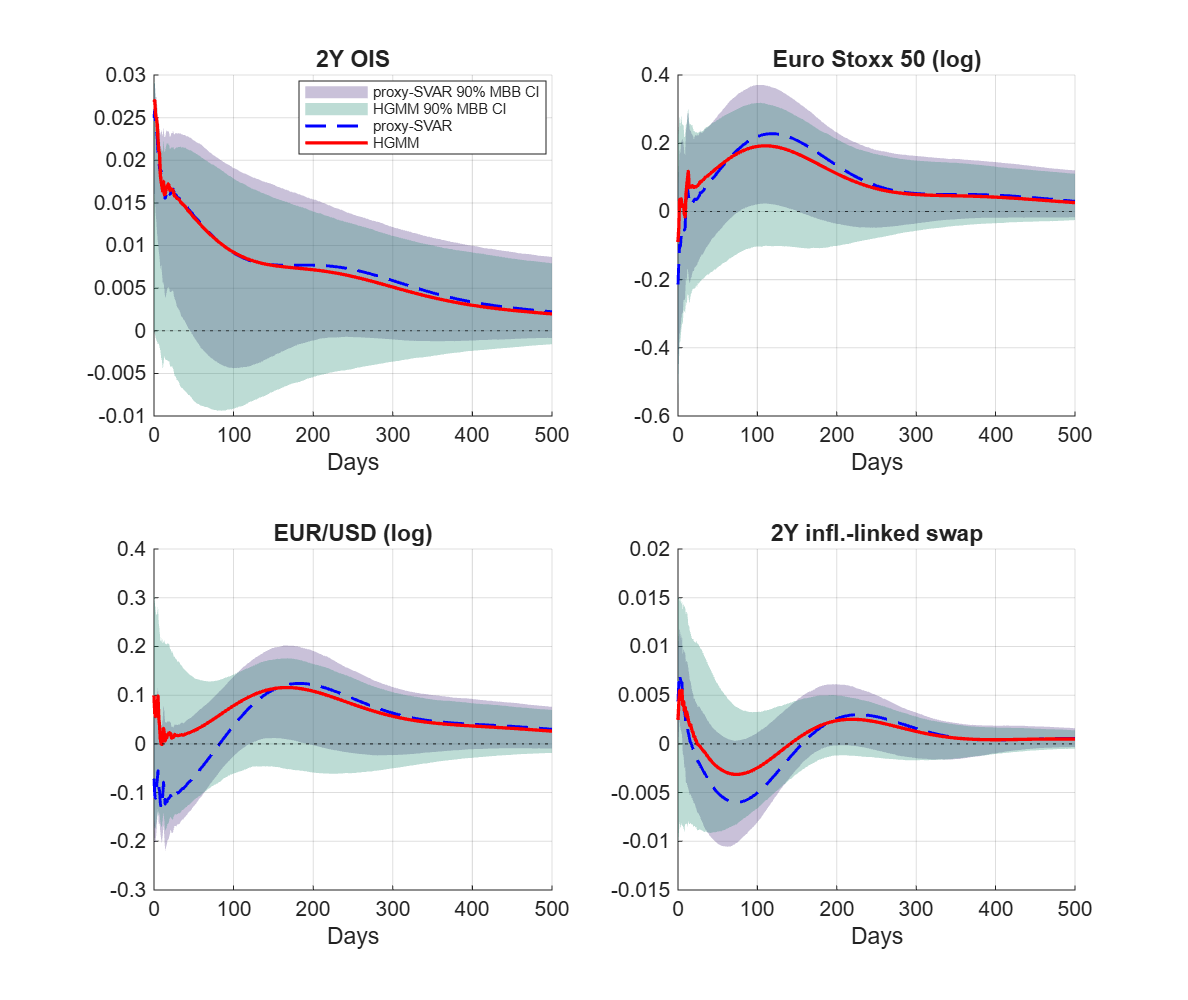}
    \caption{Impulse responses to the euro-area monetary policy shock: HGMM vs proxy-SVAR, with moving-block-bootstrap percentile confidence intervals.}
    \label{fig:al_mbb}
    \begin{minipage}{0.92\textwidth}
\vspace{2pt}
\scriptsize
\textit{Notes:} Point impulse responses of the monetary policy shock estimated by the hybrid GMM criterion (\textcolor{red}{HGMM}, solid red) and by the single-proxy SVAR (\textcolor{blue}{proxy-SVAR}, dashed blue), each with $90\%$ residual-based moving-block-bootstrap percentile confidence intervals (shaded; HGMM in green, proxy-SVAR in purple). Both estimators are computed on every bootstrap draw from the {same} resampled residuals, so the two comparisons differ only in the identification step. Bootstrap: $9999$ replications, block length $\ell=50$, VAR re-estimated within each draw. Responses are to a one-standard-deviation monetary policy shock; the $x$-axis is the horizon in business days.
    \end{minipage}
\end{figure}

\begin{figure}[H]
    \centering
    \includegraphics[width=0.75\textwidth]{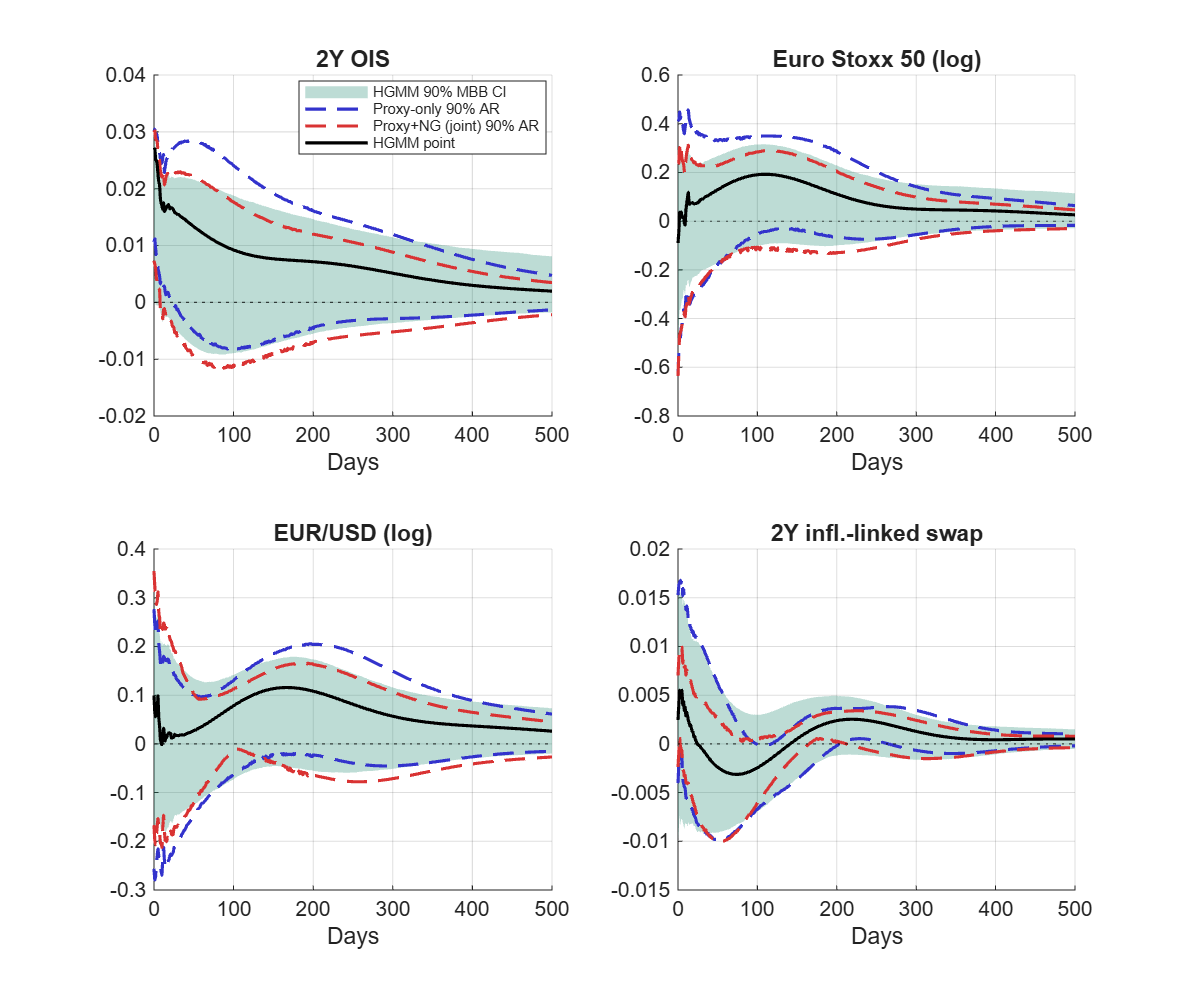}
    \caption{90\% identification-robust confidence sets with \textit{Proxy-only} and \textit{HGMM} criteria, with standard MBB percentile bands (block length $\ell = 35$).}
    \label{fig:al_mbb_vs_ar_newboot}
    \begin{minipage}{0.92\textwidth}
\vspace{2pt}
\scriptsize
\textit{Notes:} Each panel superimposes three $90\%$ interval estimates for the impulse response of one variable to the monetary policy shock. The shaded band is the residual-based moving-block-bootstrap (MBB) percentile confidence interval for the HGMM {point} estimator. The two dashed lines are the identification-robust AR sets: \textcolor{blue}{\textit{Proxy-only}} (blue, dashed) inverts the AR statistic using only the three proxy-exclusion conditions that identify the target shock; \textcolor{red}{\textit{Proxy+NG}} (red, dashed) is the HGMM criterion which additionally imposes the three co-skewness conditions of the target shock and the fourth-order aggregate, for $7$ conditions in total. Critical values are from a residual-based moving-block bootstrap ($9999$ bootstrap replications, block length $\ell = 35$), computed per candidate rotation. The solid black line is the HGMM point estimate. Responses are to a one standard deviation contractionary monetary policy shock; the $x$-axis is the horizon in business days.
    \end{minipage}
\end{figure}

\section{Robustness: the K\"anzig baseline (oil news shock)}
\label{appendix:kanzig_rob}

As a robustness check on the oil illustration, we re-run the analysis on the
\citet{kanzig_macroeconomic_2021} baseline: the full 1974:M1--2017:M12 sample,
$p=12$ lags, and the original OPEC-window instrument (without the Kilian
corrections). Figure~\ref{fig:kz_arsets} and Table~\ref{tab:arlen_kanzig} show
the same direction of effect, more modestly: across the $306$ (variable, horizon) pairs the \textit{HGMM} set is strictly shorter at $246$, of equal length at $60$, and longer at none; the median reduction in length is $2.3\%$ and the largest is $27.0\%$. The sharpening is concentrated at short horizons for world oil production, and at long horizons for U.S. CPI, where the median reduction across horizons is $15.6\%$. Neither criterion determines the sign of any response at any horizon in this
specification, against $212$ sign-resolved horizons under the Kilian-corrected
instrument.
 This is consistent with the theory that the payoff of sharper inference
depends on the marginal identifying content of those conditions offsetting the
cost of the additional degrees of freedom. The proxy exclusion restriction is not rejected
($\hat J_{\text{prx}}=1.90$, with asymptotic and bootstrap $p$-values of $0.86$
and $0.99$).

\begin{figure}[H]
    \centering
    \includegraphics[width=0.95\textwidth]{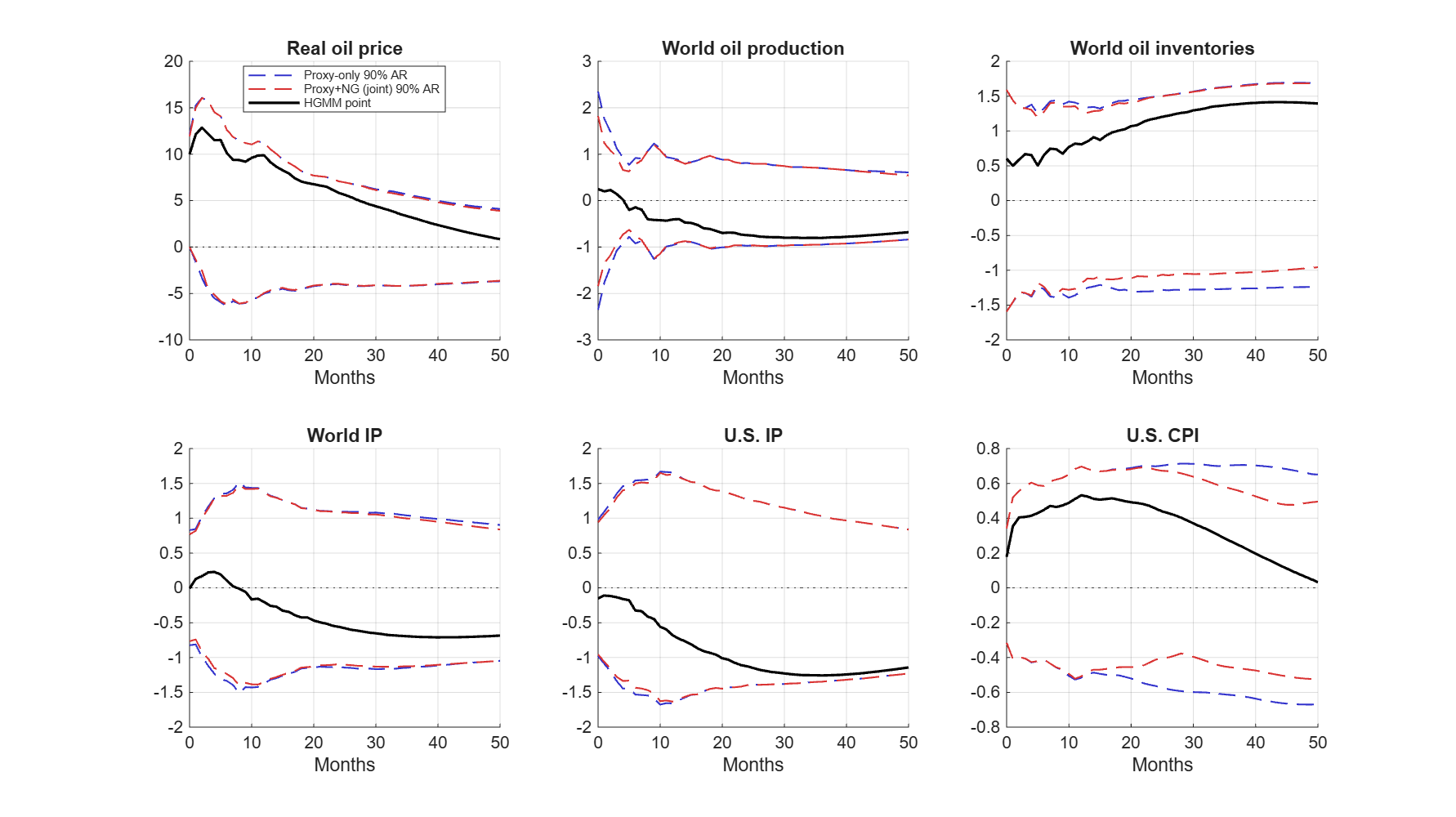}
    \caption{90\% Anderson--Rubin confidence sets for the oil news shock: \textit{Proxy-only} vs \textit{HGMM} (K\"anzig baseline specification).}
    \label{fig:kz_arsets}
    \begin{minipage}{0.92\textwidth}
\vspace{2pt}
\scriptsize
\textit{Notes:} Each panel plots the 90\% identification-robust Anderson--Rubin (AR) confidence set for the impulse response of one variable to the oil news shock, at horizons $0$--$50$ months for the K\"anzig (2021) baseline specification (VAR sample 1974:M1--2017:M12, $p=12$ lags, monthly OPEC-window surprise summed within the month). \textcolor{blue}{\textit{Proxy-only}} (blue, dashed) uses the five proxy-exogeneity conditions; \textcolor{red}{\textit{HGMM}} (red, dashed) additionally imposes the target shock's co-skewness conditions and the co-kurtosis aggregate ($11$ conditions in total). The solid black line is the HGMM point estimate. Critical values are from a residual-based moving-block bootstrap ($9999$ replications, block length $\ell = 36$); responses are normalized to a $10\%$ increase in the real oil price on impact.
    \end{minipage}
\end{figure}

\begin{figure}[H]
    \centering
    \includegraphics[width=0.95\textwidth]{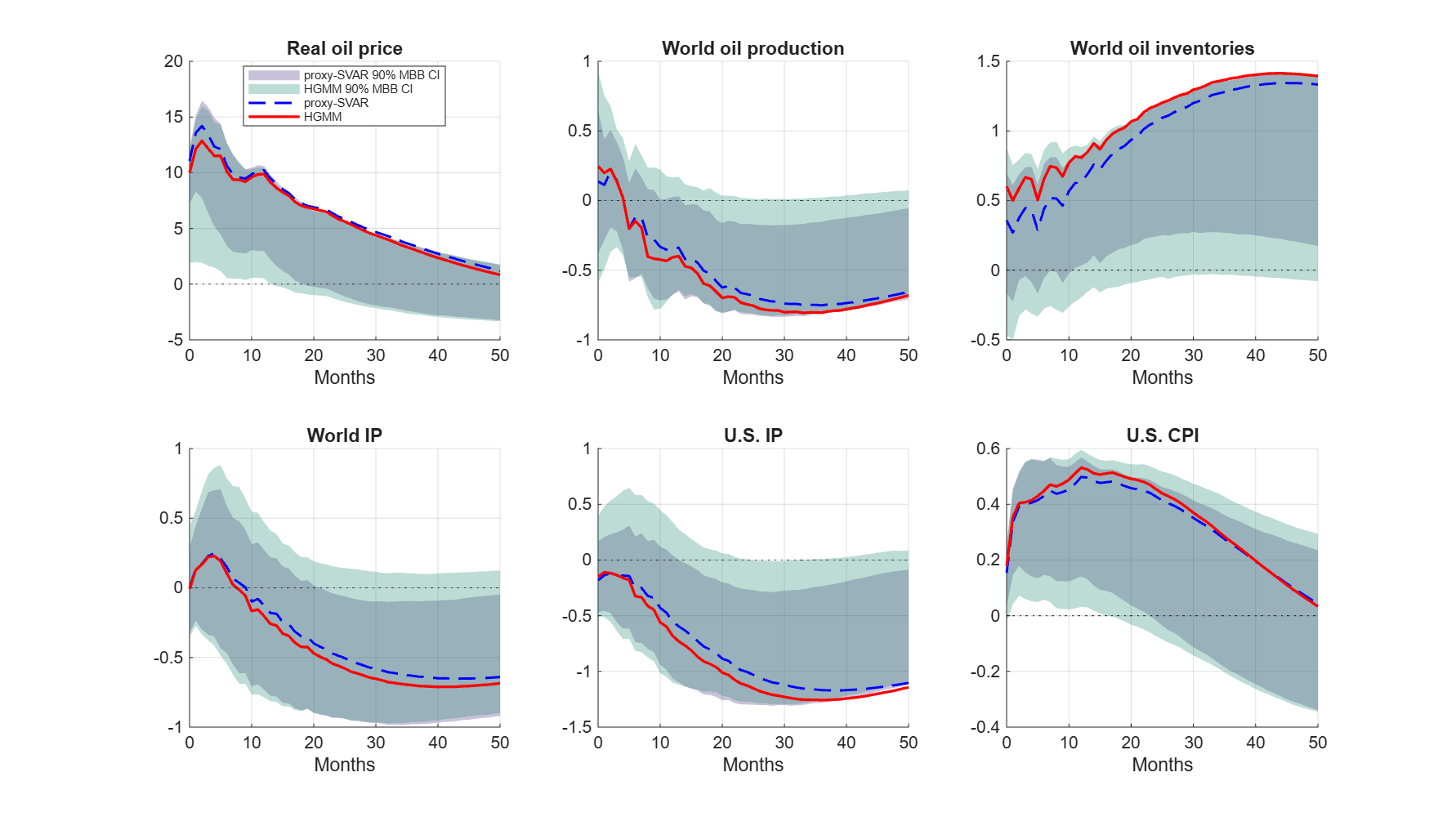}
    \caption{Impulse responses to the oil news shock: HGMM vs proxy-SVAR, with moving-block-bootstrap percentile confidence intervals (K\"anzig baseline specification).}
    \label{fig:kz_mbb}
    \begin{minipage}{0.92\textwidth}
\vspace{2pt}
\scriptsize
\textit{Notes:} As in Figure~\ref{fig:kl_mbb}, for the K\"anzig (2021) baseline specification. \textcolor{red}{HGMM} (solid red) and \textcolor{blue}{proxy-SVAR} (dashed blue) with $90\%$ moving-block-bootstrap percentile confidence intervals (shaded), computed from the same resampled residuals ($9999$ replications, block length $\ell=36$, VAR re-estimated per draw). Responses are normalized to a $10\%$ increase in the real oil price on impact.
    \end{minipage}
\end{figure}

\begin{table}[htbp]
\centering
\caption{Length comparison of 90\% Anderson--Rubin confidence sets: \textit{Proxy-only} vs \textit{HGMM} (K\"anzig baseline specification)}
\label{tab:arlen_kanzig}
\renewcommand{\arraystretch}{1.15}
\resizebox{\textwidth}{!}{%
\begin{tabular}{lcccccccccccc}
\toprule
 & \multicolumn{2}{c}{$h=0$} & \multicolumn{2}{c}{$h=1$} & \multicolumn{2}{c}{$h=2$} & \multicolumn{2}{c}{$h=3$} & \multicolumn{2}{c}{$h=4$} & \multicolumn{2}{c}{$h=5$} \\
\cmidrule(lr){2-3}\cmidrule(lr){4-5}\cmidrule(lr){6-7}\cmidrule(lr){8-9}\cmidrule(lr){10-11}\cmidrule(lr){12-13}
Variable & Proxy & HGMM & Proxy & HGMM & Proxy & HGMM & Proxy & HGMM & Proxy & HGMM & Proxy & HGMM \\
\midrule
Real oil price & 12.11 & \textbf{11.92} & 16.89 & \textbf{16.39} & 19.39 & \textbf{18.57} & 20.40 & \textbf{20.27} & 20.04 & \textbf{19.65} & 20.02 & \textbf{19.91} \\
World oil production & 4.70 & \textbf{3.66} & 3.53 & \textbf{2.59} & 2.90 & \textbf{2.25} & 2.20 & \textbf{1.86} & 1.85 & \textbf{1.39} & 1.55 & \textbf{1.26} \\
World oil inventories & 3.18 & 3.18 & 2.90 & 2.90 & 2.66 & \textbf{2.64} & 2.65 & 2.65 & 2.76 & \textbf{2.67} & 2.47 & \textbf{2.37} \\
World IP & 1.65 & \textbf{1.53} & 1.66 & \textbf{1.56} & 2.02 & \textbf{1.92} & 2.28 & \textbf{2.15} & 2.52 & \textbf{2.44} & 2.66 & \textbf{2.50} \\
U.S. IP & 1.96 & \textbf{1.89} & 2.19 & \textbf{2.11} & 2.42 & \textbf{2.30} & 2.69 & \textbf{2.57} & 2.91 & \textbf{2.74} & 2.91 & \textbf{2.75} \\
U.S. CPI & 0.66 & 0.66 & 0.93 & 0.93 & 0.95 & 0.95 & 0.99 & 0.99 & 1.03 & 1.03 & 1.01 & 1.01 \\
\bottomrule
\end{tabular}}
\begin{minipage}{0.96\textwidth}
\vspace{4pt}
\footnotesize
\textit{Notes:} Each entry is the length (upper minus lower bound) of the 90\% identification-robust Anderson--Rubin confidence set for the impulse response of the row variable to the target shock at horizon $h$ (months), on the \citet{kanzig_macroeconomic_2021} baseline: the full 1974:M1--2017:M12 sample, $p=12$ lags, and the original OPEC-window instrument. \textit{Proxy} inverts the AR statistic using only the five proxy exclusion conditions that identify the target shock; \textit{HGMM} (Proxy+NG) additionally imposes the target shock's five co-skewness conditions and the co-kurtosis aggregate, for $11$ conditions in total. The smaller length in each (variable, horizon) pair is in \textbf{bold}. Responses are normalized to a $10\%$ increase in the real oil price on impact. Sets are obtained by inverting the AR statistic over the target-shock rotation subspace with per-candidate moving-block-bootstrap critical values; see the construction in Appendix~\ref{sec:emp_arsets_cons}.
\end{minipage}
\end{table}

\begin{table}[htbp]
\centering
\caption{Ratio of 90\% Anderson--Rubin confidence-set lengths: \textit{Proxy-only} / \textit{HGMM} (K\"anzig baseline specification)}
\label{tab:arlen_kanzig_ratio}
\renewcommand{\arraystretch}{1.0}
\resizebox{0.80\textwidth}{!}{%
\begin{tabular}{lcccccc|cc}
\toprule
 & \multicolumn{6}{c|}{Ratio at horizon $h$} & \multicolumn{2}{c}{Over $h=0,\dots,50$} \\
\cmidrule(lr){2-7}\cmidrule(lr){8-9}
 Variable & $h=0$ & $h=1$ & $h=2$ & $h=3$ & $h=4$ & $h=5$ & Median & Max \\
\midrule
Real oil price & 1.02 & 1.03 & 1.04 & 1.01 & 1.02 & 1.01 & 1.01 & 1.04 \\
World oil production & 1.28 & 1.36 & 1.29 & 1.18 & 1.33 & 1.23 & 1.00 & 1.36 \\
World oil inventories & 1.00 & 1.00 & 1.01 & 1.00 & 1.03 & 1.04 & 1.09 & 1.11 \\
World IP & 1.08 & 1.06 & 1.05 & 1.06 & 1.03 & 1.06 & 1.03 & 1.08 \\
U.S. IP & 1.04 & 1.04 & 1.05 & 1.05 & 1.06 & 1.06 & 1.00 & 1.06 \\
U.S. CPI & 1.00 & 1.00 & 1.00 & 1.00 & 1.00 & 1.00 & 1.18 & 1.37 \\
\bottomrule
\end{tabular}}
\begin{minipage}{0.96\textwidth}
\vspace{4pt}
\footnotesize
\textit{Notes:} Each entry is the ratio $\text{len(Proxy)}/\text{len(HGMM)}$ for the impulse response of the row variable to the target shock. Values above one indicate that the \textit{Proxy-only} set is longer. The last two columns summarize the ratio over all $51$ horizons $h=0,\dots,50$. Construction as in Table~\ref{tab:arlen_kanzig}.
\end{minipage}
\end{table}


\section{Robustness: identification of a monetary policy shock, Gertler--Karadi (2015)}
\label{appendix:gk}

As a robustness check on the monetary policy illustration, we apply the hybrid GMM framework to the \citet{gertler_monetary_2015} system. This is a case in which the higher-order moment conditions carry too little signal to contribute for the degrees of freedom they add, and we report it because it shows the criterion does not invent identifying content in such cases.

\subsection{Specification}
\label{appendix:gk_spec}
The specification contains six variables ordered with the policy indicator
first: the two-year Treasury rate\footnote{In one of the specifications of the
original paper the authors use the one-year Treasury rate as the policy
indicator; the choice is not consequential for the relative
(non-)informativeness of the higher-order moment conditions.} (the policy
indicator), the consumer price index and industrial production (both in
$100\times\log$ levels), the excess bond premium, and two credit spreads. The
reduced-form VAR is estimated at monthly frequency with $p=12$ lags on
1979:M7--2012:M6. The instrument is the high-frequency fed-funds futures surprise
(FF4), available on 1991:M1--2012:M6; with $n=6$ and $k=1$ this gives
$p_\phi = 15$ rotation angles, $q_P = 5$ proxy conditions and $q_{NG}=75$
higher-order moment conditions, for $q=80$ in total. Inference is at the nominal $90\%$ confidence
level using a residual-based MBB with block length $\ell=15$
and $9999$ bootstrap replications (VAR re-estimated within each draw); the identification-robust
sets invert the AR statistic over the target-shock subspace at $10^5$ candidate
rotations. Impulse responses are reported for a one-standard-deviation monetary
policy shock. 
\subsection{Results}
\label{appendix:gk_results}
Figure~\ref{fig:gk_arsets} and Table~\ref{tab:arlen_gk} report the
identification-robust confidence sets. In contrast to the oil applications, the
\textit{Proxy-only} sets are not improved on, and adding the higher-order moment
conditions does {not} narrow them; across the $294$ (variable, horizon) pairs the \textit{HGMM} set is shorter at $46$, of equal length at $47$, and longer at $201$; the median length change is $+2.2\%$ and the maximum ratio anywhere is $1.04$. Neither criterion determines the sign of any response at any horizon. Again, this is consistent as when the non-Gaussianity of the shocks does not translate into robust sample-wide moment
restrictions, the extra conditions add degrees of freedom to the AR
statistic without a corresponding increase in identifying signal, so the joint set
cannot improve on the proxy-only set. We regard this as reassuring. The hybrid criterion tightens the confidence sets in the oil and euro-area applications, where the higher-order moment conditions are informative. A criterion that narrowed the sets in every application regardless of the data would be the worrying outcome.

The point responses remain those of a contractionary monetary policy shock and
are similar across the hybrid and proxy-SVAR estimators
(Figure~\ref{fig:gk_mbb}): the policy rate rises on impact, industrial production
and the price level decline, and the excess bond premium and credit spreads
widen, consistent with \citet{gertler_monetary_2015}. The specification test does
not reject the proxy exclusion restriction ($\hat J_{\text{prx}} = 3.09$, with
asymptotic and bootstrap $p$-values of $0.68$ and $0.99$).

\begin{figure}[H]
    \centering
    \includegraphics[width=0.95\textwidth]{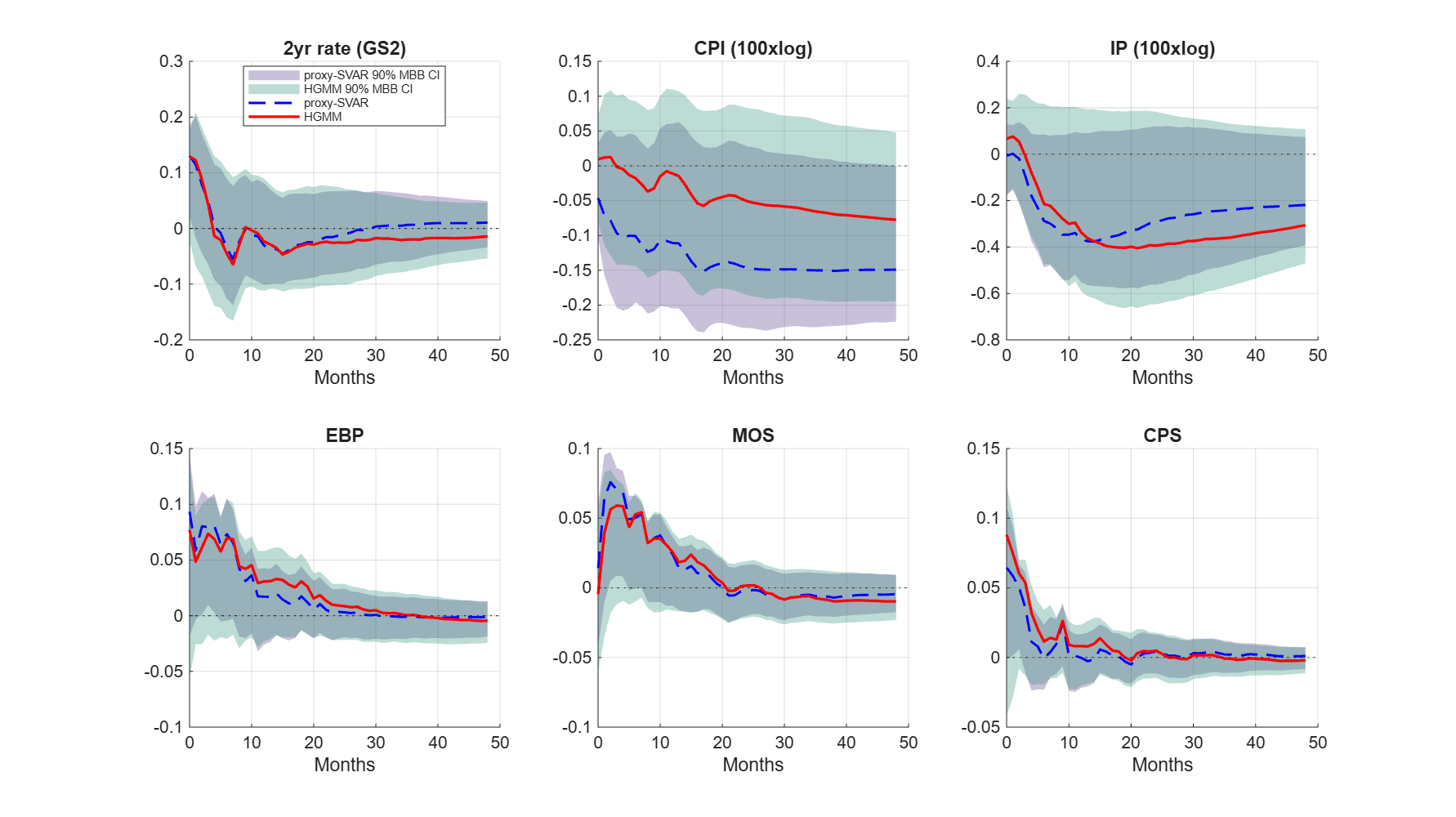}
    \caption{Impulse responses to the monetary policy shock: HGMM vs proxy-SVAR, with moving-block-bootstrap percentile confidence intervals.}
    \label{fig:gk_mbb}
    \begin{minipage}{0.92\textwidth}
\vspace{2pt}
\scriptsize
\textit{Notes:} As in Figure~\ref{fig:kl_mbb}, for the Gertler--Karadi (2015) system. \textcolor{red}{HGMM} (solid red) and \textcolor{blue}{proxy-SVAR} (dashed blue) impulse responses to a one-standard-deviation monetary policy shock, with $90\%$ moving-block-bootstrap percentile confidence intervals (shaded), computed from the same resampled residuals ($9999$ replications, block length $\ell=15$, VAR re-estimated per draw). The $x$-axis is the horizon in months.
    \end{minipage}
\end{figure}

\begin{figure}[H]
    \centering
    \includegraphics[width=0.95\textwidth]{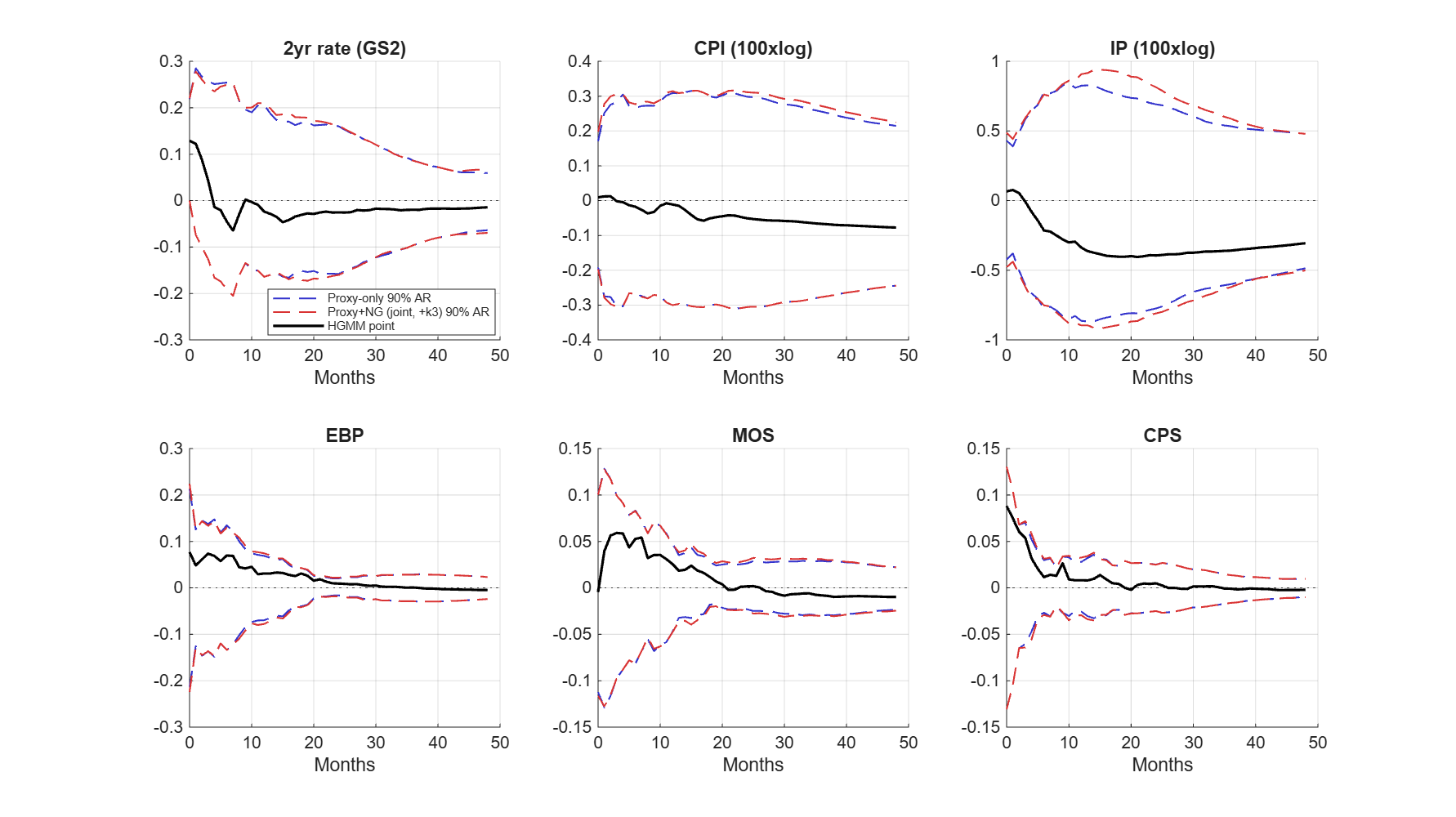}
    \caption{90\% Anderson--Rubin confidence sets for the monetary policy shock: \textit{Proxy-only} vs \textit{HGMM} (Gertler--Karadi application).}
    \label{fig:gk_arsets}
    \begin{minipage}{0.92\textwidth}
\vspace{2pt}
\scriptsize
\textit{Notes:} 90\% identification-robust Anderson--Rubin confidence sets for the impulse responses to the monetary policy shock in the Gertler--Karadi (2015) system (monthly VAR, $p=12$, 1979:M7--2012:M6; FF4 high-frequency surprise, 1991:M1--2012:M6). \textcolor{blue}{\textit{Proxy-only}} (blue, dashed) uses the five proxy orthogonality conditions; \textcolor{red}{\textit{HGMM}} (red, dashed) adds the co-skewness, asymmetric co-kurtosis, and co-kurtosis aggregate conditions of the target shock. The solid black line is the HGMM point estimate. Critical values are from a residual-based moving-block bootstrap ($9999$ replications, block length $\ell = 15$). Responses are to a one-standard-deviation monetary policy shock; the $x$-axis is the horizon in months.
    \end{minipage}
\end{figure}

\begin{table}[htbp]
\centering
\caption{Length comparison of 90\% Anderson--Rubin confidence sets: \textit{Proxy-only} vs \textit{HGMM} (Gertler--Karadi application)}
\label{tab:arlen_gk}
\renewcommand{\arraystretch}{1.15}
\resizebox{\textwidth}{!}{%
\begin{tabular}{lcccccccccccc}
\toprule
 & \multicolumn{2}{c}{$h=0$} & \multicolumn{2}{c}{$h=1$} & \multicolumn{2}{c}{$h=2$} & \multicolumn{2}{c}{$h=3$} & \multicolumn{2}{c}{$h=4$} & \multicolumn{2}{c}{$h=5$} \\
\cmidrule(lr){2-3}\cmidrule(lr){4-5}\cmidrule(lr){6-7}\cmidrule(lr){8-9}\cmidrule(lr){10-11}\cmidrule(lr){12-13}
Variable & Proxy & HGMM & Proxy & HGMM & Proxy & HGMM & Proxy & HGMM & Proxy & HGMM & Proxy & HGMM \\
\midrule
2yr rate (GS2) & 0.22 & 0.22 & 0.36 & \textbf{0.35} & 0.37 & \textbf{0.36} & 0.38 & \textbf{0.37} & 0.42 & \textbf{0.40} & 0.43 & \textbf{0.42} \\
CPI (100xlog) & \textbf{0.36} & 0.40 & \textbf{0.53} & 0.56 & \textbf{0.55} & 0.60 & \textbf{0.59} & 0.61 & 0.61 & 0.61 & \textbf{0.54} & 0.55 \\
IP (100xlog) & \textbf{0.85} & 0.97 & \textbf{0.77} & 0.88 & \textbf{0.99} & 1.04 & \textbf{1.19} & 1.21 & \textbf{1.31} & 1.35 & \textbf{1.39} & 1.40 \\
EBP & \textbf{0.43} & 0.45 & 0.25 & 0.25 & 0.29 & 0.29 & 0.27 & 0.27 & 0.30 & \textbf{0.29} & 0.24 & 0.24 \\
MOS & \textbf{0.21} & 0.22 & 0.26 & 0.26 & 0.23 & 0.23 & 0.20 & 0.20 & 0.18 & 0.18 & 0.16 & 0.16 \\
CPS & 0.26 & 0.26 & 0.21 & 0.21 & 0.13 & 0.13 & \textbf{0.13} & 0.14 & \textbf{0.10} & 0.11 & 0.07 & 0.07 \\
\bottomrule
\end{tabular}}
\begin{minipage}{0.96\textwidth}
\vspace{4pt}
\footnotesize
\textit{Notes:} Each entry is the length (upper minus lower bound) of the 90\% identification-robust Anderson--Rubin confidence set for the impulse response of the row variable to the target shock at horizon $h$ (months). \textit{Proxy} inverts the AR statistic using only the five proxy exclusion conditions that identify the target shock; \textit{HGMM} (Proxy+NG) additionally imposes the target shock's five co-skewness conditions, five asymmetric co-kurtosis conditions, and the co-kurtosis aggregate, for $16$ conditions in total. The smaller length in each (variable, horizon) pair is in \textbf{bold}. Responses are to a one-standard-deviation monetary policy shock. Sets are obtained by inverting the AR statistic over the target-shock rotation subspace with per-candidate moving-block-bootstrap critical values; see the construction in Appendix~\ref{sec:emp_arsets_cons}.
\end{minipage}
\end{table}

\begin{table}[htbp]
\centering
\caption{Ratio of 90\% Anderson--Rubin confidence-set lengths: \textit{Proxy-only} / \textit{HGMM} (Gertler--Karadi application)}
\label{tab:arlen_gk_ratio}
\renewcommand{\arraystretch}{1.0}
\resizebox{0.80\textwidth}{!}{%
\begin{tabular}{lcccccc|cc}
\toprule
 & \multicolumn{6}{c|}{Ratio at horizon $h$} & \multicolumn{2}{c}{Over $h=0,\dots,48$} \\
\cmidrule(lr){2-7}\cmidrule(lr){8-9}
 Variable & $h=0$ & $h=1$ & $h=2$ & $h=3$ & $h=4$ & $h=5$ & Median & Max \\
\midrule
2yr rate (GS2) & 1.00 & 1.02 & 1.02 & 1.03 & 1.04 & 1.02 & 0.99 & 1.04 \\
CPI (100xlog) & 0.92 & 0.95 & 0.93 & 0.96 & 1.00 & 0.98 & 0.98 & 1.00 \\
IP (100xlog) & 0.88 & 0.87 & 0.95 & 0.98 & 0.97 & 1.00 & 0.94 & 1.03 \\
EBP & 0.95 & 0.99 & 1.01 & 1.01 & 1.02 & 1.01 & 0.99 & 1.02 \\
MOS & 0.99 & 1.01 & 1.00 & 1.00 & 1.00 & 1.00 & 0.96 & 1.02 \\
CPS & 1.00 & 1.00 & 1.00 & 0.96 & 0.87 & 0.93 & 1.00 & 1.02 \\
\bottomrule
\end{tabular}}
\begin{minipage}{0.96\textwidth}
\vspace{4pt}
\footnotesize
\textit{Notes:} Each entry is the ratio $\text{len(Proxy)}/\text{len(HGMM)}$ for the impulse response of the row variable to the target shock. Values \emph{below} one indicate that the \textit{HGMM} set is the longer of the two. The last two columns summarize the ratio over all $49$ horizons $h=0,\dots,48$. Construction as in Table~\ref{tab:arlen_gk}.
\end{minipage}
\end{table}

\section{Evidence of weak instruments and non-Gaussianity in the macroeconomic literature}\label{appendix:evidencelit}
\begin{table}[htbp]
\centering
\caption{Instrument relevance in leading proxy-SVARs}
\label{tab:weak_instruments}
\renewcommand{\arraystretch}{1.15}
\resizebox{\textwidth}{!}{%
\begin{tabular}{llccl}
\toprule
Study & Identified shock (instrument) & $F$ & Robust/eff.\ $F$ & Source of statistic \\
\midrule
\citet{gertler_monetary_2015}            & Monetary policy (HF FF4 surprise)        & $21.5$ & $17.5$ & own first stage (one-year rate) \\
\quad\textit{same instrument}, LP-IV, no controls & Monetary policy                  & $1.7$  & ---    & \citet{stock_identification_2018}, Tab.\ 1$^{a}$ \\
\citet{montiel_olea_inference_2021}      & Oil supply (illustration)                & ---    & $9.4$  & robust stat.\ $<10$; $\xi_1=4.4$ \\
\citet{kanzig_macroeconomic_2021}        & Oil supply news (OPEC HF surprise)       & $22.7$ & $10.6$ & own first stage \\
\citet{romer_macroeconomic_2010}                 & Tax (narrative)                          & $1.6$  & ---    & \citet{ramey_chapter_2016}$^{b}$ \\
Government-spending proxies$^{c}$         & Fiscal (narrative / HF)                  & $<10$  & ---    & \citet{ramey_chapter_2016}, p.\ 120 \\
\bottomrule
\end{tabular}}
\begin{minipage}{0.96\textwidth}
\vspace{4pt}
\footnotesize
\textit{Notes:} \enquote{$F$} is the standard (homoskedastic) first-stage $F$; \enquote{Robust/eff.\ $F$} is the
heteroskedasticity-robust or \citet{montiel_olea_inference_2021}-type effective first-stage statistic.
The \citet{stock_testing_2002} $10\%$ critical value is $\approx 10.3$; the common rule of thumb is $F>10$.
Leading instruments are strong in their original specification, but relevance is fragile under robust
statistics or alternative specifications.
$^{a}$ Rises to $F=23.7$ once four lags of $Y_t$ and $Z_t$ are included as controls.
$^{b}$ First stage of tax revenue on the \citet{romer_macroeconomic_2010} narrative shocks.
$^{c}$ \citet{fisher_using_2010}, Ben~Zeev \& Pappa (2017), and \citet{barth_iii_cost_2000}; see
\citet{ramey_chapter_2016}, p.\ 120, Fig.\ 4 Panel~B.
\end{minipage}
\end{table}

\begin{table}[htbp]
\centering
\caption{non-Gaussianity in macroeconomic and financial shocks}
\label{tab:nongaussianity}
\renewcommand{\arraystretch}{1.15}
\resizebox{\textwidth}{!}{%
\begin{tabular}{lll}
\toprule
Study & Series / shocks & Non-Gaussianity measure (estimate) \\
\midrule
\citet{lanne_identification_2017}      & Macro SVAR structural errors        & Gaussianity rejected; ML favors heavy-tailed errors \\
\citet{gourieroux_statistical_2017}     & SVAR shocks (ICA)                   & Components significantly non-Gaussian ($\le 1$ Gaussian) \\
\citet{bekaert_macro_2021}   & Aggregate supply \& demand shocks   & Demand shocks negatively skewed, leptokurtic; higher moments sig. \\
\citet{salgado_skewed_2019}      & Output / employment / sales growth  & Strongly procyclical skewness; left tail widens in recessions \\
\citet{fagiolo_napoletano_roventini_2008} & OECD output growth (US GDP, IP)   & Laplace tails (Subbotin $b\approx1$); excess kurtosis $\approx 3$ \\
\citet{guvenen_etal_2021}               & Labor-income shocks                 & Left-skewed; kurtosis up to $\approx 18$ (vs $3$ Gaussian) \\
\bottomrule
\end{tabular}}
\begin{minipage}{0.96\textwidth}
\vspace{4pt}
\footnotesize
\textit{Notes:}The first three rows concern structural SVAR shocks directly; the remainder
document non-Gaussianity at the aggregate-output and labor-income level. A Gaussian benchmark has
skewness $0$ and kurtosis $3$.
\end{minipage}
\end{table}

\end{document}